\newtheorem{Proposition}{Proposition}
\newtheorem{Theorem}{Theorem}
\newtheorem{Lemma}{Lemma}
\newtheorem{Assumption}{Assumption}
\DeclareMathOperator{\re}{\mathbb{R}}
\DeclareMathOperator{\na}{\mathbb{N}}
\newcommand{\E}{\mathbb{E}}
\newcommand{\ind}{\mathds{1}}
\newcommand{\Prob}{\mathbb{P}}
\newcommand{\F}{\mathcal{F}}
\renewcommand{\qedsymbol}{$\blacksquare$}
\newcommand{\neigh}{\mathrm{N}}
\newcommand{\Xset}{\mathcal{X}}
\newcommand{\Xalg}{\mathsf{X}}
\def\lifted{\mathrm{Lifted}}
\newcommand{\R}{\mathcal{R}}
\def\cov{\mathbb{C}\mathrm{ov}}
\def\rev{\mathrm{rev.}}
\def\MH{\mathrm{MH}}
\def\vara{\mathrm{var}}
\def\var{\mathbb{V}\mathrm{ar}}
\def\d{\mathrm{d}}
\def\esp{\mathbb{E}}
\newcommand{\pscal}[2]{\langle\,#1,#2\,\rangle}
\def\eps{\epsilon}
\def\bnd{\partial{\Xset}}
\def\interior{\mathring{\Xset}}
\def\D{\mathcal{E}}
\def\norm{\mathcal{N}}
\begin{document}

\def\figureautorefname{Figure}
\def\algorithmautorefname{Algorithm}
\def\sectionautorefname{Section}
\def\subsectionautorefname{Section}
\def\subsubsectionautorefname{Section}
\def\Propositionautorefname{Proposition}
\def\Theoremautorefname{Theorem}
\def\Lemmaautorefname{Lemma}
\def\Corollaryautorefname{Corollary}
\def\Exampleautorefname{Example}
\def\Remarkautorefname{Remark}
\def\Assumptionautorefname{Assumption}
\renewcommand*\footnoterule{}

\title{Theoretical guarantees for lifted samplers}

\author{Philippe Gagnon$^{1}$, Florian Maire$^{1}$}

\maketitle

\thispagestyle{empty}

\noindent $^{1}$Department of Mathematics and Statistics, Universit\'{e} de Montr\'{e}al.

\begin{abstract}
 Lifted samplers form a class of Markov chain Monte Carlo methods which has drawn a lot attention in recent years due to superior performance in challenging Bayesian applications. A canonical example of lifted samplers is the one that is derived from a random walk Metropolis algorithm for a totally-ordered state space such as the integers or the real numbers. The lifted sampler is derived by splitting into two the proposal distribution: one part in the increasing direction, and the other part in the decreasing direction. It keeps following a direction, until a rejection occurs, upon which it flips the direction. In terms of asymptotic variances, it outperforms the random walk Metropolis algorithm, regardless of the target distribution, at no additional computational cost. Other studies show, however, that beyond this simple case, lifted samplers do not always outperform their Metropolis counterparts. In this paper, we leverage the celebrated work of \cite{tierney1998note} to provide an analysis in a general framework encompassing a broad class of lifted samplers. Our finding is that, essentially, the asymptotic variances cannot increase by a factor of more than 2, regardless of the target distribution, the way the directions are induced, and the type of algorithm from which the lifted sampler is derived (be it a Metropolis--Hastings algorithm, a reversible jump algorithm, etc.). This result indicates that, while there is potentially a lot to gain from lifting a sampler, there is not much to lose.
\end{abstract}

\noindent Keywords: asymptotic variances; Bayesian statistics; Markov chain Monte Carlo; Metropolis--Hastings; non-reversible Markov chains.

\section{Introduction}\label{sec:intro}

In the context of Markov chain Monte Carlo (MCMC), a technique known as \textit{lifting} is applied to induce directions to follow by a Markov chain. It dates back at least to \cite{horowitz1991generalized}. In the 90's, it also appeared in \cite{gustafson1998guided} (yielding what was referred to as the \textit{guided walk}) and \cite{chen1999lifting}. It consists of a (simple) modification of a MCMC algorithm and an extension of a state-space $\Xset$, where each state $x \in \Xset$ is duplicated and combined with a \textit{direction} $\nu \in \{-1, +1\}$. The \textit{lifted} state-space is thus $\Xset \times \{-1, +1\}$ and the resulting class of sampling algorithms is referred to as \textit{lifted samplers}. A canonical example of a lifted sampler arises in the situation where the technique is applied to a random walk Metropolis--Hastings (MH, \cite{metropolis1953equation} and \cite{hastings1970monte}) algorithm and $\Xset = \{1, \ldots, n\}$, with $n$ being a positive integer. Let us say that the random walk MH algorithm is used to sample from $\pi$, referred to as the \textit{target distribution} in a sampling context, which is a probability distribution defined on a measurable space $(\Xset, \Xalg)$, where in this case $\Xset = \{1, \ldots, n\}$ corresponds to the support of $\pi$, $\Xalg$ being a sigma-algebra on $\Xset$. In this situation, the random walk MH algorithm first proceeds with a proposal $y = x + \nu$ with $\nu \sim \mathcal{U}\{-1, +1\}$ (except at the boundary, in which case a proposal outside of $\Xset$ is replaced by $x$), where $\mathcal{U}\{-1, +1\}$ denotes the uniform distribution over the set $\{-1, +1\}$. This is followed by an accept/reject step in which the proposal is accepted with probability
\[
 1 \wedge \frac{\pi(y)}{\pi(x)},
\]
where $a \wedge b = \min\{a, b\}$. If the proposal is accepted, the next state of the Markov chain is $y$. In the case of a rejection, the Markov chain remains at $x$. The resulting Markov chain is \emph{reversible} and exhibits a random walk behaviour. Such a behaviour may lead to significant auto-correlation in the Markov chain and an inefficiency of the MCMC estimators, often measured by the \textit{asymptotic variances}.

By inducing directions to follow, the goal of the lifting technique is to generate persistent movement and thus to lose this random walk behaviour, thereby decreasing the auto-correlation and gaining efficiency. In the canonical situation described above, applying the lifting technique to the random walk MH algorithm produces a lifted sampler which proceeds, when at a current state $(x, \nu) \in \Xset \times \{-1, +1\}$, with a \emph{deterministic} proposal $y = x + \nu$ (except at the boundary) which is accepted with the same probability as above, that is $1 \wedge \pi(y) / \pi(x)$. If the proposal is accepted, the next state of the Markov chain is $(y, \nu)$. In the case of a rejection (or when at the boundary), the next state of the Markov chain is $(x, -\nu)$. The Markov chain thus keeps following a direction, until a rejection occurs, upon which the direction is flipped. The lifting technique can be seen as a way to equip the resulting stochastic process with some memory of its past (the direction it comes from), while retaining the Markov property (on the extended state-space). The resulting Markov chain is \emph{non-reversible} and satisfies a \textit{skew detailed balance} (a formal definition is presented in the following), crucially ensuring that the product measure $\pi \otimes \mathcal{U}\{-1, +1\}$ is an invariant distribution. The lifted sampler is as easy to implement on a computer as its MH counterpart and it has the same computational complexity, as is often the case with lifted samplers. In a situation which is essentially the same as the canonical one described above, \cite{diaconis2000analysis} proved that lifting the random walk MH algorithm significantly decreases the mixing time, from quadratic to linear, when $\pi = \mathcal{U}\{1, \ldots, n\}$.

The lifting technique has drawn a lot attention in recent years due to success stories in challenging sampling problems (see, e.g., \cite{okabe2001replica}, \cite{sakai2016eigenvalue}, \cite{sakai2016irreversible}, \cite{vucelja2016lifting}, \cite{power2019accelerated}, \cite{herschlag2020non}, \cite{faizi2020efficient}, \cite{neal2020non}, \cite{gagnon2019NRJ}, \cite{andrieu2021peskun}, \cite{syed2019non}, \cite{kamatani2023non}, \cite{biron2024automatic} and \cite{ascolani2025fast}). For instance, in the context of Bayesian nested model selection, \cite{gagnon2019NRJ} proved that the lifted version of the reversible jump algorithm \citep{green1995reversible, green2003trans} is better. The result holds for any target distribution $\pi$, provided that the reversible jump algorithm with which the comparison is made explores the model space via a random walk. %In \cite{gagnon2019NRJ}, the reversible and non-reversible Markov chains described in the canonical situation above appear in similar forms, but for a model indicator in the reversible jump algorithm and the lifted version, respectively. The reversible and non-reversible Markov chains described in the canonical situation appear also in similar forms in other success stories mentioned above.
In \cite{andrieu2021peskun}, a general result is proved. It states that any lifted sampler is better than a specific reversible counterpart, the latter corresponding to the version of the lifted sampler in which the direction variable $\nu  \sim \mathcal{U}\{-1, +1\}$ is resampled at the beginning of each iteration. This specific reversible counterpart corresponds to the algorithm with which we wish to compare in the canonical situation described above and in \cite{gagnon2019NRJ}, but it is often not the case. A motivation of our work is to go beyond this specific situation and to provide a theoretical result when the algorithm with which we wish to compare is \emph{not} this specific reversible counterpart.

When going beyond this specific situation, one may encounter instances where lifted samplers are not better; see, e.g., Appendix A in \cite{gagnon2023} where the simulation of an Ising model is used to evaluate the performance of a MH algorithm and its lifted version. A partial order between the system states is exploited to induce directions to follow by the lifted sampler. It is shown that the lifted sampler outperforms its MH counterpart when the target distribution is not too concentrated and thus allows for persistent movement, whereas it is the opposite when the target distribution is too concentrated. The lifted sampler in \cite{gagnon2023} can be used to simulate any system formed from binary variables and thus finds application in Bayesian variable selection. % It corresponds to the discrete-time version of a specific sampler in \cite{power2019accelerated} and it is similar to samplers in  \cite{sakai2016eigenvalue} and \cite{faizi2020efficient}, all developed from another perspective than \cite{gagnon2023} (the notion of partial ordering is not identified nor exploited).
This work focuses on comparing MCMC algorithms for finite state-spaces through a weaker version of the Peskun ordering \citep{peskun1973optimum}.

With a counter-example like that in \cite{gagnon2023}, it becomes important to evaluate what can be the degree of inefficiency of lifted samplers relatively to the go-to reversible MCMC methods, given the versatility of the lifting technique and the appealing features of the resulting samplers. For instance, in the simulation of Ising models and in Bayesian variable selection, what can be the degree of inefficiency? And what about in Bayesian nested model selection when the point of comparison is the reversible jump algorithm which explores the model space via an \textit{asymmetric} random walk? An asymmetric random walk proposes to subtract or add 1, in this case to the model indicator variable, but with different probabilities. It appears frequently in the reversible jump algorithm (see, e.g., the multiple change-point example in \cite{green1995reversible}).

The main contribution of our work is to provide an answer to the aforementioned questions under arguably the most general framework. We proceed by leveraging the seminal work of \cite{tierney1998note} to define a lifted version of a generalized MH algorithm in the context of general state-spaces. Virtually any (reversible) MCMC method can be seen as a special case of this generalized MH algorithm (possibly up to minor adjustments), ranging from the traditional MH algorithm to the reversible jump algorithm. Other popular special cases include multiple-try Metropolis \citep{Liu2000}, pseudo-marginal \citep{beaumont2003estimation} and simulated tempering \citep{geyer1995annealing}. Our main theoretical result allows for a comparison between the generalized MH algorithm and its lifted version in terms of asymptotic variances. It essentially guarantees that the asymptotic variance of MCMC estimators produced by the lifted version cannot be more than twice that of estimators produced by the generalized MH algorithm. This result indicates that, \emph{while there is potentially a lot to gain from applying the lifting technique, there is not much to lose}. We also show that our result is optimal. The definition of the lifted version of the generalized MH algorithm allows to understand how a lifted sampler can be constructed under such a general framework, which adds a methodological contribution to our theoretical contribution.

The efficiency of MCMC algorithms is traditionally assessed by studying the characteristics of their Markov transition operators. To establish our main theoretical result, we need to connect the efficiency of two significantly different operators (those of the generalized MH and lifted algorithms): in addition to not be defined on the same domain, one is self-adjoint while the other is not, which further complicates the analysis. Another contribution of this work is to identify a specific auxiliary operator which acts a bridge and allows to connect the efficiency of the two aforementioned operators. This auxiliary operator is compared to the MH one through a Peskun-type ordering \citep{peskun1973optimum, tierney1998note} established via a careful analysis of the Markov kernels, yielding sharp bounds. A key ingredient of this analysis is \autoref{lemma:varphi} which is of independent interest as it states important properties of the class of functions $\varphi:(0, \infty) \rightarrow [0, 1]$ verifying $\varphi(r) = r \varphi(1 / r)$ for all $r \in (0, \infty)$ that appeared in several sampling contexts, for instance as efficient \textit{balancing functions} in \cite{zanella2019informed}. The connection between the generalized MH and lifted algorithms is completed by comparing the auxiliary operator with the lifted one using the result in \cite{andrieu2021peskun}.

We finish this section with a description of how the rest of the article is organized. In \autoref{sec:general}, we present the generalized MH algorithm, and define the generalized auxiliary and lifted samplers, each in a different sub-section, and our main theoretical result follows in the final sub-section. In \autoref{sec:examples}, we make the framework of \autoref{sec:general} more concrete by presenting the special case corresponding to the traditional MH framework. The traditional MH algorithm is explicit and simple, and the same is true for the auxiliary and lifted samplers. In \autoref{sec:examples}, within the concrete framework of that section, empirical results are also presented for an illustration of our main theoretical result. In particular, we revisit the Ising model example of \cite{gagnon2023}. As mentioned, in that paper, a methodology is introduced for applying the lifting technique in the context of partially-ordered finite state-spaces. Another contribution of this work is to highlight that this methodology is applicable in the general framework of \autoref{sec:general} whenever a partial order exists. The article finishes with a conclusion in \autoref{sec:conclusion}.

\section{MH and lifted algorithms: A general definition and comparison}\label{sec:general}

In \autoref{sec:general_MH}, we present the generalized MH algorithm, following \cite{tierney1998note}. Essentially any $\pi$-reversible MCMC algorithm which proceeds with a proposal distribution $Q(x, \cdot \,)$ and an accept/reject step can be seen as a special case of this generalized MH algorithm; $\pi$ and $Q(x, \cdot \,)$ thus do not necessarily admit densities with respect to a common dominating measure. From the generalized MH algorithm, we introduce the generalized auxiliary sampler in \autoref{sec:general_auxiliary}, which in turn is used to define the generalized lifted sampler in \autoref{sec:general_lifted}. In \autoref{sec:general_comparison}, we present the Peskun-type ordering, leading to our theoretical guarantees.

\subsection{Generalized MH algorithm}\label{sec:general_MH}

Let $\pi$, the target, be a probability distribution on $(\Xset, \Xalg)$ and $\{Q(x, \cdot\,): x \in \Xset\}$ a collection of (proposal) probability distributions on $(\Xset, \Xalg)$. % The support of $\pi$ and $Q(x, \cdot \,)$ is left unspecified.
Let $\mu$ and $\mu^T$ be two probability measures on the product space $(\Xset^2, \Xalg^{\otimes 2})$ defined as: $$
\mu(A\times B) = \int_A \pi(\d x) \, Q(x, B)\,,\qquad \mu^T(A\times B) = \int_B \pi(\d x) \, Q(x, A).
$$
Also, let $\eta := \mu + \mu^T$. From these definitions, it can be concluded that both $\mu$ and $\mu^T$ are absolutely continuous with respect to $\eta$ and that, by the Radon--Nikodym theorem, there exist two (essentially unique) measurable functions $h: \Xset^2 \to [0, \infty)$ and $h^T: \Xset^2 \to [0, \infty)$ such that
\[
 A \times B \in \Xalg^{\otimes 2}, \quad \mu(A \times B) = \int_{A \times B} h(x, y) \, \eta(\d x, \d y), \quad \mu^T(A \times B) = \int_{A \times B} h^T(x, y) \, \eta(\d x, \d y),
\]
with the usual notation $\d\mu / \d\eta := h$ and $\d\mu^T / \d\eta := h^T$.

The following proposition from \cite{tierney1998note} provides a characterization of the Radon--Nikodym derivative $\d\mu^T / \d\mu$ and an implicit form of it under the general framework of this section. This Radon--Nikodym derivative will then play a crucial role in the acceptance probability of the generalized MH Markov kernel, guaranteeing its $\pi$-reversibility.

\begin{Proposition}[\cite{tierney1998note}]\label{Prop:Tierney}
 \,
 \begin{enumerate}
  \item The function $h^T$ is such that $h^T(x, y) = h(y, x)$ for all $(x, y) \in \Xset^2$.
  \item The set
\[
 R := \{(x, y) \in \Xset^2: h(x, y) > 0 \quad \text{and} \quad h(y, x) > 0\},
\]
is symmetric.
 \item The probability measures $\mu$ and $\mu^T$ are mutually absolutely continuous on $R$ and mutually singular on the complement of $R$, $R^\mathsf{c}$.
 \item Let $r: R \rightarrow (0, \infty)$ be a function that verifies $r(x, y) = (\d\mu^T / \d\mu)(x, y)$ for all $(x, y) \in R$. Then,
   \begin{align}\label{eqn:r}
 r(x, y) = \frac{h(y, x)}{h(x, y)}, \quad \text{for all $(x, y) \in R$},
\end{align}
with $0 < r(x, y) < \infty$ and $r(x, y) = 1 / r(y, x)$ for all $(x, y) \in R$.
\end{enumerate}
\end{Proposition}

Let $P_{\MH}$ be the generalized MH kernel, defined for all $x \in \Xset$ as
\[
P_{\MH}(x, \d y) = Q(x, \d y) \, \alpha(x, y) + \delta_{x}(\d y) \int (1 - \alpha(x, u)) \, Q(x, \d u),
\]
where $\alpha: \Xset^2 \rightarrow [0,1]$ is the acceptance probability.

The following theorem from \cite{tierney1998note} states necessary conditions on the function $\alpha$ for $P_{\MH}$ to be $\pi$-reversible.

\begin{Theorem}[\cite{tierney1998note}]\label{Thm:Tierney}
The Markov kernel $P_{\MH}$ is $\pi$-reversible if and only if the following condition holds $\mu$-almost everywhere:
\begin{equation}\label{eqn:acc_Tierney}
\alpha(x, y) = \left\{
\begin{array}{ll}
r(x, y)  \, \alpha(y, x) & \text{if} \quad (x, y) \in R,  \cr
0 & \text{otherwise}.
\end{array}\right.
\end{equation}
\end{Theorem}

In this paper, we restrict our attention to the case where $\alpha$ is defined as follows:
\begin{equation}\label{eqn:acc_varphi}
\alpha(x, y) =\left\{
\begin{array}{ll}
  \varphi(r(x, y)) & \text{if $(x, y) \in R$},\\
   0 & \text{otherwise,}
\end{array}
\right.
\end{equation}
with $\varphi:(0, \infty) \rightarrow [0, 1]$ verifying $\varphi(r) = r \varphi(1 / r)$ for all $r \in (0, \infty)$. Such a function $\varphi$ allows to verify the condition in \autoref{Thm:Tierney}: for $(x, y) \in R$,
\[
 \alpha(x, y) = \varphi(r(x, y)) = r(x, y) \, \varphi(1 / r(x, y)) = r(x, y) \, \varphi(r(y, x)) = r(x, y) \, \alpha(y, x),
\]
by Result 4 of \autoref{Prop:Tierney}. The term $r(x, y)$ in \eqref{eqn:acc_varphi} can be seen as a \textit{summary statistic}, in the sense that the dependence on $(x, y)$ only appears through $r(x, y)$ in the acceptance probability. Nevertheless, it is sufficient, when combined with a function $\varphi$, given that the condition in \autoref{Thm:Tierney} is satisfied. The class of functions $\varphi:(0, \infty) \rightarrow [0, 1]$ verifying $\varphi(r) = r \varphi(1 / r)$ for all $r \in (0, \infty)$ allows for the usual acceptance probability $\varphi(r) = 1 \wedge r$ and Barker's acceptance probability $\varphi(r) = r / (1+ r)$ \citep{barker1965monte}. From \cite{tierney1998note}, we know that $\varphi(r) = 1 \wedge r$ yields the optimal acceptance probability (in the sense of minimizing the asymptotic variances).

This class of functions has interesting properties, as stated in \autoref{lemma:varphi}, which will be useful to prove our Peskun-type ordering in \autoref{sec:general_comparison}.

\begin{Lemma}
\label{lemma:varphi}
For any function $\varphi : (0,\infty)\to[0,1]$ verifying $\varphi(r) = r \varphi(1 / r)$ for all $r$, we have that:
\begin{enumerate}
  \item $\varphi$ is non-decreasing,
  \item $\varphi$ is continuous on $(0,\infty)$,
  \item $\varphi(r) \downarrow 0$ when $r \downarrow 0$,
  \item $\varphi$ is either a null function, that is $\varphi(r) = 0$ for all $r$, or it is a positive function,
  \item for any $a>0, b>0$, $\varphi(ab)\geq \varphi(a)\varphi(b)$,
  \item for any $a>0$, $b\in(0,1]$, $\varphi(ab)\geq b\varphi(a)$.
\end{enumerate}
\end{Lemma}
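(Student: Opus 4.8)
The plan is to derive everything from the single functional equation $\varphi(r) = r\varphi(1/r)$ together with boundedness $\varphi:(0,\infty)\to[0,1]$, establishing the six properties roughly in the order (4)$\to$(1)$\to$(3)$\to$(2)$\to$(5)$\to$(6), since the later ones lean on the earlier ones. First I would extract the elementary consequences of the functional equation: setting $r=1$ gives $\varphi(1)=\varphi(1)$ (no information), but the equation pairs the value at $r$ with the value at $1/r$, and for $r>1$ it forces $\varphi(r) = r\varphi(1/r) \le r\cdot 1$, while $\varphi(1/r)=\varphi(r)/r$. For property (4), I would argue that if $\varphi(r_0)=0$ for some $r_0$, then $\varphi(1/r_0)=\varphi(r_0)/r_0=0$ too, and then I need to propagate this zero to all of $(0,\infty)$; the cleanest way is to first prove monotonicity (1) and use it, or to observe that a balancing function vanishing somewhere must vanish everywhere — I expect the paper proves (1) first and deduces (4) as a corollary, so I would reorder to (1) before (4).

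For monotonicity (1), the key trick is: for $0<a<b$, write $\varphi(b)$ in terms of $\varphi(a)$ using the relation. Specifically, I would use the identity $\varphi(b) = \varphi\!\left(\tfrac{b}{a}\cdot a\right)$ and try to show $\varphi(b)\ge\varphi(a)$. A slick approach: the function $\psi(r):=\varphi(r)/r$ satisfies $\psi(r) = \varphi(1/r) = $ something, and one can check $\psi$ is non-increasing while $\varphi$ is non-decreasing, these two facts being dual under $r\mapsto 1/r$. Concretely, from $\varphi\le 1$ we get $\varphi(1/r)\le 1$, hence $\varphi(r)\le r$; and from $\varphi\ge 0$ applied at $1/r$ we get $\varphi(r)\ge 0$; combining, for $r\le 1$ we have $\varphi(r)\le r\le 1$ and the bound $\varphi(r)\le\min(1,r)$. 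The monotonicity itself I would prove by a convexity/averaging argument or by directly showing that the difference quotient has a sign: this is the step I expect to be the main obstacle, since the functional equation alone is a fairly weak constraint and one must use the range $[0,1]$ in an essential way. One route: if $\varphi$ were strictly decreasing on some subinterval, pick $a<b$ there with $\varphi(a)>\varphi(b)$; then $\varphi(1/a)=\varphi(a)/a$ and $\varphi(1/b)=\varphi(b)/b$ with $1/b<1/a$, and chaining the relation around should produce a value exceeding $1$ or going negative — I would make this precise by iterating $r\mapsto \varphi(r)$-type bounds.

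Once (1) is in hand: property (4) follows since if $\varphi(r_0)=0$ then by monotonicity $\varphi(r)=0$ for all $r\le r_0$, and then $\varphi(r)=r\varphi(1/r)=0$ for all $r\ge 1/r_0$ as well, and monotonicity fills the gap $(r_0,1/r_0)$, forcing $\varphi\equiv 0$. Property (3), $\varphi(r)\downarrow 0$ as $r\downarrow 0$: from $\varphi(r)\le r$ (established above) this is immediate, and monotonicity guarantees the limit exists. Property (2), continuity: a bounded monotone function on $(0,\infty)$ has at most countably many jump discontinuities; I would rule out jumps using the functional equation — at a jump point $r^*$, the left and right limits $\varphi(r^{*-})<\varphi(r^{*+})$ would, via $\varphi(r)=r\varphi(1/r)$, force a corresponding jump at $1/r^*$ of the opposite monotonic type (since $r\mapsto 1/r$ reverses order), contradicting that $\varphi$ is non-decreasing — so $\varphi$ is continuous everywhere. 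For (5), $\varphi(ab)\ge\varphi(a)\varphi(b)$: this is a super-multiplicativity statement; since $\varphi(a),\varphi(b)\in[0,1]$, WLOG $a\le 1\le b$ or handle the cases $a,b\le 1$ and $a,b\ge 1$ separately, and I expect to reduce it to (6). Finally (6), $\varphi(ab)\ge b\varphi(a)$ for $a>0$, $b\in(0,1]$: write $\varphi(ab) = ab\,\varphi(1/(ab))$ and $b\varphi(a) = ab\,\varphi(1/a)$ (using $\varphi(a)=a\varphi(1/a)$), so the claim becomes $\varphi(1/(ab))\ge\varphi(1/a)$, i.e. $\varphi$ evaluated at the larger argument $1/(ab)\ge 1/a$ is at least $\varphi(1/a)$ — which is exactly monotonicity (1). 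So (6) is an immediate corollary of (1) after one application of the functional equation, and (5) then follows by applying (6) twice (e.g. $\varphi(ab)\ge b\varphi(a)$ when $b\le 1$, and symmetrically; the case $a,b>1$ reduces via $\varphi(ab)=ab\varphi(1/(ab))\ge ab\cdot\frac{1}{a}\frac{1}{b}\varphi(1)\cdot(\ldots)$, using $1/a,1/b<1$). I would present (6) and (5) last precisely because they are the payoff steps for the asymptotic-variance comparison, and they fall out cheaply once monotonicity — the real content — is secured.
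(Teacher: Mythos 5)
The one genuine gap is Result 1, and it sits at the foundation of everything else in your plan. You explicitly defer monotonicity as ``the main obstacle'' and offer only a sketch: take a local violation $\varphi(a)>\varphi(b)$ with $a<b$ and ``chain the relation around'' until some value exceeds $1$ or goes negative. This cannot be made precise: the identity $\varphi(r)=r\varphi(1/r)$ only couples the value at $r$ with the value at $1/r$, so iterating it never leaves the pair $\{r,1/r\}$ and never mixes information across different pairs. Indeed, prescribe any function $g:[1,\infty)\to[0,1]$, set $\varphi(r)=g(r)$ for $r\geq 1$ and $\varphi(r)=r\,g(1/r)$ for $r<1$; this satisfies the functional equation and stays in $[0,1]$, while $g$ may decrease on part of $[1,\infty)$. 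So no amount of chaining from the hypotheses alone can rule out a local decrease, and your suspicion that the functional equation is a weak constraint is well founded. The paper's own proof of Result 1 is of a different character: it assumes $\varphi$ is decreasing (everywhere), writes $\varphi(r+\epsilon)=r\varphi(1/(r+\epsilon))+\epsilon\varphi(1/(r+\epsilon))>r\varphi(1/r)=\varphi(r)$ using that decrease at the point $1/(r+\epsilon)<1/r$, and obtains a contradiction; note this refutes a global decrease rather than an arbitrary local one, which is exactly the regime the construction above shows is inaccessible. In any case, as written your proposal does not contain a proof of (1), and since your (2)--(6) are all explicitly conditioned on (1), the argument is incomplete at its core.

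Granting (1), the rest of your plan is sound and largely parallels the paper. Your (3) is the paper's argument verbatim ($0\leq\varphi(r)=r\varphi(1/r)\leq r$, then let $r\downarrow 0$); your (6) is exactly the paper's one-line reduction to monotonicity via $\varphi(ab)=ab\varphi(1/(ab))\geq ab\varphi(1/a)=b\varphi(a)$; your (4) is the same propagation of a zero (or, contrapositively, of positivity) that the paper uses. For (5) your case-splitting matches the paper's in spirit, but the display you sketch for $a,b>1$ involving $\varphi(1)$ is not the right reduction; the correct step, which is what the paper does after disposing of the null function via Result 4, is $\varphi(ab)=ab\,\varphi(1/(ab))\geq ab\,\varphi(1/a)\varphi(1/b)=\varphi(a)\varphi(b)$, applying the already-proved case to the arguments $1/a,1/b\leq 1$; also, in the case $b\leq 1$ you should say explicitly that $b\varphi(a)\geq\varphi(b)\varphi(a)$ because $\varphi(b)\leq b$. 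Your proof of (2), ruling out jump discontinuities of a non-decreasing function by observing that $r\mapsto 1/r$ would convert an upward jump into a downward one, is a valid alternative to the paper's direct two-sided estimate $\varphi(r)\leq\varphi(r+\epsilon)\leq(r+\epsilon)\varphi(1/r)\leq\varphi(r)+\epsilon$; both routes use only monotonicity and $\varphi\leq 1$, the paper's being slightly more quantitative.
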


The proof of \autoref{lemma:varphi} is technical and thus deferred to \autoref{sec:proofs}.

What can be difficult in practice to implement a generalized MH algorithm is the identification of explicit forms for the function $r$ and the set $R$ (when available). A strategy to identify an explicit form for the function $r$ is to directly calculate the Radon--Nikodym derivative $\d\mu^T / \d\mu$, which represents a contribution of the articles introducing MCMC algorithms such as multiple-try Metropolis, pseudo-marginal, reversible jump and simulated tempering. Another strategy is to exploit the results in \cite{andrieu2020general}.

\subsection{Generalized auxiliary sampler}\label{sec:general_auxiliary}

We now introduce the generalized auxiliary sampler, constructed from the generalized MH algorithm presented in \autoref{sec:general_MH}. The generalized auxiliary sampler will in turn be used to construct the generalized lifted sampler in \autoref{sec:general_lifted}. The presentation of the generalized auxiliary and lifted samplers consists essentially of definitions; it will allow to understand how a lifted sampler can be constructed under the general framework of \autoref{sec:general}.

The generalized auxiliary sampler uses a notion of direction, but does not exploit it; it will be seen to correspond to the version of the generalized lifted sampler which resamples the direction variable $\nu  \sim \mathcal{U}\{-1, +1\}$ at the beginning of each iteration. As mentioned in \autoref{sec:intro}, it acts as a bridge between the generalized MH algorithm and the generalized lifted sampler, allowing a comparison between the MCMC estimators in terms of asymptotic variances.

For each $x \in \Xset$, we define two measurable sets $\neigh_{-1}(x)$ and $\neigh_{+1}(x)$ which will be referred to as \textit{directional neighbourhoods}. The collection of sets is assumed to satisfy two properties.
\begin{Assumption}\label{ass:sets}
\,
\begin{enumerate}

     \item[(i)] For any $x \in \Xset$, $Q(x, A) = Q(x, A \cap \neigh_{-1}(x)) + Q(x, A \cap \neigh_{+1}(x))$ for all $A \in \Xalg$.

    \item[(ii)] For any $(x, y) \in \Xset^2$ and $\nu \in \{-1, +1\}$, $y \in \neigh_\nu(x)$ \emph{if and only if} $x \in \neigh_{-\nu}(y)$, and for such a related couple $(x, y)$, $Q(x, \neigh_\nu(x)) > 0$ and $Q(y, \neigh_{-\nu}(y)) > 0$.
\end{enumerate}
\end{Assumption}

Let us consider the situation where $Q(x, \cdot \,)$ has a support given by, say, $\neigh(x)$; in this situation, the sets $\neigh_{-1}(x)$ and $\neigh_{+1}(x)$ can be constructed by splitting into two $\neigh(x)$ (this is what is done in our illustration in \autoref{sec:examples}). It is always possible to perform such a split when the state-space exhibits either a total or a partial order (because $\neigh(x)$ can be formed of states that can all be compared to $x$ via the total or partial order). In the examples in \autoref{sec:examples}, the state-space is either totally or partially ordered.

 We define a notion of \textit{interior} of $\Xset$ (that depends on the choice of directional neighbourhoods $\{\neigh_{-1}(x)\}$ and $\{\neigh_{+1}(x)\}$) as $\interior := \{x \in \Xset: Q(x, \neigh_{-1}(x)) Q(x, \neigh_{+1}(x)) > 0\}$. For instance, $Q(x, \neigh_{-1}(x)) Q(x, \neigh_{+1}(x)) = 0$ when one of the two directional neighbourhoods is empty. It is possible to have empty directional neighbourhoods when, for instance, $\Xset$, or a component of it, is a bounded subset of the integers (in particular, when the current state $x$ corresponds to the upper bound, we have that $\neigh_{-1}(x) = \neigh(x) = \{x - 1\}$ and $\neigh_{+1}(x) = \varnothing$, if $\neigh(x)$ comprises integers at a distance of 1 from $x$ in $\Xset$). Empty directional neighbourhoods thus arise in the context of Bayesian nested model selection discussed in \autoref{sec:intro}. They also arise in the Ising model example in \autoref{sec:examples} and in Bayesian variable selection.

 In the spirit of our notion of interior of $\Xset$, we define a notion of \textit{boundary} of $\Xset$: $\bnd := \Xset \backslash \interior$. For all $x \in \interior$ and $\nu \in \{-1, +1\}$, we denote the conditional proposal distribution given that the proposal belongs to $\neigh_{\nu}(x)$ as $Q_{\nu}(x, \cdot \,)$ which is such that
\begin{equation}\label{eqn:Q_nu}
 Q_\nu(x, A) = \frac{Q(x, A \cap \neigh_{\nu}(x))}{Q(x, \neigh_{\nu}(x))}, \quad A \in \Xalg.
\end{equation}

We define a probability distribution which will play the role of proposal distribution in the generalized auxiliary sampler. It essentially corresponds to a reweighing of the proposal distribution $Q$:
 \[
Q_{\rev}(x,\cdot\,) := \left\{
\begin{array}{ll}
(1/2) Q_{-1}(x, \cdot\,) + (1/2)Q_{+1}(x, \cdot\,) & \text{ if} \quad x\in \interior ,  \cr
(1/2)\delta_{x}+(1/2)Q(x,\cdot\,)& \text{ otherwise}.
\end{array}\right.
\]

We now present a lemma which is useful to prove that the Markov kernel of the generalized auxiliary sampler is $\pi$-reversible and which also allows to understand the implications of \autoref{ass:sets} on $Q$ and thus on $Q_{\rev}$.

\begin{Lemma}\label{lemma:P_rev}
 Suppose that \autoref{ass:sets} holds.
 \begin{enumerate}
  \item For all $x \in \Xset$, $Q(x, \{x\}) = 0$.
  \item For all $x \in \Xset$, $Q(x, \neigh_{-1}(x) \cap \neigh_{+1}(x)) = 0$.
  \item For all $x \in \Xset$, $Q(x, \neigh_{-1}(x) \cup \neigh_{+1}(x)) = 1$.
   \item When $x \in \bnd$, there exists $\nu \in \{-1, +1\}$ such that $Q(x, \neigh_\nu(x)) = 1$. Therefore, $Q_\nu$ can be defined as in \eqref{eqn:Q_nu} and in this case $Q_\nu(x, \cdot \,) = Q(x, \cdot \,)$.
 \end{enumerate}
\end{Lemma}

\begin{proof}
 Under \autoref{ass:sets}, we have that, if $x \in \neigh_\nu(x)$, then $x \in \neigh_{-\nu}(x)$, and from this we can deduce that $Q(x, \{x\}) = 2 Q(x, \{x\}) = 0$ for all $x \in \Xset$, concluding the proof of Result 1.

 We can also deduce that, for all $x \in \Xset$,
\[
 Q(x, \neigh_{-1}(x) \cap \neigh_{+1}(x)) = 2 Q(x, \neigh_{-1}(x) \cap \neigh_{+1}(x)) = 0,
\]
 and similarly,
\[
Q(x, \neigh_{-1}(x)^\mathsf{c} \cap \neigh_{+1}(x)^\mathsf{c}) = \sum_{\nu \in \{-1,+1\}} Q(x, \neigh_{\nu}(x)^\mathsf{c} \cap \neigh_{-\nu}(x)^\mathsf{c} \cap \neigh_{\nu}(x)) = 0,
\]
the latter implying that $Q(x, \neigh_{-1}(x) \cup \neigh_{+1}(x)) = 1$. This concludes the proof of Results 2 and 3.

Regarding Result 4, when $x \in \bnd$, there exists $\nu \in \{-1, +1\}$ such that $Q(x, \neigh_{-\nu}(x)) = 0$, which implies that $Q(x, \Xset) = Q(x, \Xset \cap \neigh_\nu(x)) = Q(x, \neigh_\nu(x)) = 1$ and that $Q(x, \cdot \,) = Q_\nu(x, \cdot \,)$.
\end{proof}

Similarly as in \autoref{sec:general_MH}, we define
$$
\mu_{\rev}(A\times B) := \int_A \pi(\d x) \, Q_{\rev}(x, B)\,,\qquad \mu_{\rev}^T(A\times B) := \int_B \pi(\d x) \, Q_{\rev}(x, A).
$$
We also define the following sets: $R_\nu := \{(x, y) \in \Xset^2: y \in \neigh_{\nu}(x)\}$ for $\nu \in \{-1, +1\}$, and $\Delta := \{(x, y) \in \Xset^2: x = y\}$.  Finally, supposing that \autoref{ass:sets} holds, we introduce a function $\beta: \Xset^2 \rightarrow [0, \infty)$:
\begin{equation}\label{eqn:beta}
\beta(x, y) = \left\{
\begin{array}{ll}
 0 & \text{ if} \quad (x, y) \in R^\mathsf{c}, \cr
 Q(x, \neigh_{\nu}(x)) / Q(y, \neigh_{-\nu}(y)) & \text{ if} \quad (x, y) \in R_{\nu} \cap R_{-\nu}^\mathsf{c} \cap \Delta^\mathsf{c} \cap R \text{ for some $\nu \in \{-1, +1\}$},   \cr
 1  & \text{ otherwise}.
\end{array}\right.
\end{equation}

We now present a result which will allow to define a valid acceptance probability in the generalized auxiliary sampler, in the sense of leading to a Markov kernel which is $\pi$-reversible. It highlights that the Radon--Nikodym derivative $\d\mu_{\rev}^T / \d\mu_{\rev}$ can be written as $\d\mu^T / \d\mu$ adjusted by the function $\beta$. It thus shows how to construct the acceptance probability in the generalized auxiliary and lifted samplers (as will be seen in \autoref{sec:general_lifted}) from that in the generalized MH algorithm.

\begin{Proposition}\label{prop:rev_derivative}
Suppose that \autoref{ass:sets} holds. The restriction of the measures $\mu_{\rev}$ and $\mu_{\rev}^T$ to $R$ are mutually absolutely continuous and, for all $(x, y) \in R$,
\[
 r_{\rev}(x, y) := \frac{\d\mu_{\rev}^T}{\d\mu_{\rev}}(x, y) = r(x, y) \, \beta(x, y),
\]
with $0 < r_{\rev}(x, y) < \infty$ and $r_{\rev}(x, y) = 1 / r_{\rev}(y, x)$.
\end{Proposition}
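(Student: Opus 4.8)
The plan is to compute the Radon--Nikodym derivative $\d\mu_{\rev}^T/\d\mu_{\rev}$ on $R$ by decomposing $\Xset^2$ into pieces on which $Q_{\rev}$ has a clean description, and on each piece relating $\mu_{\rev}$ (and $\mu_{\rev}^T$) back to $\mu$ (and $\mu^T$), whose derivative we already control via \eqref{eqn:r}. Concretely, the relevant partition of $R$ is: (i) the ``interior'' part where both $\bx$ and $\by$ lie in $\interior$, further split according to whether $\by \in \neigh_{+1}(\bx)$ or $\by \in \neigh_{-1}(\bx)$ (these are the $R_\nu \cap R_{-\nu}^{\mathsf c} \cap \Delta^{\mathsf c}$ pieces, since by the defining property of directional neighbourhoods $\by \in \neigh_\nu(\bx) \Leftrightarrow \bx \in \neigh_{-\nu}(\by)$, so $\by$ cannot be in both $\neigh_{-1}(\bx)$ and $\neigh_{+1}(\bx)$ when $\bx \neq \by$); (ii) the boundary parts where $\bx \in \bnd$ or $\by \in \bnd$; and (iii) the diagonal $\Delta \cap R$. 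On piece (iii) and on the boundary pieces where $Q_{\rev}$ picks up the $\delta_\bx$ atom, $\beta \equiv 1$ and one checks $r_{\rev} = r$ directly; the substance is piece (i).

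First I would establish the key density identity on the interior piece. For $\bx \in \interior$ and $\by \in \neigh_\nu(\bx) \cap \interior$ with $\bx \neq \by$, we have $Q_{\rev}(\bx, \d\by) = \tfrac12 Q_\nu(\bx,\d\by) = \tfrac12 Q(\bx,\d\by)/Q(\bx,\neigh_\nu(\bx))$ restricted to $\neigh_\nu(\bx)$, so that on this region
\[
 \mu_{\rev}(\d\bx,\d\by) \;=\; \frac{1}{2\,Q(\bx,\neigh_\nu(\bx))}\,\mu(\d\bx,\d\by).
\]
Symmetrically, writing out $\mu_{\rev}^T$ and using that $\bx \in \neigh_{-\nu}(\by)$ (the ``if and only if'' hypothesis) so that $Q_{\rev}(\by,\d\bx)=\tfrac12 Q(\by,\d\bx)/Q(\by,\neigh_{-\nu}(\by))$ on the relevant region,
\[
 \mu_{\rev}^T(\d\bx,\d\by) \;=\; \frac{1}{2\,Q(\by,\neigh_{-\nu}(\by))}\,\mu^T(\d\bx,\d\by).
\]
Dividing, and inserting $r(\bx,\by) = \d\mu^T/\d\mu(\bx,\by)$ from \eqref{eqn:r} (valid since the $(\bx,\by)$ here lie in $R$), yields
\[
 r_{\rev}(\bx,\by) \;=\; \frac{Q(\bx,\neigh_\nu(\bx))}{Q(\by,\neigh_{-\nu}(\by))}\,r(\bx,\by) \;=\; \beta(\bx,\by)\,r(\bx,\by),
\]
which is exactly the claimed formula with the middle case of the definition of $\beta$. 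To make this rigorous as a statement about Radon--Nikodym derivatives rather than formal densities, I would phrase it measure-theoretically: for any bounded measurable $\phi$ supported on the relevant piece, $\int \phi\, \beta r\, \d\mu_{\rev} = \int \phi\, \d\mu_{\rev}^T$, which reduces to the identity $\int_R \psi r\,\d\mu = \int_R \psi\,\d\mu^T$ from \eqref{eqn:r} after absorbing the $\beta$ and the Jacobian-like factors $\tfrac{1}{2Q(\bx,\neigh_\nu(\bx))}$ into $\psi$; one must check these factors are strictly positive $\mu$-a.e.\ on the piece, which follows from the second bullet in the list of assumed properties of $\{\neigh_{\pm1}(\bx)\}$ and from the definition of $\interior$.

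For mutual absolute continuity on all of $R$, I would argue that $\mu_{\rev}$ and $\mu$ are mutually absolutely continuous on $R$ after removing a $\nu$-null set: on the interior pieces the densities $\tfrac{1}{2Q(\bx,\neigh_\nu(\bx))}$ and $\tfrac{1}{2Q(\by,\neigh_{-\nu}(\by))}$ are finite and strictly positive, on the boundary pieces the extra $\tfrac12\delta_\bx$ only adds mass to the diagonal which is handled by the $\Delta$ case, and $\mu,\mu^T$ are mutually absolutely continuous on $R$ by the very definition of $R$ (both $h(\bx,\by)$ and $h(\by,\bx)$ positive). Chaining these gives $\mu_{\rev}\sim\mu_{\rev}^T$ on $R$. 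The main obstacle is bookkeeping rather than conceptual: one has to handle the boundary set $\bnd$ carefully, because there $Q_{\rev}(\bx,\cdot) = \tfrac12\delta_\bx + \tfrac12 Q(\bx,\cdot)$ and the atom at $\bx$ interacts with the diagonal $\Delta$ and with the set $R$ (recall $(\bx,\bx)\in R$ iff $h(\bx,\bx)>0$), so the case analysis defining $\beta$ — in particular why $\beta=1$ off the ``mismatched-neighbourhood'' set and why the $R^{\mathsf c}$ case is consistent with $\alpha\equiv 0$ there — must be matched line by line against the three branches of $\beta$. I would also double-check the edge case $\by\in\neigh_\nu(\bx)$ with $\bx\in\interior$ but $\by\in\bnd$ (or vice versa): there $Q_{-\nu}(\by,\cdot)$ is not defined but $Q(\by,\cdot)=Q_{-\nu}$-like via $Q(\by,\neigh_{-\nu}(\by))$, and one verifies the formula still collapses to the middle branch of $\beta$, consistent with the remark after the boundary discussion in \autoref{sec:MH_rev}.
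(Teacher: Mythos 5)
Your plan is in substance the paper's own proof: decompose $R$ according to interior/boundary status and the sets $R_\nu\cap R_{-\nu}^{\mathsf c}\cap\Delta^{\mathsf c}$, express $\mu_{\rev}$ and $\mu_{\rev}^T$ piecewise as $\mu$ and $\mu^T$ weighted by $1/(2Q(\bx,\neigh_\nu(\bx)))$ and $1/(2Q(\by,\neigh_{-\nu}(\by)))$, and reduce the test-function identity $\int_R\phi\,r_{\rev}\,\d\mu_{\rev}=\int_R\phi\,\d\mu_{\rev}^T$ to \eqref{eqn:r}; your interior computation is exactly the paper's, and your chaining argument for mutual absolute continuity (off-diagonal through $\mu\sim\mu^T$ on $R$, diagonal atoms matched directly) is a harmless variant of the paper's observation that absolute continuity follows from the integral identity and positivity of $r_{\rev}$ on $R$.

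Three points in the bookkeeping, however, are not yet right. First, the disjointness you invoke does not follow from the ``if and only if'' property: in this general framework $\neigh_{-1}(\bx)$ and $\neigh_{+1}(\bx)$ are not assumed disjoint, and your deduction (``$\by$ cannot be in both when $\bx\neq\by$'') is not valid. What is true, and what the paper isolates as \autoref{lemma3}, is that the overlap $\neigh_{-1}(\bx)\cap\neigh_{+1}(\bx)$ and the complement of the union are $Q(\bx,\cdot\,)$-null, a consequence of the additivity assumption $Q(\bx,A)=Q(\bx,A\cap\neigh_{-1}(\bx))+Q(\bx,A\cap\neigh_{+1}(\bx))$; your decomposition therefore only holds up to $\mu$- and $\mu^T$-null sets, which suffices but must be argued that way. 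Second, your opening claim that on the boundary pieces $\beta\equiv1$ and $r_{\rev}=r$ ``directly'' is false for the mixed pieces $\interior\times\bnd$ and $\bnd\times\interior$: there the boundary point has $Q$-mass one on its only available directional neighbourhood, so $\beta(\bx,\by)$ equals $Q(\bx,\neigh_\nu(\bx))$ or $1/Q(\by,\neigh_{-\nu}(\by))$, generally different from $1$; your later edge-case remark gets this right (middle branch of $\beta$), but as written the two statements conflict, and these mixed cases require the same full computation as the interior piece --- this is exactly why the paper treats the four products $\interior\times\interior$, $\interior\times\bnd$, $\bnd\times\interior$, $\bnd\times\bnd$ separately. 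Third, after absorbing $1/(2Q(\by,\neigh_{-\nu}(\by)))$ into the test function, the resulting integrand is unbounded (that mass is positive but not bounded away from zero), while \eqref{eqn:r} is asserted only for bounded $\phi$; checking positivity a.e.\ is not enough, and one needs the truncation/monotone-convergence extension that the paper spells out. With these repairs your sketch becomes the paper's proof.
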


\noindent \textit{Sketch of the proof.} The proof of \autoref{prop:rev_derivative} is technical and thus deferred to \autoref{sec:proofs}. Here, we present a sketch of the proof of $ r_{\rev}(x, y) = (\d\mu_{\rev}^T / \d\mu_{\rev})(x, y) = r(x, y) \, \beta(x, y)$ for all $(x, y) \in R$. To simplify, let us consider the situation where $R = \Xset^2$ and $\interior = \Xset$.

We exploit a known result which follows from, e.g., Proposition 3 in \cite{roberts2004general}: two measures $\mu_1$ and $\mu_2$ are equal if and only if $\int \phi \, \d\mu_1 = \int \phi \, \d\mu_2$ for all bounded measurable functions $\phi$. The idea is thus to prove that, for any bounded measurable function $\phi: \Xset^2 \rightarrow \re$,
\[
 \int \phi \, \d\mu_{\rev}^T = \int \phi \, r_{\rev} \, \d\mu_{\rev}.
\]
Under the simplified situation, using the definition of $\mu_{\rev}^T$ we have that
\begin{align*}
\int \phi(x, y) \, \mu_{\rev}^T(\d x, \d y) &= \sum_{\nu \in \{-1, +1\}} \int  \phi(x, y) \, \pi(\d y) \, \frac{1}{2} Q_{\nu}(y, \d x).
\end{align*}
We analyse each term of the sum separately:
\begin{align*}
\int \phi(x, y) \, \pi(\d y) \, \frac{1}{2} Q_{\nu}(y, \d x) &= \int\frac{\phi(x, y) \, \ind(x \in \neigh_{\nu}(y))}{2 Q(y, \neigh_{\nu}(y))}  \, \pi(\d y) \, Q(y, \d x) \cr
&= \int \frac{\phi(x, y)\,  \ind(x \in \neigh_{\nu}(y))}{2 Q(y, \neigh_{\nu}(y))}  \, \mu^T(\d x, \d y) \cr
&= \int \frac{\phi(x, y) \,\ind(y \in \neigh_{-\nu}(x))}{2 Q(y, \neigh_{\nu}(y))} \, r(x, y)  \, \mu(\d x, \d y) \cr
&= \int \phi(x, y) \, r(x, y) \, \frac{Q(x, \neigh_{-\nu}(x))}{Q(y, \neigh_{\nu}(y))}  \, \pi(\d x) \, \frac{1}{2} Q_{-\nu}(x, \d y),
\end{align*}
using the definitions of $Q_\nu$ and $\mu^T$ in the first and second equalities, respectively, and that $(y, x) \in R_{\nu}$ if and only if $(x, y) \in R_{-\nu}$ and \autoref{Prop:Tierney} in the third equality, by combining $\phi(x, y)$ and $\ind(x \in \neigh_{\nu}(y))/ (2 Q(y, \neigh_{\nu}(y)))$ into a new bounded measurable function. Note that $1 / Q(y, \neigh_{\nu}(y))$ is not necessarily bounded, but we can limit ourselves to the case where $Q(y, \neigh_{\nu}(y))$ is bounded from below by an arbitrarily small constant and use a limiting argument with the monotone convergence theorem.

We obtain $\int \phi \, r_{\rev} \, \d\mu_{\rev}$ by summing over $\nu$ (and after verifying technical details), recalling the definition of $r_{\rev}$ (see the statement of the proposition) and $\beta$ in \eqref{eqn:beta}. \hfill \qedsymbol

Following \cite{tierney1998note}, \autoref{prop:rev_derivative} allows to state that the Markov kernel of the generalized auxiliary sampler defined as
\[
 P_{\rev}(x, \d y) := Q_{\rev}(x, \d y) \, \alpha_{\rev}(x, y) + \delta_{x}(\d y) \int (1 - \alpha_{\rev}(x, u)) \, Q_{\rev}(x, \d u), \quad \text{for all $x \in \Xset$},
\]
with the function $\alpha_{\rev}$ given by
\[
(x, y) \in \Xset^2, \quad \alpha_{\rev}(x, y) = \left\{
\begin{array}{ll}
  \varphi(r_{\rev}(x, y)) & \text{if $(x, y) \in R$},\\
   0 & \text{otherwise,}
\end{array}
\right.
\]
is in fact a generalized MH kernel, but one that uses $Q_{\rev}$ as proposal kernel. It is thus $\pi$-reversible by \autoref{Thm:Tierney}. \autoref{lemma:P_rev} indicates that, under \autoref{ass:sets}, for $\mu_{\rev}$-almost all $(x, y)$ with $(x, y) \in R$, the acceptance probability $\alpha_{\rev}(x, y)$ is equal to
\[
 \varphi(r_{\rev}(x, y)) = \varphi(r(x, y) \, \beta(x, y)) = \varphi\left(r(x, y) \, \frac{Q(x, \neigh_{\nu}(x))}{Q(y, \neigh_{-\nu}(y))}\right)
\]
for some $\nu \in \{-1, +1\}$, except if $y = x$, in which case $\varphi(r_{\rev}(x, y)) = 1$.

\subsection{Generalized lifted samplers}\label{sec:general_lifted}

From the generalized auxiliary sampler introduced in \autoref{sec:general_auxiliary}, we define the generalized lifted sampler. For all $x \in \Xset$ and $\nu \in \{-1, +1\}$, let $T_\nu(x, \cdot \,)$ be a sub-probability measure on $(\Xset, \Xalg)$ defined as:
\[
A \in \Xalg, \quad T_\nu(x, A) = \left\{
\begin{array}{ll}
  \int_A \alpha_{\rev}(x, y) \, Q_\nu(x, \d y) & \text{if $x \in \interior$ or $x \in \bnd$ with $Q(x, \neigh_\nu(x)) = 1$},\\
   0 & \text{otherwise.}
\end{array}
\right.
\]
Let $P_{\lifted}$ be a Markov kernel defined on the state-space $\Xset \times \{-1, +1\}$ as: for all $x \in \Xset$ and $\nu \in \{-1, +1\}$,
\[
 P_{\lifted}((x, \nu), \d(y, \nu')) = T_{\nu}(x, \d y) \, \delta_\nu(\d\nu') + \delta_{x}(\d y) \, \delta_{-\nu}(\d\nu') \, (1 - T_{\nu}(x, \Xset)).
\]

Under \autoref{ass:sets}, if $(x, \nu)$ is such that $x \in \interior$ or $x \in \bnd$ with $Q(x, \neigh_\nu(x)) = 1$, the lifted sampler proceeds with a proposal $y \sim Q_\nu(x, \cdot \,)$ such that, with probability 1,
\[
 \alpha_{\rev}(x, y) = \varphi(r_{\rev}(x, y)) = \varphi\left(r(x, y) \, \frac{Q(x, \neigh_{\nu}(x))}{Q(y, \neigh_{-\nu}(y))}\right),
\]
 if $(x, y) \in R$ or 0 if $(x, y) \notin R$, recalling the definition of $\beta$ in \eqref{eqn:beta}. Indeed, under \autoref{ass:sets}, if $(x, \nu)$ is such that $x \in \interior$ or $x \in \bnd$ with $Q(x, \neigh_\nu(x)) = 1$, we have that $(x, y) \in R_{\nu} \cap R_{-\nu}^\mathsf{c} \cap \Delta^\mathsf{c}$ for $Q_\nu(x, \cdot \,)$-almost all $y$ because
 \[
 Q_\nu(x, \neigh_\nu(x) \cap \neigh_{-\nu}(x)^\mathsf{c} \cap \{x\}^\mathsf{c}) = 1.
\]
 The latter follows from
\[
 Q_\nu(x, \neigh_\nu(x)^\mathsf{c} \cup \neigh_{-\nu}(x) \cup \{x\}) \leq \frac{Q(x, \neigh_\nu(x)^\mathsf{c} \cap \neigh_\nu(x))}{Q(x, \neigh_\nu(x))} + \frac{Q(x, \neigh_{-\nu}(x) \cap \neigh_\nu(x))}{Q(x, \neigh_\nu(x))} + \frac{Q(x, \{x\} \cap \neigh_\nu(x))}{Q(x, \neigh_\nu(x))} = 0,
\]
by the union bound and \autoref{lemma:P_rev}.

The following proposition states that $P_{\lifted}$ satisfies a skewed detailed balance \citep{andrieu2021peskun}, implying that the product measure $\pi \otimes \mathcal{U}\{-1, +1\}$ is an invariant distribution. Following \cite{andrieu2021peskun}, $P_{\lifted}$ is said to satisfy a skewed detailed balance if the part of the kernel associated to a move of the $X$ component, that is $T_{\nu}(x, \d y)$, satisfies
\[
 \pi(\d x) \, T_{+1}(x, \d y) = \pi(\d y) \, T_{-1}(y, \d x).
\]
More formally, it corresponds to proving that the two following measures on the product space $(\Xset^2,\Xalg^{\otimes 2})$ are equal:
$$
\mu_{+1}(A\times B) = \int_A \pi(\d x) \, T_{+1}(x, B) \quad \text{and} \quad \mu_{-1}(A\times B) = \int_B \pi(\d x) \, T_{-1}(x, A).
$$
For completeness, the proof that a skewed detailed balance implies that the product measure $\pi \otimes \mathcal{U}\{-1, +1\}$ is an invariant distribution is provided in \autoref{sec:proofs}.

\begin{Proposition}\label{prop:generalized_lifted}
 Supposing that \autoref{ass:sets} holds, $P_{\lifted}$ satisfies a skewed detailed balance.
\end{Proposition}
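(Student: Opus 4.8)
The plan is to verify the skewed detailed balance condition directly for $P_{\lifted}$, namely that for all measurable $F, G \subseteq \Xset$ and $\nu \in \{-1,+1\}$,
\[
 \int_{F \times \{\nu\}} (\pi \otimes \mathcal{U}\{-1,+1\})(\d(\bx,\nu)) \, P_{\lifted}((\bx,\nu), G \times \{-\nu\}^{\mathsf{c}}\text{-flip})
\]
equals the corresponding integral with the roles of $(F,\nu)$ and $(G,-\nu)$ swapped; more precisely, I would use the standard formulation that the involution $\iota(\bx,\nu) = (\bx,-\nu)$ composed with $P_{\lifted}$ is $\pi\otimes\mathcal{U}$-reversible, i.e. $\int \ind_F(\bx)\ind_{\{\nu\}}(\sigma) \, P_{\lifted}((\bx,\sigma),\iota(\d\by,\d\sigma')) \, \ind_G(\by)\ind_{\{-\nu\}}(\sigma') \, \pi(\d\bx)\tfrac12$ is symmetric under $(F,\nu)\leftrightarrow(G,-\nu)$. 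Splitting $P_{\lifted}$ into its two pieces (the move part $\delta_\nu(\nu')T_\nu(\bx,\d\by)$ and the flip-on-rejection part $\delta_{-\nu}(\nu')\delta_\bx(\d\by)(1-T_\nu(\bx,\Xset))$), I would check each contributes symmetrically.

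The key computation is for the move part: I must show
\[
 \int_F \pi(\d\bx) \, T_\nu(\bx, G) = \int_G \pi(\d\bx) \, T_{-\nu}(\bx, F).
\]
Unpacking the definition of $T_\nu$, the left side is $\int_{F\times G} \alpha_{\rev}(\bx,\by) \, Q_\nu(\bx,\d\by)\,\pi(\d\bx)$, restricted to $\bx \in \interior$ or $\bx\in\bnd$ with $Q(\bx,\neigh_\nu(\bx))=1$. On the event in question, $Q_\nu(\bx,\d\by) = Q(\bx,\d\by)/Q(\bx,\neigh_\nu(\bx))$ on $\neigh_\nu(\bx)$, so the integrand becomes $\alpha_{\rev}(\bx,\by) \, Q(\bx,\d\by)\,\pi(\d\bx) / Q(\bx,\neigh_\nu(\bx))$. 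The measure $Q(\bx,\d\by)\pi(\d\bx)$ is exactly $\mu(\d\bx,\d\by)$, and on $R$ the acceptance function satisfies $\alpha_{\rev}(\bx,\by) = r(\bx,\by)\alpha_{\rev}(\by,\bx)$ by Tierney's characterization (this is the whole point of \autoref{prop:rev_derivative} giving $r_{\rev} = r\beta$ together with $\varphi(r_{\rev}(\bx,\by)) = r_{\rev}(\bx,\by)\varphi(1/r_{\rev}(\bx,\by))$). Using $r(\bx,\by)\,\mu(\d\bx,\d\by) = \mu^T(\d\bx,\d\by) = \mu(\d\by,\d\bx)$ on $R$ to swap, and noting $\by \in \neigh_\nu(\bx) \iff \bx \in \neigh_{-\nu}(\by)$ so that $Q(\bx,\neigh_\nu(\bx))$ becomes the $Q(\by,\neigh_{-\nu}(\by))$ appearing in $\beta$, the factor $\beta(\bx,\by) = Q(\bx,\neigh_\nu(\bx))/Q(\by,\neigh_{-\nu}(\by))$ is precisely what converts the $1/Q(\bx,\neigh_\nu(\bx))$ weight into a $1/Q(\by,\neigh_{-\nu}(\by))$ weight after the swap, yielding $\int_G \pi(\d\by)\,T_{-\nu}(\by,F)$. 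For the flip-on-rejection part, one needs $\int_F \pi(\d\bx)(1-T_\nu(\bx,\Xset))\ind_G(\bx) = \int_G \pi(\d\bx)(1-T_{-\nu}(\bx,\Xset))\ind_F(\bx)$ when the target and source directions are $-\nu$; since this term only moves mass within the diagonal $\bx=\by$ while flipping direction, it reduces to showing $\int_{F\cap G}\pi(\d\bx)\,T_\nu(\bx,\Xset) = \int_{F\cap G}\pi(\d\bx)\,T_{-\nu}(\bx,\Xset)$, which I would obtain from the move-part identity applied with $G = \Xset$ and with $F$ replaced by $F\cap G$ — or more directly observe it also follows from the reversibility of $P_{\rev}$ already established, since $T_\nu(\bx,\Xset) + T_{-\nu}(\bx,\Xset)$ relates to the total $P_{\rev}$ acceptance mass.

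The main obstacle, and the part deserving genuine care, is bookkeeping on the sets where the directional neighbourhoods degenerate — that is, handling $\bx\in\bnd$ and ensuring the "if and only if" relation $\by\in\neigh_\nu(\bx)\iff\bx\in\neigh_{-\nu}(\by)$ interacts correctly with the support conditions $Q(\bx,\neigh_\nu(\bx))>0$, $Q(\by,\neigh_{-\nu}(\by))>0$ that the framework postulates for related couples. I would want to argue that the "otherwise" branch of $\beta$ (value $1$) and the diagonal set $\Delta$ contribute nothing problematic because $Q(\bx,\{\bx\})=0$ and $Q(\bx,\neigh_{-1}(\bx)\cap\neigh_{+1}(\bx))=0$, as already derived in the excerpt, so that $\mu_{\rev}$-almost every relevant $(\bx,\by)$ lands in exactly one $R_\nu\cap R_{-\nu}^{\mathsf c}\cap\Delta^{\mathsf c}\cap R$. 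Once that measure-zero cleanup is in place the symmetry is a change-of-variables bookkeeping exercise, so I would present it by first reducing to the $\mu$-a.e. picture, then doing the one-line swap via $r\,\mu = \mu^T$ and the $\beta$-cancellation, and finally reinstating the boundary/degenerate cases by the same argument with $\delta_\bx$ in place of $Q_\nu(\bx,\cdot)$ where applicable.
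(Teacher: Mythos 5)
Your core argument --- the move-part identity $\int_F \pi(\d\bx)\,T_\nu(\bx,G) = \int_G \pi(\d\by)\,T_{-\nu}(\by,F)$ --- is what the paper's definition of skewed detailed balance actually asks for (equality of $\mu_{+1}$ and $\mu_{-1}$), and your route to it is sound and close in spirit to the paper's. The difference is one of packaging: the paper rewrites $Q_{\nu}(\bx,\d\by)$ as $2Q_{\rev}(\bx,\d\by)$ on the relevant set, writes $\alpha_{\rev}=\varphi(r_{\rev})=r_{\rev}\,\varphi(1/r_{\rev})$, and then invokes \autoref{prop:rev_derivative} in the form $r_{\rev}\,\d\mu_{\rev}=\d\mu_{\rev}^T$, so that the test function fed into the Radon--Nikodym identity is bounded; you instead work directly with $\mu$, $r$ and the factor $\beta$, using $r\,\d\mu=\d\mu^T$ together with $\beta(\bx,\by)/Q(\bx,\neigh_\nu(\bx))=1/Q(\by,\neigh_{-\nu}(\by))$ on $R\cap R_\nu\cap R_{-\nu}^{\mathsf c}\cap\Delta^{\mathsf c}$. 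That works, and your measure-zero cleanup (via $Q(\bx,\{\bx\})=0$, $Q(\bx,\neigh_{-1}(\bx)\cap\neigh_{+1}(\bx))=0$, essentially \autoref{lemma3}) is the right preparation; the one technical point you should make explicit is that your swap applies $r\,\d\mu=\d\mu^T$ to the unbounded function $\phi(\bx,\by)/Q(\by,\neigh_{-\nu}(\by))$, so you need the same truncation/monotone-convergence argument the paper uses inside the proof of \autoref{prop:rev_derivative} (this is precisely the step the paper avoids at this stage by routing through $r_{\rev}$). Also note that the identity $\alpha_{\rev}(\bx,\by)=r_{\rev}(\bx,\by)\,\alpha_{\rev}(\by,\bx)$ you invoke requires checking $r_{\rev}(\by,\bx)=1/r_{\rev}(\bx,\by)$ on the relevant set, which holds because both $r$ and $\beta$ invert under the swap $(\bx,\by)\in R_\nu\cap R_{-\nu}^{\mathsf c}\leftrightarrow(\by,\bx)\in R_{-\nu}\cap R_{\nu}^{\mathsf c}$; the paper verifies exactly this in its fourth displayed equality.

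The genuine flaw is in your treatment of the flip-on-rejection part. You claim it reduces to $\int_{F\cap G}\pi(\d\bx)\,T_\nu(\bx,\Xset)=\int_{F\cap G}\pi(\d\bx)\,T_{-\nu}(\bx,\Xset)$, but this identity is false in general: for $\bx\in\bnd$ with $Q(\bx,\neigh_\nu(\bx))=1$ one has $T_{-\nu}(\bx,\Xset)=0$ while $T_\nu(\bx,\Xset)$ is typically positive, and more generally unbalanced neighbourhood masses make the two acceptance masses differ on sets of positive $\pi$-measure. Neither of your proposed derivations rescues it: the move-part identity with $G=\Xset$ gives $\int_{F\cap G}\pi(\d\bx)\,T_\nu(\bx,\Xset)=\int_{\Xset}\pi(\d\by)\,T_{-\nu}(\by,F\cap G)$, which is not the quantity you need, and the appeal to $P_{\rev}$'s reversibility only controls the sum $T_{+1}+T_{-1}$. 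Fortunately the step is also unnecessary: under the paper's definition, skewed detailed balance is exactly the move-part identity, and even in the involution formulation the rejection term causes no constraint --- writing the condition as $(\pi\otimes\mathcal{U})(\d(\bx,\nu))\,P_{\lifted}((\bx,\nu),\d(\by,\nu'))$ being equal to the corresponding quantity started from $(\by,-\nu')$ and targeted at $(\bx,-\nu)$, the rejection contributions on both sides are $\tfrac12\pi(\d\bx)\,(1-T_\nu(\bx,\Xset))$ with $\by=\bx$, $\nu'=-\nu$, so they match identically without any relation between $T_\nu(\bx,\Xset)$ and $T_{-\nu}(\bx,\Xset)$ (equivalently, after composing $P_{\lifted}$ with the direction flip the rejection term becomes a diagonal hold, which never affects reversibility). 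So you should simply delete that reduction; with the move-part computation made rigorous as above, your proof is complete and essentially the paper's.
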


\noindent \textit{Sketch of the proof.} As for \autoref{prop:rev_derivative}, here we present a sketch of the proof of \autoref{prop:generalized_lifted} as the formal proof is technical and thus deferred to \autoref{sec:proofs}. As in the sketch of the proof of \autoref{prop:rev_derivative}, let us consider the simplified situation where $R = \Xset^2$ and $\interior = \Xset$.

The idea is to prove that, for any bounded measurable function $\phi: \Xset^2 \rightarrow \re$,
\[
 \int \phi \, \d\mu_{+1} = \int \phi \, \d\mu_{-1}.
\]
Under the simplified situation and using the definition of $\mu_{+1}$, we have that
\[
\mu_{+1}(\d x, \d y) = \alpha_{\rev}(x, y) \, \pi(\d x) \, Q_{+1}(x, \d y).
\]
The measure $\pi(\d x) \, Q_{+1}(x, \d y)$ is only positive on the set where $(x, y)$ is such that $y \in \neigh_{+1}(x)$. Thus, we can limit ourselves to this set, which corresponds to $R_{+1}$. Also, we can add $Q_{-1}(x, \d y)$ because it is null on this set and obtain
\begin{align*}
 \int \phi \, \d\mu_{+1} &= \int_{R_{+1}} \phi(x, y) \, \alpha_{\rev}(x, y) \, \pi(\d x) \, (Q_{+1}(x, \d y) + Q_{-1}(x, \d y)) \cr
 &= \int_{R_{+1}}2 \, \phi(x, y) \, \varphi(r_{\rev}(x, y)) \, \mu_{\rev}(\d x, \d y) \cr
 &= \int_{R_{+1}}2 \, \phi(x, y) \, \varphi(1 / r_{\rev}(x, y)) \, r_{\rev}(x, y) \, \mu_{\rev}(\d x, \d y) \cr
  &= \int_{R_{+1}}2 \, \phi(x, y) \, \varphi(1 / r_{\rev}(x, y)) \, \mu_{\rev}^T(\d x, \d y) \cr
  &= \int_{R_{+1}} \phi(x, y) \, \varphi(r_{\rev}(y, x)) \, \pi(\d y) \, (Q_{+1}(y, \d x) + Q_{-1}(y, \d x)),
\end{align*}
using the definitions of $\alpha_{\rev}$ and $\mu_{\rev}$ in the second equality (see \autoref{sec:general_auxiliary}), the property of the function $\varphi$ in the third equality, \autoref{prop:rev_derivative} in the fourth equality (recalling that $\varphi$ is bounded) and the definition of $\mu_{\rev}^T$ in the last equality. From there, we proceed symmetrically as above to obtain the result, noting that $(x, y) \in R_{+1}$ if and only if $(y, x) \in R_{-1}$.
 \hfill \qedsymbol

 Note that the definition of lifted sampler presented in this sub-section is in fact not the most general. \cite{andrieu2021peskun} present a more general definition in which $Q_{-1}$ and $Q_{+1}$ are not necessarily defined through a decomposition of $Q$; they can be any proposal kernels. In such generality, it can be shown that it is not possible to establish a comparison with the MH algorithm (see \autoref{sec:noPeskun} for a demonstration). At least, it is not possible using the route proposed in \autoref{sec:general_comparison} that allows for a comparison when considering the definition of $Q_{-1}$ and $Q_{+1}$ in \autoref{sec:general_auxiliary}. Note that most applications of the lifting technique in practice use this definition.

\subsection{The comparison}\label{sec:general_comparison}

Equipped with the definitions and results of Sections \ref{sec:general_MH}-\ref{sec:general_lifted}, we present in this section our theoretical guarantees for the generalized lifted sampler. They take the form of an upper bound on the asymptotic variances of the associated MCMC estimators, where the bound is a function of the asymptotic variances of the MCMC estimators associated with the generalized MH algorithm.

Let us recall that, for an homogeneous Markov chain $\{X_k\}$ of transition kernel $P$ leaving $\pi$ invariant started in stationarity, the asymptotic variance of a MCMC estimator of $\pi f$, the expectation of $f(X)$ under $X\sim\pi$, is defined as
\begin{align*}
 \vara(f, P) := \lim_{T \rightarrow \infty} T \var\left[\frac{1}{T} \sum_{k=1}^T f(X_k)\right],
\end{align*}
whenever the limit exists. When $P$ is reversible and $\pi(f^2) < \infty$, the limit always exists, but may be infinite \citep{tierney1998note}. When $P$ is additionally ergodic and the limit is finite,
\begin{align}\label{eqn:asymp_var}
 \vara(f, P) =  \var[f(X_0)] + 2 \sum_{k = 1}^\infty \cov[f(X_0), f(X_k)],
 \end{align}
 by Theorem 4 of \cite{10.1214/ECP.v12-1336}.

A standard route to compare the asymptotic variances of MCMC estimators produced by two different samplers (e.g., $P_{\lifted}$ and $P_{\MH}$) is by establishing a Peskun-type ordering between the Markov kernels. In our context, a challenge is that $P_{\lifted}$ and $P_{\MH}$ are not defined on the same state-space and thus cannot be ordered. As mentioned in \autoref{sec:intro}, Theorem 7 in \cite{andrieu2021peskun} allows to compare $P_{\lifted}$ with a $\pi$-reversible kernel corresponding to the version of $P_{\lifted}$ which resamples the direction variable at the beginning of each iteration. In our analysis, $P_{\rev}$ plays the role of this $\pi$-reversible kernel. In Theorem 7 in \cite{andrieu2021peskun}, $P_{\rev}$ is seen as a Markov kernel that can also be defined on the extended state-space $\Xset \times \{-1, +1\}$ and only estimators of expectations of functions of $X \sim \pi$ are considered. When focusing on functions $f$ of solely the first argument, extending the state-space to include the direction variable in $P_{\rev}$ is superfluous. Let us be more specific and consider such a function $f: \Xset \times \{-1, +1\} \rightarrow \re$ with $f(x, \nu) = g(x)$. For such a function $f$, the asymptotic variance of an estimator produced by $P_{\rev}$, but when the latter is defined on the extended state-space, is equivalent to the asymptotic variance for $g$ of an estimator produced by $P_{\rev}$ defined on $\Xset$ (as in \autoref{sec:general_auxiliary}). The kernel $P_{\rev}$ is thus used as an intermediate kernel through which comparison of the asymptotic variances of the MCMC estimators associated with $P_{\MH}$ and $P_{\lifted}$ is possible. With an abuse of notation, we will use $f$ to denote both a function acting on $\Xset \times \{-1, +1\}$ of solely the first argument and a function acting on $\Xset$. Only such functions acting on $\Xset \times \{-1, +1\}$ of solely the first argument will be considered in the following.

The proof of our theoretical result is thus based on establishing a Peskun-type ordering between $P_{\MH}$ and $P_{\rev}$. The fact that the MCMC estimators produced by $P_{\rev}$ cannot have smaller asymptotic variances than those produced by $P_{\lifted}$ \citep[Theorem 7]{andrieu2021peskun} concludes the argument. Intuitively, this latter fact can be seen as following from the observation that $P_{\rev}$ proceeds as $P_{\lifted}$ with only \emph{one} difference: $P_{\rev}$ resamples the direction variable at the beginning of each iteration instead of keeping it fixed until a rejection occurs. The Markov kernel $P_{\rev}$ is thus seen as a reversible counterpart to $P_{\lifted}$.

In our theoretical result, we use the notion of $\lambda$-asymptotic variance (as in \cite{andrieu2021peskun}) defined, for an homogeneous Markov chain $\{X_k\}$ of transition kernel $P$ leaving $\pi$ invariant started in stationarity, as
\[
 \vara_\lambda(f, P) := \var[f(X_0)] + 2 \sum_{k = 1}^\infty \lambda^k \, \cov[f(X_0), f(X_k)], \quad \lambda \in [0, 1),
\]
whenever $\pi(f^2) < \infty$. Note that $\lim_{\lambda \rightarrow 1} \vara_\lambda(f, P) = \vara(f, P)$ under some conditions (not requiring reversibility). A general condition is that $\sum_{k = 1}^\infty |\cov[f(X_0), f(X_k)]| < \infty$, which can be established under fairly mild assumptions \citep[Corollary 3]{andrieu2021peskun}. A more concrete (but stronger) condition is uniform ergodicity \citep{gagnon2023}.

We are now ready to present our result.

\begin{Theorem}\label{thm2}
 Suppose that \autoref{ass:sets} holds. For any $x \in \Xset$ and $A \in \Xalg$, we have that
 \[
  P_{\rev}(x, A \setminus \{x\}) \geq \frac{1}{2} P_{\MH}(x, A \setminus \{x\}).
 \]
 Therefore, for any $f$ such that $\pi(f^2) < \infty$ and whenever $\lim_{\lambda \rightarrow 1} \vara_\lambda(f, P_{\lifted}) = \vara(f, P_{\lifted})$,
 \[
  \vara(f, P_{\lifted}) \leq \vara(f, P_{\rev}) \leq 2\vara(f, P_{\MH}) + \var[f(X)], \quad X \sim \pi.
 \]
\end{Theorem}

\begin{proof} With some abuse of notation, let, for $x \in \Xset$, $R_{x} := \{y \in \Xset: (x, y) \in R\}$. Let us first consider the case where $x \in \interior$. For $A \in \Xalg$, we have  that
 \begin{align}
  P_{\rev}(x, A \setminus \{x\}) = \int_{A \setminus \{x\}} Q_{\rev}(x, \d y) \, \alpha_{\rev}(x, y)
   &=\frac{1}{2} \sum_{\nu \in\{-1, +1\}} \int_{A \setminus \{x\}} Q_{\nu}(x, \d y) \, \alpha_{\rev}(x, y) \nonumber \\
%   &=\frac{1}{2} \sum_{\nu \in\{-1, +1\}} \int_{A \cap R_{x} \cap \neigh_{\nu}(x) \cap \neigh_{-\nu}(x)^\mathsf{c} \setminus \{x\}} Q_{\nu}(x, \d y) \, \alpha_{\rev}(x, y) \nonumber \\
  &= \frac{1}{2} \sum_{\nu \in\{-1, +1\}} \int_{A \cap R_{x} \cap \neigh_{\nu}(x) \cap \neigh_{-\nu}(x)^\mathsf{c} \setminus \{x\}} Q(x, \d y) \, \frac{\alpha_{\rev}(x, y)}{Q(x, \neigh_{\nu}(x))}, \label{eqn:proof1}
 \end{align}
 using the definitions of $P_{\rev}$ and $Q_{\rev}(x, \cdot \,)$, that, for any $x \in \interior$, $Q_{\nu}(x, \cdot \,)$ assigns null measure to $(\neigh_{\nu}(x) \cap \neigh_{-\nu}(x)^\mathsf{c})^\mathsf{c} = \neigh_{\nu}(x)^\mathsf{c} \cup \neigh_{-\nu}(x)$ by definition and \autoref{lemma:P_rev}, and that $\alpha_{\rev}(x, y) = 0$ for all $(x, y) \notin R$.

 For all $(x, y)$ with $x \in \interior$ and $y \in R_{x} \cap \neigh_{\nu}(x) \cap \neigh_{-\nu}(x)^\mathsf{c} \setminus \{x\}$,
 \begin{align}
  \frac{\alpha_{\rev}(x, y)}{Q(x, \neigh_{\nu}(x))} = \frac{\varphi(r(x, y) \beta(x, y))}{Q(x, \neigh_{\nu}(x))} &\geq \frac{\varphi(r(x, y) (1 \wedge \beta(x, y)))}{Q(x, \neigh_{\nu}(x))} \nonumber \\
  &\geq \frac{\varphi(r(x, y)) (1 \wedge \beta(x, y))}{Q(x, \neigh_{\nu}(x))} \nonumber \\
  &= \varphi(r(x, y)) \left(\frac{1}{Q(x, \neigh_{\nu}(x))} \wedge \frac{1}{Q(y, \neigh_{-\nu}(y))}\right) \label{eqn:proof2} \\
  &\geq \varphi(r(x, y)) = \alpha(x, y), \nonumber
 \end{align}
 using the definitions of $\alpha_{\rev}$ and $\beta$ (see \autoref{sec:general_auxiliary}), and Results 1 and 6 of \autoref{lemma:varphi}.

 Therefore,
  \begin{align*}
  P_{\rev}(x, A \setminus \{x\}) &\geq \frac{1}{2}\sum_{\nu \in\{-1, +1\}} \int_{A \cap R_{x} \cap \neigh_{\nu}(x) \cap \neigh_{-\nu}(x)^\mathsf{c} \setminus \{x\}} Q(x, \d y) \, \alpha(x, y) \cr
  &= \frac{1}{2} \int_{A \cap R_{x} \setminus \{x\}} \left(\ind(y \in \neigh_{-1}(x) \cap \neigh_{+1}(x)^\mathsf{c}) + \ind(y \in \neigh_{+1}(x) \cap \neigh_{-1}(x)^\mathsf{c})\right) Q(x, \d y) \, \alpha(x, y) \cr
  &= \frac{1}{2} \int_{A \setminus \{x\}} Q(x, \d y) \, \alpha(x, y) = \frac{1}{2} P_{\MH}(x, A \setminus \{x\}),
 \end{align*}
 using in the first equality that, for any $x \in \Xset$,
 \[
  \ind(y \in \neigh_{-1}(x) \cap \neigh_{+1}(x)^\mathsf{c}) + \ind(y \in \neigh_{+1}(x) \cap \neigh_{-1}(x)^\mathsf{c}) = 1 \quad \text{for \quad $Q(x, \cdot \,)$-almost all $y$},
 \]
 which can be deduced from \autoref{lemma:P_rev} (the formal proof is similar to that of \autoref{lemma3} in \autoref{sec:proofs}), and in the second equality that $\alpha(x, y) = 0$ for all $(x, y) \notin R$.

 The case $x \notin \interior$ is done similarly after noticing that, in this case, there exists $\nu \in \{-1, +1\}$ such that $Q(x, A) = Q_\nu(x, A)$ and $Q(x, \neigh_{\nu}(x)) = 1$ by \autoref{lemma:P_rev}.

   The result
 \[
  P_{\rev}(x, A \setminus \{x\}) \geq \frac{1}{2} P_{\MH}(x, A \setminus \{x\}), \quad \text{for all $x \in \Xset$ and $A \in \Xalg$},
 \]
 implies that, for any $f$ such that $\pi(f^2) < \infty$,
  \[
   \vara(f, P_{\rev}) \leq 2\vara(f, P_{\MH}) + \var[f(X)], \quad X \sim \pi,
 \]
 by \autoref{Lemma:ordering}.

 The proof of
 \begin{equation}\label{eqn:proof3}
  \vara_\lambda(f, P_{\lifted}) \leq \vara_\lambda(f, P_{\rev}),
 \end{equation}
 in the case where $\varphi(r) = 1 \wedge r$ follows from Theorem 7 in \cite{andrieu2021peskun}. It can be readily verified that the result holds for any function $\varphi$ considered here using the same proof technique as in \cite{andrieu2021peskun}. The proof is concluded by taking the limit $\lambda \rightarrow 1$ on both side of the inequality in \eqref{eqn:proof3}.
 \end{proof}

 We now make a few remarks about \autoref{thm2}. Firstly, the result reflects a notion of \textit{universality}: in addition to being stated in a context of general state-spaces, it is valid for \emph{any} target distribution $\pi$, \emph{any} proposal mechanism (defining the generalized MH algorithm from which $P_{\rev}$ and $P_{\lifted}$ are constructed, be it the usual MH, multiple-try, pseudo-marginal or reversible jump), and for \emph{any} way of inducing directions (through the definition of $\{\neigh_{-1}(x)\}$ and $\{\neigh_{+1}(x)\}$). Also, it holds for any function $f$ belonging to a class of functions (the functions $f$ such that $\pi(f^2) < \infty$ and $\lim_{\lambda \rightarrow 1} \vara_\lambda(f, P_{\lifted}) = \vara(f, P_{\lifted})$). In different contexts, performance comparison has been established through an order between \textit{average} asymptotic variances \citep{chen2012optimal} and \textit{maximal} asymptotic variances \citep{frigessi1992optimal}; such an ordering is thus implied by results such as \autoref{thm2}.

Secondly, we observe in the proof that the factor
\[
 \left(\frac{1}{Q(x, \neigh_{\nu}(x))} \wedge \frac{1}{Q(y, \neigh_{-\nu}(y))}\right)
\]
in \eqref{eqn:proof2} is an explanation of the difference between $P_{\rev}$ and $P_{\MH}$. It is a measure of whether the directional neighbourhoods $\neigh_{-1}(x)$ and $\neigh_{+1}(x)$ (and $\neigh_{-1}(y)$ and $\neigh_{+1}(y)$) are well balanced or not. When $Q(x, \neigh_{-1}(x)) = Q(x, \neigh_{+1}(x)) = 1 / 2$ for all $x \in \Xset$, we say that we are in a situation of \textit{perfect balance}. In this situation, that factor is always equal to 1 and $P_{\rev}(x, A \setminus \{x\}) \geq P_{\MH}(x, A \setminus \{x\})$ for all $x \in \Xset$ and $A \in \Xalg$. We thus have a guarantee that the lifted sampler outperforms the MH one, that is $\vara(f, P_{\lifted}) \leq \vara(f, P_{\MH})$ whenever $\lim_{\lambda \rightarrow 1} \vara_\lambda(f, P_{\lifted}) = \vara(f, P_{\lifted})$. It is for instance the case with the guided walk, by symmetry of the proposal distribution in random walk MH. Theorem 7 in \cite{andrieu2021peskun} is stated under the general framework of this section (where $\pi$ and $Q(x, \cdot \,)$  do not necessarily admit densities with respect to a common dominating measure), but with the restriction to the situation of perfect balance and to the function $\varphi(r) = 1 \wedge r$. Note that before introducing the inequalities in the proof, we had that $P_{\rev}(x, A \setminus \{x\}) = P_{\MH}(x, A \setminus \{x\})$ under the condition $Q(x, \neigh_{-1}(x)) = Q(x, \neigh_{+1}(x)) = 1 / 2$  for all $x \in \Xset$ (see \eqref{eqn:proof1}). This indicates that the inequalities in the proof are tight because, under this condition, the inequalities yield $P_{\rev}(x, A \setminus \{x\}) \geq P_{\MH}(x, A \setminus \{x\})$ but we know that $P_{\rev}(x, A \setminus \{x\}) = P_{\MH}(x, A \setminus \{x\})$.

Thirdly, in a context that fits within the framework of this section, \cite{gagnon2023} prove that $P_{\rev}(x, A \setminus \{x\}) / P_{\MH}(x, A \setminus \{x\}) \rightarrow 1/2$ for some $x$ and $A$, as a dimension parameter grows without bounds (Proposition 2 in that paper). We can therefore conclude that our ordering between $P_{\rev}$ and $P_{\MH}$ is (essentially) optimal, in the sense that it is (essentially) not possible to obtain a better ordering without additional assumptions. In \autoref{sec:optimality}, we show that our bound on the asymptotic variances is (essentially) optimal. \autoref{thm2} is thus (essentially) optimal.

Fourthly, in a context of comparison between two samplers, the magnitude of the asymptotic variances, which is proportional to $\var[f(X)]$ (see \eqref{eqn:asymp_var}), is irrelevant. Thus, it is appropriate to focus on standardized functions. When $\var[f(X)] = 1$, the expression in \eqref{eqn:asymp_var} corresponds to the integrated autocorrelation time. In this case, the factor $2$ and the additive term in the bound in \autoref{thm2} on the asymptotic variances (and thus on the integrated autocorrelation times) are independent of all problem parameters, whether it is the dimension of the state-space, the sample size in Bayesian statistics contexts, etc. This implies that the generalized lifted sampler is \emph{at worst} comparable to the generalized MH algorithm, in the sense that the asymptotic variances can be larger, but at worst, the factor $2$ and the additive term explaining the difference between the asymptotic variances do not deteriorate when a problem parameter changes. All that suggests the following practical suggestion: \textit{when one has a way of inducing directions which does not significantly increase the computational cost of the algorithm, better do it as there is not much to lose by lifting a sampler, but there is potentially a lot to gain (as observed many times in the past)}.

Finally, the Peskun-type ordering between $P_{\rev}$ and $P_{\MH}$ in \autoref{thm2} also allows to obtain an order between the spectral gaps \citep[Lemma 33]{andrieu2018uniform}: the spectral gap of $P_{\rev}$ is greater than or equal to half of that of $P_{\MH}$. This result leads to an order between the mixing times in total variation, provided that $P_{\MH}$ has a spectral gap (see, e.g., \cite{andrieu2018hypocoercivity}). Therefore, if one has established an order between the mixing times of $P_{\lifted}$ and $P_{\rev}$, then an order between the mixing times of $P_{\lifted}$ and $P_{\MH}$ follows. Lifted samplers do not dominate their reversible counterparts in terms of mixing times because the former can exhibit quasi-periodic or even periodic behaviour (see, e.g., \cite{vialaret2020convergence}).

\section{Illustration: The case of common dominating measure}\label{sec:examples}

In this section, we make more concrete the framework of \autoref{sec:general} by considering the case where $\pi$ and $Q(x, \cdot \,)$ admit densities with respect to a common dominating measure. In this case, $P_{\MH}$ corresponds to the usual MH kernel. It is simple and explicit, which allows simple and explicit forms of $P_{\rev}$ and $P_{\lifted}$. All these kernels are presented in \autoref{sec:samplers_dominating}. Next, we illustrate \autoref{thm2} by presenting empirical results and instances of lifted samplers in two concrete situations fitting within the framework of this section: one where the state-space is finite and partially ordered (\autoref{sec:example1}), and one where the state-space is the real numbers, and thus, totally ordered (\autoref{sec:example2}). We use this illustration to highlight that, even in situations where it is suspected that the lifted sampler will not offer a great performance, the asymptotic variances are at worst comparable to those of MH, as guaranteed by \autoref{thm2}. In particular, we show that, in very unfavourable situations, the upper bound in \autoref{thm2} is attained. The code to reproduce our numerical results is available online (see ancillary files on \url{https://arxiv.org/abs/2405.15952}).

\subsection{MH, auxiliary and lifted samplers}\label{sec:samplers_dominating}

As mentioned, we consider the case where $\pi$ and $Q(x, \cdot \,)$ admit densities with respect to a common dominating measure, denoted by $\d y$ for simplicity. With an abuse of notation, we also denote these densities by $\pi$ and $Q(x, \cdot \,)$. To simplify, we consider that $\Xset \subset \re^d$ is the support of $\pi$, with $d$ a positive integer, and that, for all $x \in \Xset$, $\neigh(x) \subset \Xset$ is the support of $Q(x, \cdot \,)$, where $\{\neigh(x)\}$ yields a neighbourhood structure on $\Xset$. We also consider that $Q(x, \{x\}) = 0$ for all $x \in \Xset$ (i.e., the probability is equal to 0, but not necessarily the density). Finally, we consider that: for any $(x, y) \in \Xset^2$,
\[
 y \in \neigh(x) \textit{ if and only if } x \in \neigh(y).
\]

Under the framework just described, the conclusion of \autoref{Thm:Tierney} holds, meaning that $P_{\MH}$ is $\pi$-reversible, in the case where $P_{\MH}$ using the usual acceptance probability is defined for all $x \in \Xset$ as
\[
P_{\MH}(x, \d y) = Q(x, \d y) \, \alpha(x, y) + \delta_{x}(\d y) \int (1 - \alpha(x, u)) \, Q(x, \d u),
\]
with, for all $(x, y) \in \Xset^2$, $\alpha(x, y) = 0$ whenever $y \notin \neigh(x)$, and otherwise,
\[
 \alpha(x, y) = 1 \wedge \frac{\pi(y) \, Q(y, x)}{\pi(x) \, Q(x, y)},
\]
noting that $R = \{(x, y): x \in \Xset, y \in \neigh(x)\}$. For all $x \in \Xset$, the MH algorithm thus proceeds with a proposal $y \sim Q(x, \cdot \,)$ which is accepted with probability
\[
 \alpha(x, y) = 1 \wedge \frac{\pi(y) \, Q(y, x)}{\pi(x) \, Q(x, y)},
\]
in which case the next state of the Markov chain is $y$. Otherwise, the proposal is rejected and the next state of the Markov chain is $x$.

In the auxiliary sampler, we define the directional neighbourhoods as follows: for all $x \in \Xset$, $\neigh_{-1}(x)$ and $\neigh_{+1}(x)$ are such that $\neigh_{-1}(x) \cup \neigh_{+1}(x) = \neigh(x)$ and either $\neigh_{-1}(x) \cap \neigh_{+1}(x) = \varnothing$ or $\neigh_{-1}(x) \cap \neigh_{+1}(x) = \{ x \}$. Typically, when the state-space is discrete, $x \notin \neigh(x)$ and $\neigh_{-1}(x) \cap \neigh_{+1}(x) = \varnothing$, as in our first example in \autoref{sec:example1}. When the state-space is continuous, $x \in \neigh(x)$ and $\neigh_{-1}(x) \cap \neigh_{+1}(x) = \{ x \}$, but $Q(x, \{x\}) = 0$, as in our second example in \autoref{sec:example2}. We also require that the directional neighbourhoods satisfy the following: for any $(x, y) \in \Xset^2$ and $\nu \in \{-1, +1\}$, $y \in \neigh_\nu(x)$ if and only if $x \in \neigh_{-\nu}(y)$, and for such a related couple $(x, y)$, $Q(x, \neigh_\nu(x)) > 0$ and $Q(y, \neigh_{-\nu}(y)) > 0$.

Consequently, \autoref{ass:sets} is satisfied and \autoref{prop:rev_derivative} holds, implying that $P_{\rev}$ is $\pi$-reversible, where
\[
 P_{\rev}(x, \d y) = Q_{\rev}(x, \d y) \, \alpha_{\rev}(x, y) + \delta_{x}(\d y) \int (1 - \alpha_{\rev}(x, u)) \, Q_{\rev}(x, \d u), \quad \text{for all $x \in \Xset$},
\]
$Q_{\rev}(x,\cdot\,) = (1/2) Q_{-1}(x, \cdot\,) + (1/2)Q_{+1}(x, \cdot\,)$ if $x \in \interior$ and $Q_{\rev}(x,\cdot\,) = (1/2)\delta_{x}+(1/2)Q(x,\cdot\,)$ otherwise, recalling the definitions of $Q_\nu$ and $\alpha_{\rev}$ in \autoref{sec:general_auxiliary}. For all $x \in \Xset$, the auxiliary sampler thus proceeds with a proposal $y \sim Q_{\rev}(x,\cdot\,)$ which is accepted with probability
\[
 \alpha_{\rev}(x, y) = 1 \wedge \frac{\pi(y) \, Q(y, x)}{\pi(x) \, Q(x, y)} \frac{Q(x, \neigh_{\nu}(x))}{Q(y, \neigh_{-\nu}(y))},
\]
for some $\nu \in \{-1, +1\}$, except if $(x, y) \notin R$, in which case $\alpha_{\rev}(x, y) = 0$, or if $y = x$, in which case $\alpha_{\rev}(x, y)) = 1$. As with the MH algorithm, if the proposal is accepted, the next state of the Markov chain is $y$; otherwise, the next state of the Markov chain is $x$.

\autoref{prop:generalized_lifted} also holds, meaning that $P_{\lifted}$ satisfies a skewed detailed balance (recall the definition of $P_{\lifted}$ in \autoref{sec:general_lifted}). When the current state of the Markov chain $(x, \nu) \in \Xset \times \{-1, +1\}$ is such that $x \in \interior$ or $x \in \bnd$ with $Q(x, \neigh_\nu(x)) = 1$, the lifted sampler proceeds with a proposal $y \sim Q_\nu(x, \cdot \,)$ which is accepted with probability
\[
 1 \wedge \frac{\pi(y) \, Q(y, x)}{\pi(x) \, Q(x, y)} \frac{Q(x, \neigh_{\nu}(x))}{Q(y, \neigh_{-\nu}(y))},
\]
 in which case the next state of the Markov chain is $(y, \nu)$. Otherwise, the proposal is rejected and the next state of the Markov chain is $(x, -\nu)$. If the current state of the Markov chain $(x, \nu) \in \Xset \times \{-1, +1\}$ is such that $x \in \bnd$ with $Q(x, \neigh_\nu(x)) = 0$, the next state of the Markov chain is automatically set to $(x, -\nu)$.

\subsection{Example 1: Partially-ordered finite state-space}\label{sec:example1}

In this example, we consider that $\Xset$ is a finite state-space which admits a partial order. This partial order will be exploited to induce directions to follow by the lifted sampler. We consider more precisely that $x = (x_1, \ldots, x_n)$, where each component $x_i$ can be of two types. Many statistical contexts fit within this framework, such as the modelling of binary data using networks or graphs and in variable selection. Indeed, for the former, $\Xset$ can be parameterized such that $\Xset=\{-1,+1\}^n$, where for example for an Ising model, $x_i \in \{-1, +1\}$ represents the state of a spin. For variable selection, $\Xset=\{0,1\}^n$ and $x_i\in\{0,1\}$ indicates whether or not the $i$-th covariate is included in the model employed.

The use of lifted algorithms to sample from distributions defined on partially-ordered finite state-spaces has been thoroughly studied in \cite{gagnon2023}. In particular, the Ising model has been studied. Here, we provide numerical results that are complementary to those presented there. The specific model that we study is a two-dimensional Ising model. For this model, the state-space $\Xset = (V_\eta, E_\eta)$ is a $\eta \times \eta$ square lattice regarded here as a square matrix in which each element takes either the value $-1$ or $+1$. We write each state as a vector as above: $x = (x_1, \ldots, x_n)$, where $n = \eta^2$. The states can be encoded as follows: the values of the components on the first line are $x_1,  \ldots, x_\eta$, those on the second line $x_{\eta + 1}, \ldots, x_{2\eta}$, and so on. The probability mass function is given by
\[
 \pi(x) = \frac{1}{Z} \exp\left(\sum_{i} \alpha_i x_i + \lambda \sum_{\langle i j \rangle} x_i x_j\right),
\]
where $\alpha_1, \ldots, \alpha_n \in \re$ and $\lambda > 0$ are fixed parameters, $Z$ is the normalizing constant and the notation $\langle i j \rangle$ indicates that sites $i$ and $j$ are nearest neighbours. The notion of neighbourhood on $(V_\eta, E_\eta)$ should not be confused with that on $\Xset$ on which the samplers rely. The neighbourhood of a site $i\in V_\eta$ comprises, when they exist, its North-South-East-West neighbours on the lattice. The parameter $\alpha := (\alpha_1, \ldots, \alpha_n)$ is often referred to as the \textit{external field} and $\lambda$ plays the role of a \textit{spatial correlation parameter}. The external field tends to polarize each spin, while the spatial correlation tends to create patches of identical spin states.

To simulate an Ising model, the MH algorithm typically proceeds by proposing to flip a single bit at each iteration (hence, $x \notin \neigh(x)$). More formally, the neighbourhood structure $\{\neigh(x)\}$ is given by $\neigh(x) = \{ y \in \Xset: \sum_i |x_i - y_i| = 2\}$. A uniform proposal distribution is often employed. This strategy has been shown to be often inefficient in \cite{zanella2019informed}, who proposed a generic approach to sample from distributions defined on discrete state-spaces. The approach leverages local target-distribution information and is thus less naive than a uniform approach. It possesses appealing properties and is referred to as \textit{locally balanced}. The terminology \textit{locally balanced} comes form the fact that, in the limit, when the state-space becomes larger and larger (but the neighbourhoods have a fixed size and proposed moves are thus local), there is no need for an accept-reject step anymore; the proposal distribution leaves the distribution $\pi$ invariant. The locally-balanced proposal distribution that we use in this example is the \textit{Barker proposal distribution}, with
\[
 q(x, y) = \frac{\pi(y)/\pi(x)}{1 + \pi(y)/\pi(x)}, \quad x \in \Xset, y \in \neigh(x),
\]
where the name is in reference to \cite{barker1965monte}'s acceptance probability choice. Note that the example fits within the framework of \autoref{sec:samplers_dominating} with the MH algorithm defined as in that section.

To explicitly define the directional neighbourhoods that are used in the lifted sampler, we leverage that $\Xset$ admits a partial order. Indeed, an inclusion-based partial order on $\Xset$ can be defined through a set:
\begin{align*}
\R := \left\{(x, y) \in \Xset \times \Xset : \{i\,:\,x_i = +1\} \subset
\{i\,:\,y_i = +1\}\right\}.
\end{align*}
Pairs $(x,y)\in\Xset^2$ with $x\neq y$ are said to be \textit{comparable} when either $(x,y)\in\R$ or $(y,x)\in\R$ and are said \textit{incomparable} otherwise. The existence of incomparable pairs $(x,y)$ with $x\neq y$ (those with the same number of $+1$ components) represents the difference with a totally-ordered set such as $\na$ or $\re$ in which every pair of different elements is comparable. We denote $x \prec y$ whenever $(x, y)\in\R$ and $x \neq y$. Hence, setting  $\neigh(x) = \{ y \in \Xset: \sum_i |x_i - y_i| = 2\}$ yields neighbourhoods with states $y$ that are all comparable with $x$. In the lifted sampler, we set $\neigh_{-1}(x) = \{y \in \neigh(x) : y \prec x\}$ and $\neigh_{+1}(x) = \{y \in \neigh(x) : x \prec y\}$. States $y$ in $\neigh_{+1}(x)$, for instance, are equal to $x$ except for one component which is flipped from $-1$ to $+1$. Note that the directional neighbourhoods satisfy \autoref{ass:sets}. Therefore, Propositions \ref{prop:rev_derivative} and \ref{prop:generalized_lifted} hold, and $P_{\rev}$ is $\pi$-reversible and $P_{\lifted}$ satisfies a skewed detailed balance, with the auxiliary and lifted samplers defined as in \autoref{sec:samplers_dominating}.

We are now ready to present simulation results. They are about a common problem in statistical physics which is to estimate the average \textit{magnetisation} of an Ising model, the magnetisation being defined as the mapping $x \mapsto \sum_{i=1}^n x_i$. The numerical results that we present are more specifically about the asymptotic variances of MCMC estimators of the expectation of the magnetisation, in fact, of the standardized version of the magnetisation, meaning that the expectation is subtracted from the mapping and the result of this subtraction is divided by the standard deviation. The standardized version is considered to allow comparable asymptotic variances when varying problem parameters, which is what we do in the numerical experiment, as described below. It also allows to fit within the framework of comparison between samplers described at the end of \autoref{sec:general_comparison}.

We consider a base target distribution for which $n = 10^2$ and the spatial correlation is moderate and more precisely $ \lambda = 0.5$. We set the $\alpha_i$ as follows:  $\alpha_i = -\mu + \epsilon_i$ if the column index is smaller than or equal to $\lfloor \eta / 2 \rfloor$ and $\alpha_i = \mu + \epsilon_i$ otherwise, where $\mu = 1$, the $\epsilon_i$ are independent uniform random variables on the interval $(-0.1, +0.1)$ and $\lfloor\, \cdot \,\rfloor$ is the floor function. We can thus think of the external field as a matrix with negative values on the left, and positive ones on the right. The base target is moderately rough, in the sense that it concentrates on a subset of the state-space, but with directional neighbourhoods that have a smoothly varying mass on this subset (implying similar $Q(x, \neigh_{-1}(x))$ and $Q(x, \neigh_{+1}(x))$).

The base target distribution described above is favourable for the lifted sampler with the locally-balanced proposal distribution which leverages variations in the target distribution and create persistent movement (as seen in \autoref{fig_results_Ising} by looking at the starting points on the left of the plots). In \autoref{fig_results_Ising} (a), we observe the impact of dealing with larger systems for a moderately rough target, by increasing $\eta$ from $10$ to $50$. Increasing $\eta$ in this case leads to longer paths along which the state-space can be explored, which is again favourable for lifted samplers. In \autoref{fig_results_Ising} (a), we observe that the lifted sampler scales better with the system size than the MH algorithm. When comparing the latter with the reversible counterpart to the lifted sampler, we observe a stable absolute difference of asymptotic variances, which translates into a diminishing relative difference. With \autoref{fig_results_Ising} (b), we consider an opposite situation which is increasingly unfavourable for the lifted sampler, with a fixed system size of $\eta = 10$ but an increasing $\mu$ yielding an increasingly rougher target distribution with an increasing level of concentration. As the level of concentration increases, the mass of the directional neighbourhoods becomes increasingly variable (implying increasingly different $Q(x, \neigh_{-1}(x))$ and $Q(x, \neigh_{+1}(x))$). Also, when the mass is concentrated on few configurations, it leaves not much room for persistent movement for the lifted sampler, and it thus loses its advantage. Such a situation corresponds to one where it is suspected that the lifted sampler will not offer a great performance. When the lifted sampler is at the mode and leaves it, it ``wastes'' an iteration because it tries continuing in the same direction whereas the MH sampler has the possibility to return to the mode the following iteration. In \autoref{fig_results_Ising} (b), we also observe an increasing difference in asymptotic variances between $P_{\rev}$ and $P_{\MH}$, with that of $P_{\rev}$ reaching the upper bound provided by \autoref{thm2} of $2\vara(f, P_{\MH}) + 1$. Note that the numerical results are based on 100 independent runs of 1,000,000 iterations for each algorithm and each value of $\eta$ and $\mu$, with burn-ins of 100,000.

\begin{figure}[ht]
\centering
$\begin{array}{cc}
    \hspace{-0mm} \includegraphics[width=0.50\textwidth]{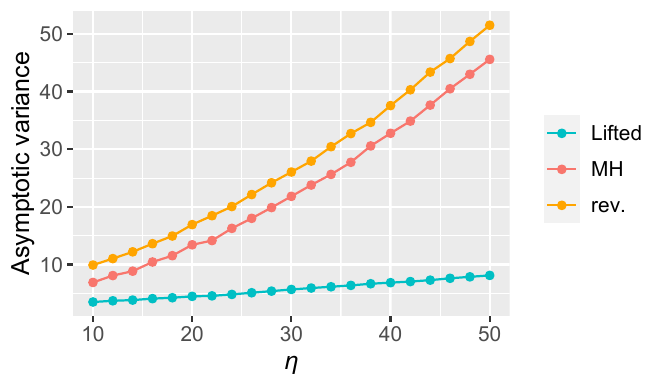} & \hspace{-4mm} \includegraphics[width=0.50\textwidth]{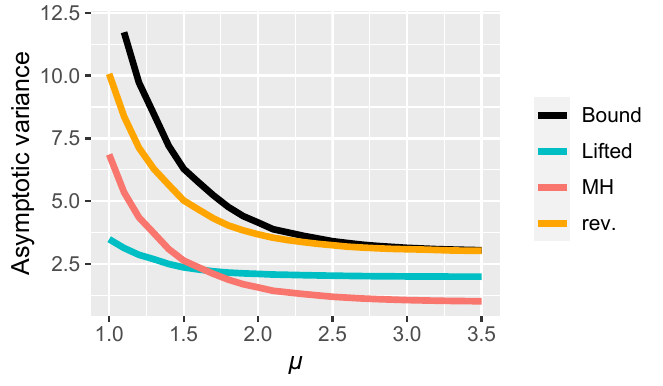} \cr
    \hspace{-11mm} \textbf{(a)} & \hspace{-12mm} \textbf{(b)}
\end{array}$
  \vspace{-2mm}
\caption{Asymptotic variance of the MCMC estimator of the standardized version of the mapping $x \mapsto \sum_{i=1}^n x_i$ for the MH algorithm, the lifted sampler and the reversible counterpart of the latter, all using the Barker proposal distribution when: (a) $\eta$ increases from 10 to 50 and the other parameters are kept fixed ($\mu = 1$ and $\lambda = 0.5$); (b) $\mu$ increases from 1 to 3.5 and the other parameters are kept fixed ($\eta = 10$ and $\lambda = 0.5$); in (b), the upper bound provided in \autoref{thm2} is also presented.}\label{fig_results_Ising}
\end{figure}
\normalsize

\subsection{Example 2: Totally-ordered continuous state-space}\label{sec:example2}

In this section, we study the simple case where $\Xset = \re$. When the MH sampler is the random walk Metropolis algorithm, the lifted version described in \autoref{sec:samplers_dominating} corresponds to the \textit{guided walk} \citep{gustafson1998guided} if $\neigh_{-1}(x) = (-\infty, x]$ and $\neigh_{+1}(x) = [x, +\infty)$ for all $x \in \Xset$. The guided walk has essentially the same computational cost as the random walk Metropolis algorithm, and has been proved to outperform the latter for any target distribution $\pi$ \citep[Theorem 7]{andrieu2021peskun}. This is a consequence of the symmetry of the normal distribution used in random walk Metropolis. This symmetry leads to a situation of perfect balance, as described in \autoref{sec:general_comparison}, where $Q(x, \neigh_{-1}(x)) = Q(x, \neigh_{+1}(x)) = 1 / 2$ for all $x \in \Xset$.

But what if the proposal distribution is not symmetrical? This is the case that we now study. In this case, one may suspect that the lifted sampler will not offer a great performance in situations of significant differences between $Q(x, \neigh_{-1}(x))$ and $Q(x, \neigh_{+1}(x))$. In our example, we consider that the proposal distribution is the continuous version of the Barker proposal distribution described in \autoref{sec:example1}. It has recently been proposed in \cite{livingstone2019robustness} and has the following density:
\[
 Q(x, y) = 2 \, \frac{\varphi_\sigma(y - x)}{1 + \exp\left(-(y - x) \, \nabla \log \pi(x)\right)}, \quad x \in \re, y \in \neigh(x) = \re,
\]
where $\varphi_\sigma$ is the density of a normal distribution with a mean of $0$ and a standard deviation of $\sigma > 0$, and $\nabla \log \pi$ is the gradient of the target log-density. The gradient injects skewness: the more substantial is $\nabla \log \pi(x)$, the more asymmetrical is the density $Q(x, \cdot \,)$. The scale parameter $\sigma$ is a tuning parameter. We study the natural case where the target distribution is a standard normal. Note that the example fits within the framework of \autoref{sec:samplers_dominating} with the MH algorithm defined as in that section.

 We present empirical results in \autoref{table:1} and \autoref{fig_results_Barker}. The directional neighbourhoods are as above: $\neigh_{-1}(x) = (-\infty, x]$ and $\neigh_{+1}(x) = [x, +\infty)$ for all $x \in \Xset$. The directional neighbourhoods thus satisfy \autoref{ass:sets}. Therefore, Propositions \ref{prop:rev_derivative} and \ref{prop:generalized_lifted} hold, and $P_{\rev}$ is $\pi$-reversible and $P_{\lifted}$ satisfies a skewed detailed balance, with the auxiliary and lifted samplers defined as in \autoref{sec:samplers_dominating}. In \autoref{table:1}, we present results for the MH sampler, its lifted version and the reversible counterpart of the latter. We more precisely present acceptance rates and asymptotic variances of the MCMC estimators of $\pi f$ when $f$ is the identity mapping $x \mapsto x$, for different values of $\sigma$. Note that the identity mapping is already standardized given that the target distribution is a standard normal. The values of $\sigma$ in \autoref{table:1} represent optimal values for the different samplers (at least according to our grid search). For the MH algorithm, the optimal value is $2.5$, associated with an acceptance rate of $62\%$. When using this algorithm, the asymptotic variance of the MCMC estimator of the mean is $1.94$. The lifted sampler suffers from instability among the mass of the directional neighbourhoods $Q(x, \neigh_{-1}(x))$ and $Q(x, \neigh_{+1}(x))$. To compensate for smaller acceptance rates, $\sigma$ has to be reduced. The optimal value for this sampler is $2.0$ and is associated with an acceptance rate of $46\%$. When using this algorithm, the asymptotic variance of the MCMC estimator of the mean is $2.31$. We observe in \autoref{fig_results_Barker} that the reason why the lifted sampler manages to offer a performance that is not so bad (despite significantly smaller $\sigma$ and acceptance rate) is persistent movement. Nevertheless, the impact of persistent movement is not significant enough. In addition, the computational cost associated to sampling from conditional distributions $Q_\nu(x, \cdot \,)$ and computing the normalizing constant of the latter is quite high in this case: runtime of the lifted sampler is about 130 times larger.\footnote{We do not claim optimality of our implementation, but we employed commonly used techniques for sampling from the conditional distributions $Q_\nu(x, \cdot \,)$ and computing the normalizing constant of the latter.} Regarding the reversible counterpart of the lifted sampler, the optimal value of $\sigma$ is $2.2$, which is in between those of the two other samplers. When using this value, the asymptotic variance is $4.08$, which is not so far below the bound provided in \autoref{thm2} of $4.99$. The smallest difference between the bound and the asymptotic variance among those calculated from \autoref{table:1} is $0.76$ when $\sigma = 2.5$. Note that the numerical results are based on one run of 1,000,000 iterations for each algorithm and each value of $\sigma$, started in stationarity.

\begin{table}[ht]
 \centering
\small
\begin{tabular}{l rr rr rr r}
\toprule
  &  \multicolumn{2}{c}{MH sampler} & \multicolumn{2}{c}{Lifted sampler} & \multicolumn{2}{c}{Reversible counterpart} & Upper bound  \cr
  \cmidrule(l){2-3} \cmidrule(l){4-5} \cmidrule(l){6-7}
  & Acc. rate & Asymp. var. & Acc. rate & Asymp. var. & Acc. rate & Asymp. var. & Asymp. var. \cr
\midrule
$\sigma = 2.0$ & $71\%$ & $2.10$ & $46\%$ & $2.31$ & $46\%$ & $4.17$ & $5.20$ \cr
$\sigma = 2.2$ & $67\%$ & $2.00$ & $43\%$ & $2.35$ & $43\%$ & $4.08$ & $4.99$ \cr
$\sigma = 2.5$ & $62\%$ & $1.94$ & $38\%$ & $2.47$ & $38\%$ & $4.13$ & $4.89$ \cr
\bottomrule
\end{tabular}
  \caption{Acceptance rates and asymptotic variances for a MH sampler with a Barker proposal distribution, its lifted version and the reversible counterpart of the latter, along with the upper bound on the asymptotic variances provided in \autoref{thm2}, for different values of $\sigma$.}\label{table:1}
\end{table}

\begin{figure}[ht]
\centering
$\begin{array}{cc}
\textbf{MH sampler with $\sigma = 2.5$} & \textbf{Lifted sampler with $\sigma = 2.0$} \cr
   \vspace{-0mm}\textbf{Asymptotic variance = 1.94} & \textbf{Asymptotic variance = 2.31} \cr
    \hspace{-2mm} \includegraphics[width=0.50\textwidth]{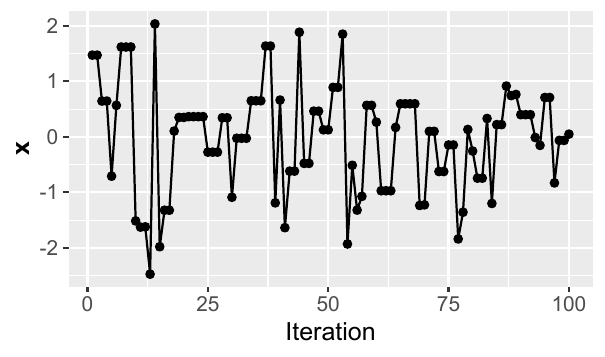} & \hspace{-4mm} \includegraphics[width=0.50\textwidth]{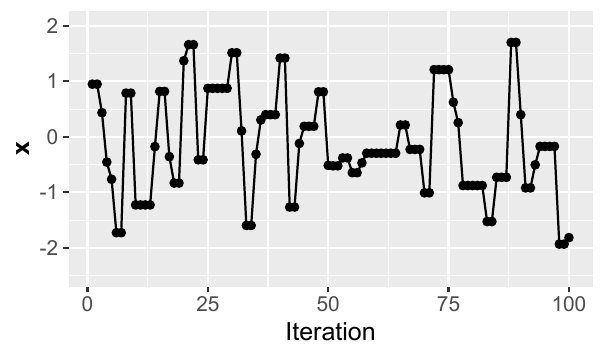}
\end{array}$
  \vspace{-2mm}
\caption{Trace plots for the MH algorithm on the left panel and the lifted sampler on the right panel, both with the Barker proposal distribution.}\label{fig_results_Barker}
\end{figure}
\normalsize

\section{Conclusion}\label{sec:conclusion}

In this paper, we leveraged the generalized MH algorithm of \cite{tierney1998note} to introduce a generalized lifted sampler. This allowed to understand how to construct a lifted sampler under a general framework in which the target and proposal distributions do not necessarily admit densities with respect to a common dominating measure. In the same way the generalized MH algorithm of \cite{tierney1998note} unified seemingly different reversible MCMC methods under the same framework, the generalized lifted sampler unifies seemingly different lifted samplers such as the non-reversible jump algorithm of \cite{gagnon2019NRJ} and the non-reversible simulated tempering of \cite{sakai2016irreversible}. Under this general framework, theoretical guarantees were provided: the generalized lifted sampler may produce MCMC estimators with larger asymptotic variances than the generalized MH algorithm, but the ratio of asymptotic variances is essentially bounded above by 2.

In \autoref{sec:optimality}, we highlight that different choices of directions may lead to lifted samplers that are significantly different, ranging from a lifted sampler which is worst than its MH counterpart to a lifted sampler which is better. This raises the question of the existence of an optimal choice of directions and, in the affirmative, of a theoretical guarantee that the optimal lifted sampler is better than its MH counterpart.

The focus of this article has been on the lifting technique in the classical sense, following \cite{andrieu2021peskun}, where the lifted sampler alternates between  two sub-probability measures $T_{-1}$ and $T_{+1}$ which satisfy the skewed detailed balance. The terminology \textit{lifting} is sometimes used in a more general context and refers to as a technique by which the state-space is \textit{lifted} to include auxiliary variables for performance improvement \citep{apers2021characterizing}. From this perspective, Hamiltonian Monte Carlo \citep{Duane1987, neal2011mcmc} and piecewise deterministic Monte Carlo methods \citep{vanetti2017piecewise, bouchard2018bouncy, 10.1214/18-AOS1715} are lifted samplers. For such classes of Monte Carlo methods, whether it is possible or not to establish theoretical guarantees as in our article is an open question. A challenge in these cases seems to be the identification of a natural and direct point of comparison, meaning MCMC algorithms that have a strong connection with the lifted samplers (as in our article).

Drawing inspiration from the recent work of \cite{ascolani2025fast}, we can extend the framework of \autoref{sec:general} to the case where the generalized auxiliary and lifted samplers use $D$ directions by exploiting a mixture structure, that is the Markov kernels correspond to a mixture of $D$ kernels, and \autoref{thm2} still holds. In \cite{ascolani2025fast}, a lifted sampler exploiting a mixture structure is introduced in the context of Bayesian finite mixture models and an upper bound is obtained on the asymptotic variances, in the same spirit as \autoref{thm2}. In the special case where the mixture model in \cite{ascolani2025fast} has two components, $D = 1$ and their sampler is a special case of the generalized lifted sampler introduced in \autoref{sec:general_lifted}. In this case, our generic upper bound in \autoref{thm2} allows to retrieve that in \cite{ascolani2025fast} derived within their specific framework.

\section*{Acknowledgments}

The authors thank two anonymous referees for constructive comments that led to an improved article. Philippe Gagnon acknowledges support from NSERC (Natural Sciences and Engineering Research Council of Canada) and CANSSI (Canadian Statistical Sciences Institute). Florian Maire acknowledges support from NSERC.

\bibliographystyle{rss}
\bibliography{references}

\appendix

\section{Proofs}\label{sec:proofs}

In this section, we present proofs of results in the same order as these results appeared in the paper. We start with the proof of \autoref{lemma:varphi}, followed by the proofs of Propositions \ref{prop:rev_derivative} and \ref{prop:generalized_lifted}. Before presenting the proof of \autoref{prop:rev_derivative}, we present and prove one lemma (\autoref{lemma3}) that will be useful. Before presenting the proof of \autoref{prop:generalized_lifted}, we provide the proof that a skewed detailed balance implies that the product measure $\pi \otimes \mathcal{U}\{-1, +1\}$ is an invariant distribution. We finish the section with \autoref{Lemma:ordering} and its proof.

\begin{proof}[Proof of \autoref{lemma:varphi}]
We prove the results in the same order as they appear in \autoref{lemma:varphi}.

Proof of Result 1. Let $r>0$ and $\eps>0$, then $\varphi(r+\eps)=r\varphi(1/(r+\eps))+\eps\varphi(1/(r+\eps))$. Assume that $\varphi$ is decreasing, that is $r<r'$ implies that $\varphi(r) > \varphi(r')$. Then, $\varphi(1/(r+\eps)) > \varphi(1/r)$ and
$$
\varphi(r+\eps)=r\varphi(1/(r+\eps))+\eps\varphi(1/(r+\eps))>r\varphi(1/r)=\varphi(r),
$$
which leads to a contradiction.

Proof of Result 2. Let $r>0$. Take $\eps>0$ and note that $\varphi(r+\eps)-\varphi(r)\geq 0$ (by Result 1). Moreover,
$$
\varphi(r+\eps)=(r+\eps)\varphi(1/(r+\eps))\leq (r+\eps)\varphi(1/r)\leq \varphi(r)+\eps,
$$
using again Result 1 and the fact that $\varphi \leq 1$. Thus, $\varphi(r)\leq \varphi(r+\eps)\leq \varphi(r)+\eps$ and $\varphi$ is continuous from the right. A symmetric argument shows that $\varphi$ is also continuous from the left, thus $\varphi$ is continuous.

Proof of Result 3. We have that $0 \leq \varphi(r) = r \varphi (1 / r) \leq r$, which allows to conclude by taking $r \downarrow 0$.

Proof of Result 4. The null function verifies $\varphi(r) = r \varphi(1 / r) = 0$ for all $r \in (0, \infty)$. Now consider that $\varphi$ is not the null function. Then, $\varphi(r) > 0$ for some $r$ and $\varphi(r') > 0$ for all $r' \geq r$ by Result 1. For any $r'$ as small as we want, we also have $\varphi(r') > 0$ given that $\varphi(r') = r' \varphi(1 / r') > 0$. Result 1, thus allows to confirm that $\varphi(r) > 0$ for all $r \in (0, \infty)$.

 Proof of Result 5. If $\varphi$ is the null function, then $0 = \varphi(ab)\geq \varphi(a)\varphi(b) = 0$ for any $a>0, b>0$. Now consider that $\varphi$ is not the null function; thus, it is a positive function. We consider two cases. First, we consider that $a\leq 1, b > 0$. Thus, $1/ab\geq 1/b$ and $\varphi(1/ab)\geq \varphi(1/b)$ from Result 1. Then, using that $\varphi(1/a)\leq 1$, we have that
$$
\varphi(ab)=ab\varphi(1/ab)\geq ab\varphi(1/b)\geq ab\varphi(1/b)\varphi(1/a)=\varphi(a)\varphi(b).
$$
Now, consider that $a> 1, b > 0$. Given that $b=\varphi(b)/\varphi(1/b)$ and $a=\varphi(a)/\varphi(1/a)$, we have that
$$
\varphi(ab)=\frac{\varphi(a)}{\varphi(1/a)}\frac{\varphi(b)}{\varphi(1/b)}\varphi(1/ab)=\varphi(a)\varphi(b)\frac{\varphi(1/ab)}{\varphi(1/a)\varphi(1/b)}\geq \varphi(a)\varphi(b)\,,
$$
where the last inequality follows from the fact that $\varphi(1/ab)\geq \varphi(1/a)\varphi(1/b)$ using the result of the first case: $\varphi(a'b') \geq \varphi(a')\varphi(b')$ with $a' = 1/a \leq 1$ and $b' = 1 / b > 0$.

Proof of Result 6. We have that $1/ab\geq 1/a $ so $\varphi(1/ab)\geq \varphi(1/a)$ (Result 1), but $\varphi(ab)=ab\varphi(1/ab)\geq ab\varphi(1/a)=b\varphi(a)$.
\end{proof}

Before presenting the proof of \autoref{prop:rev_derivative}, we present and prove one lemma that will be useful. Recall the definitions presented in Sections \ref{sec:general_MH} and \ref{sec:general_auxiliary}.

\begin{Lemma}\label{lemma3}
 Suppose that \autoref{ass:sets} holds. Let $g: \Xset^2 \to \re$ be such that
 \[
  g(x, y) = \ind((x, y) \in R_{-1}  \cap R_{+1}^\mathsf{c}) + \ind((x, y) \in R_{+1} \cap R_{-1}^\mathsf{c}),
 \]
  for all $(x, y) \in \Xset^2$. Then, $g = 1$ $\mu$-almost everywhere and $\mu^T$-almost everywhere. In particular, for any measurable function $\vartheta:\Xset^2\to\re$,
 \[
  \int \vartheta \, \d\mu = \sum_{\nu \in \{-1, +1\}} \int_{R_\nu \cap R_{-\nu}^\mathsf{c}} \vartheta \, \d\mu,
 \]
 and
  \[
  \int \vartheta \, \d\mu^T = \sum_{\nu \in \{-1, +1\}} \int_{R_\nu \cap R_{-\nu}^\mathsf{c}} \vartheta \, \d\mu^T.
  \]
\end{Lemma}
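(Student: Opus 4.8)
The plan is to reduce the almost-everywhere statement to two measure-zero claims and then read those off from the neighbourhood identities already recorded in \autoref{sec:general_others}. First I would observe that the events $R_{-1}\cap R_{+1}^\mathsf{c}$ and $R_{+1}\cap R_{-1}^\mathsf{c}$ are disjoint, so $g$ takes values only in $\{0,1\}$, and $g(\bx,\by)=0$ exactly when $(\bx,\by)$ lies in neither of them, i.e.\ when
\[
 (\bx,\by)\in (R_{-1}\cap R_{+1})\cup(R_{-1}^\mathsf{c}\cap R_{+1}^\mathsf{c}).
\]
Hence it suffices to show that this set is both $\mu$-null and $\mu^T$-null; the remaining ``in particular'' assertions are then immediate.

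For the $\mu$ part, I would use the representation $\mu(C)=\int\pi(\d\bx)\,Q(\bx,\{\by:(\bx,\by)\in C\})$, which gives
\[
 \mu(R_{-1}\cap R_{+1})=\int\pi(\d\bx)\,Q\bigl(\bx,\neigh_{-1}(\bx)\cap\neigh_{+1}(\bx)\bigr),\qquad \mu(R_{-1}^\mathsf{c}\cap R_{+1}^\mathsf{c})=\int\pi(\d\bx)\,Q\bigl(\bx,\neigh_{-1}(\bx)^\mathsf{c}\cap\neigh_{+1}(\bx)^\mathsf{c}\bigr).
\]
Both integrands vanish for every $\bx$ by the two identities derived in \autoref{sec:general_others}, namely $Q(\bx,\neigh_{-1}(\bx)\cap\neigh_{+1}(\bx))=0$ and $Q(\bx,\neigh_{-1}(\bx)^\mathsf{c}\cap\neigh_{+1}(\bx)^\mathsf{c})=0$; subadditivity of $\mu$ then yields $\mu(\{g=0\})=0$.

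For the $\mu^T$ part, I would use the swapped representation $\mu^T(C)=\int\pi(\d\by)\,Q(\by,\{\bx:(\bx,\by)\in C\})$, so that the task reduces to showing $Q(\by,\{\bx:(\bx,\by)\in\{g=0\}\})=0$ for every $\by$. Fixing $\by$, I would rewrite the $\bx$-section of $\{g=0\}$ using the equivalence $\by\in\neigh_\nu(\bx)\Leftrightarrow\bx\in\neigh_{-\nu}(\by)$ (and its contrapositive) to obtain
\[
 \{\bx:(\bx,\by)\in\{g=0\}\}=(\neigh_{-1}(\by)\cap\neigh_{+1}(\by))\cup(\neigh_{-1}(\by)^\mathsf{c}\cap\neigh_{+1}(\by)^\mathsf{c}),
\]
and then apply the very same two identities, now evaluated at $\by$, together with subadditivity of $Q(\by,\cdot\,)$, to conclude that this section has zero $Q(\by,\cdot\,)$-measure; integrating against $\pi$ gives $\mu^T(\{g=0\})=0$. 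This coordinate swap is the only step that requires genuine care; everything else is bookkeeping with the neighbourhood properties.

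Finally, for the ``in particular'' identities: since $g=1$ $\mu$-a.e.\ (resp.\ $\mu^T$-a.e.), a measurable $\vartheta$ agrees $\mu$-a.e.\ (resp.\ $\mu^T$-a.e.) with $\vartheta g=\vartheta\,\ind_{R_{-1}\cap R_{+1}^\mathsf{c}}+\vartheta\,\ind_{R_{+1}\cap R_{-1}^\mathsf{c}}$, a sum of two terms with disjoint supports; integrating and splitting the sum over the disjoint sets $R_\nu\cap R_{-\nu}^\mathsf{c}$, $\nu\in\{-1,+1\}$, yields the two displayed formulas (understood in the extended sense whenever the integrals are defined). I anticipate no real obstacle.
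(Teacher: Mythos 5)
Your proposal is correct and follows essentially the same route as the paper: you reduce $\{g\neq 1\}$ to the two sets $R_{-1}\cap R_{+1}$ and $R_{-1}^\mathsf{c}\cap R_{+1}^\mathsf{c}$ and kill them using the identities $Q(\bx,\neigh_{-1}(\bx)\cap\neigh_{+1}(\bx))=0$ and $Q(\bx,\neigh_{-1}(\bx)^\mathsf{c}\cap\neigh_{+1}(\bx)^\mathsf{c})=0$ established in \autoref{sec:general_others}. Your explicit coordinate-swap computation for $\mu^T$ is just an unpacked version of the paper's observation that $g$ is symmetric (via $\by\in\neigh_\nu(\bx)\Leftrightarrow\bx\in\neigh_{-\nu}(\by)$), and your disjointness remark neatly sidesteps the paper's vacuous ``$g=2$'' case.
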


\begin{proof}
 For $(x, y) \in \Xset^2$,
   \begin{equation*}
g(x,y) =\left\{
\begin{array}{lll}
  0 & \text{if} & (x,y)\in (R_{-1}^\mathsf{c} \cap R_{+1}^\mathsf{c}) \cup (R_{+1} \cap R_{+1}^\mathsf{c}) \cup (R_{-1} \cap R_{-1}^\mathsf{c}) \cup (R_{+1} \cap R_{-1}),\\
   2 & \text{if} & (x,y)\in R_{-1} \cap R_{-1}^\mathsf{c} \cap R_{+1} \cap R_{+1}^\mathsf{c}, \\
   1 & \text{otherwise}.
\end{array}
\right.
\end{equation*}

Clearly,
\[
 \mu(R_{-1} \cap R_{-1}^\mathsf{c} \cap R_{+1} \cap R_{+1}^\mathsf{c}) = 0.
\]
Also, by the union bound,
\begin{align*}
&\mu( (R_{-1}^\mathsf{c} \cap R_{+1}^\mathsf{c}) \cup (R_{+1} \cap R_{+1}^\mathsf{c}) \cup (R_{-1} \cap R_{-1}^\mathsf{c}) \cup (R_{+1} \cap R_{-1})) \cr
&\quad \leq \mu(R_{-1}^\mathsf{c} \cap R_{+1}^\mathsf{c}) + \mu(R_{+1} \cap R_{+1}^\mathsf{c}) + \mu(R_{-1} \cap R_{-1}^\mathsf{c}) + \mu(R_{+1} \cap R_{-1}).
\end{align*}
Clearly,
\[
 \mu(R_{+1} \cap R_{+1}^\mathsf{c}) = \mu(R_{-1} \cap R_{-1}^\mathsf{c}) = 0.
\]
By definition,
\[
 \mu(R_{-1} \cap R_{+1}) = \int \ind((x, y) \in R_{-1} \cap R_{+1}) \, \mu(\d x, \d y) = \int \pi(\d x) \, Q(x, \neigh_{-1}(x) \cap \neigh_{+1}(x)),
\]
and
\[
 \mu(R_{-1}^\mathsf{c} \cap R_{+1}^\mathsf{c}) = \int \ind((x, y) \in R_{-1}^\mathsf{c} \cap R_{+1}^\mathsf{c}) \, \mu(\d x, \d y) = \int \pi(\d x) \, Q(x, \neigh_{-1}(x)^\mathsf{c} \cap \neigh_{+1}(x)^\mathsf{c}),
\]
given that $R_{\nu}^\mathsf{c} = \{(x, y) \in \Xset^2: y \in \neigh_\nu(x)\}^\mathsf{c} = \{(x, y) \in \Xset^2: y \notin \neigh_\nu(x)\}$. But, we have that $Q(x, \neigh_{-1}(x) \cap \neigh_{+1}(x)) = 0$ and $Q(x, \neigh_{-1}(x)^\mathsf{c} \cap \neigh_{+1}(x)^\mathsf{c}) = 0$ for all $x \in \Xset$ by \autoref{lemma:P_rev}. This implies that $g = 1$ $\mu$-almost everywhere. We also have that $g = 1$ $\mu^T$-almost everywhere given that $g$ is symmetric (using that $y \in \neigh_\nu(x)$ \textit{if and only if} $x \in \neigh_{-\nu}(y)$).
 \end{proof}

\begin{proof}[Proof of \autoref{prop:rev_derivative}]
We start by proving that, for all $(x, y) \in R$, $0 < r_{\rev}(x, y) < \infty$ and $r_{\rev}(x, y) = 1 / r_{\rev}(y, x)$, which guarantees that everything in the following is well defined. We have that, for all $(x, y) \in R$, $r_{\rev}(x, y) = r(x, y) \, \beta(x, y)$ and $0 < r(x, y) < \infty$ and $0 < \beta(x, y) < \infty$, the former following from \autoref{Prop:Tierney} and the latter from the definition of $\beta$ (see \autoref{sec:general_auxiliary}). Also, for all $(x, y) \in R$, $r_{\rev}(x, y) = r(x, y) \, \beta(x, y) = (1 / r(y, x)) (1 / \beta(y, x)) = 1 / r_{\rev}(y, x)$ by \autoref{Prop:Tierney} and the fact that $\beta(x, y) = Q(x, \neigh_{\nu}(x)) / Q(y, \neigh_{-\nu}(y)) = (Q(y, \neigh_{-\nu}(y)) / Q(x, \neigh_{\nu}(x)))^{-1}$ when $(x, y) \in R_{\nu} \cap R_{-\nu}^\mathsf{c} \cap \Delta^\mathsf{c} \cap R$ for some $\nu \in \{-1, +1\}$, in which case $(y, x) \in R_{-\nu} \cap R_{\nu}^\mathsf{c} \cap \Delta^\mathsf{c} \cap R$, and $\beta(x, y) = 1$ otherwise, meaning that $(x, y) \notin R_{\nu} \cap R_{-\nu}^\mathsf{c} \cap \Delta^\mathsf{c} \cap R$ for all $\nu \in \{-1, +1\}$ which is equivalent to $(y, x) \notin R_{\nu} \cap R_{-\nu}^\mathsf{c} \cap \Delta^\mathsf{c} \cap R$ for all $\nu \in \{-1, +1\}$.

To prove that for all $(x, y) \in R$,
\[
 r_{\rev}(x, y) = \frac{\d\mu_{\rev}^T}{\d\mu_{\rev}}(x, y) = r(x, y) \, \beta(x, y),
\]
we will prove that
\[
 \int_{R} \phi \, r_{\rev} \, \d \mu_{\rev} = \int_{R} \phi \, \d \mu_{\rev}^T,
\]
for any bounded measurable function $\phi:\Xset^2\to\re$. This will allow to conclude that the restriction of the measures $\mu_{\rev}$ and $\mu_{\rev}^T$ to $R$ are mutually absolutely continuous given that $r_{\rev}$ is positive on $R$ (recall the definition of $r_{\rev}$ in \autoref{sec:general_auxiliary}).

We use the following decomposition:
\[
\Xset\times\Xset=\{\interior\times\interior\}\cup\{\interior\times\bnd\}\cup\{\bnd\times\interior\}\cup \{\bnd\times\bnd\}.
\]
We use the linearity of the integral and analyse each part separately.

For any bounded measurable function $\phi:\Xset^2\to\re$, by the definition of $\mu_{\rev}$,
\begin{align*}
\int_{\{\interior \times \interior\}\cap R} \phi \, r_{\rev} \, \d \mu_{\rev}
&=\sum_{\nu \in \{-1, +1\}}\int_{\{\interior \times \interior\}\cap R} \phi(x, y) \, r_{\rev}(x, y) \, \pi(\d x) \, (1/2) \, Q_{\nu}(x,\d y) \cr
&=\sum_{\nu \in \{-1, +1\}}\int_{\{\interior \times \interior\}\cap R \cap R_\nu \cap R_{-\nu}^\mathsf{c} \cap \Delta^\mathsf{c}}\phi(x, y) \, r_{\rev}(x, y) \, \pi(\d x) \, (1/2) \, Q_{\nu}(x, \d y),
\end{align*}
using that $\pi(\d x) \, Q_{\nu}(x, \d y)$ assigns null measure to $\{\interior \times \interior\} \cap (R_\nu \cap R_{-\nu}^\mathsf{c})^\mathsf{c}$ and $\{\interior \times \interior\} \cap \Delta$, for $\nu \in\{-1, +1\}$. Indeed,
\begin{align*}
& \int_{\{\interior \times \interior\}\cap (R_\nu^\mathsf{c} \cup R_{-\nu})} \pi(\d x) \, Q_{\nu}(x,\d y) \cr
 &\quad= \int_{\{\interior \times \interior\}} \ind((x, y) \in R_\nu^\mathsf{c} \cup R_{-\nu}) \, \pi(\d x) \, Q_{\nu}(x,\d y) \cr
 &\quad\leq \int_{\{\interior \times \interior\}} \ind((x, y) \in R_\nu^\mathsf{c}) \, \pi(\d x) \, Q_{\nu}(x,\d y)  + \int_{\{\interior \times \interior\}} \ind((x, y) \in R_{-\nu}) \, \pi(\d x) \, Q_{\nu}(x,\d y) \cr
 &\quad= \int_{\interior} \pi(\d x) \, Q_{\nu}(x, \neigh_\nu(x)^\mathsf{c} \cap \interior)  + \int_{\interior} \pi(\d x) \, Q_{\nu}(x, \neigh_{-\nu}(x) \cap \interior) \cr
 &\quad\leq \int_{\interior} \pi(\d x) \, Q_{\nu}(x, \neigh_\nu(x)^\mathsf{c})  + \int_{\interior} \pi(\d x) \, Q_{\nu}(x, \neigh_{-\nu}(x) ) \cr
 &\quad= \int_{\interior} \pi(\d x) \, \frac{Q(x, \neigh_\nu(x)^\mathsf{c} \cap \neigh_\nu(x))}{Q(x, \neigh_\nu(x)}  + \int_{\interior} \pi(\d x) \, \frac{Q(x, \neigh_{-\nu}(x) \cap \neigh_\nu(x))}{Q(x, \neigh_\nu(x)} = 0,
\end{align*}
and
\begin{align*}
 \int_{\{\interior \times \interior\} \cap \Delta} \pi(\d x) \, Q_{\nu}(x,\d y) = \int_{\{\interior \times \interior\}} \ind(x = y) \, \pi(\d x) \, Q_{\nu}(x,\d y) &=\int_{\interior} \pi(\d x) \, Q_{\nu}(x,\{x\}) \cr
 &=\int_{\interior} \pi(\d x) \, \frac{Q(x,\{x\})}{Q(x, \neigh_\nu(x))} = 0,
\end{align*}
using the union bound, the definition of $R_\nu$, that $R_{\nu}^\mathsf{c} = \{(x, y) \in \Xset^2: y \in \neigh_\nu(x)\}^\mathsf{c} = \{(x, y) \in \Xset^2: y \notin \neigh_\nu(x)\}$, and that $Q(x, \neigh_{-1}(x) \cap \neigh_{+1}(x)) = 0$ and $Q(x, \{x\}) = 0$ for all $x \in \Xset$ by \autoref{lemma:P_rev}. Therefore, for $\nu \in\{-1, +1\}$ and for any measurable function $\vartheta:\Xset^2 \rightarrow \re$,
\begin{align}\label{eqn:null_measure}
  \int_{\{\interior \times \interior\} \cap (R_\nu \cap R_{-\nu}^\mathsf{c})^\mathsf{c}}  \vartheta(x, y) \, \pi(\d x) \, Q_\nu(x, \d y) = \int_{\{\interior \times \interior\} \cap \Delta}  \vartheta(x, y) \, \pi(\d x) \, Q_\nu(x, \d y) = 0.
\end{align}

Also,
\begin{align*}
  &\int_{\{\interior \times \interior\}\cap R \cap R_\nu \cap R_{-\nu}^\mathsf{c} \cap \Delta^\mathsf{c}} \phi(x, y) \, r_{\rev}(x, y) \, \pi(\d x) \, (1/2) \, Q_{\nu}(x, \d y) \cr
   &\quad = \int_{\{\interior \times \interior\}\cap R \cap R_\nu \cap R_{-\nu}^\mathsf{c} \cap \Delta^\mathsf{c}} \frac{\phi(x, y)}{2 Q(y, \neigh_{-\nu}(y))} \, r(x, y) \, \mu(\d x, \d y) \cr
  &\quad= \int_{\{\interior \times \interior\}\cap R \cap R_\nu \cap R_{-\nu}^\mathsf{c} \cap \Delta^\mathsf{c}} \frac{\phi(x, y)}{2 Q(y, \neigh_{-\nu}(y))} \, \mu^T(\d x, \d y) \cr
  &\quad =  \int_{\{\interior \times \interior\}\cap R \cap R_\nu \cap R_{-\nu}^\mathsf{c} \cap \Delta^\mathsf{c}} \phi(x, y) \,  \pi(\d y) \, (1/2) \, Q_{-\nu}(y, \d x) \cr
  &\quad=  \int_{\{\interior \times \interior\}\cap R} \phi(x, y) \,  \pi(\d y) \, (1/2) \, Q_{-\nu}(y, \d x).
\end{align*}
In the first equality, we used the definition of $r_{\rev}$. In the second equality, we used that $\int_R \phi \, r \, \d\mu = \int_R \phi \, \d\mu^T$ for all $\phi$ bounded measurable (\autoref{Prop:Tierney}). Note that $\phi(x, y) / (2 Q(y, \neigh_{-\nu}(y))$ is not necessarily bounded, but we can limit ourselves to the case where $Q(y, \neigh_{-\nu}(y))$ is bounded from below by an arbitrarily small constant and use a limiting argument with the monotone convergence theorem (where we separate the cases i) $\phi(x, y) \geq 0$ and ii) $\phi(x, y) < 0$). In the third equality, we used that $(x, y) \in R_\nu$ if and only if $(y, x) \in R_{-\nu}$. In the fourth equality, we used the same arguments as before to obtain an equality by omitting the intersection with $\Delta^\mathsf{c}$ and $R_\nu \cap R_{-\nu}^\mathsf{c}$, recalling that $(x, y) \in R_\nu$ if and only if $(y, x) \in R_{-\nu}$. Therefore,
\begin{align*}
 \int_{\{\interior \times \interior\}\cap R} \phi \, r_{\rev} \, \d \mu_{\rev}
&=\sum_{\nu \in \{-1, +1\}}\int_{\{\interior \times \interior\}\cap R} \phi(x, y) \, r_{\rev}(x, y) \, \pi(\d x) \, (1/2) \, Q_{\nu}(x,\d y) \cr
&= \sum_{\nu \in \{-1, +1\}}\int_{\{\interior \times \interior\}\cap R} \phi(x, y) \,  \pi(\d y) \, (1/2) \, Q_{-\nu}(y, \d x) \cr
&= \int_{\{\interior \times \interior\}\cap R} \phi \, \d \mu^T_{\rev}.
\end{align*}

Using similar arguments as above, we obtain that
\begin{align*}
\int_{\{\interior \times \bnd\}\cap R} \phi \, r_{\rev} \, \d \mu_{\rev}
&=\sum_{\nu \in \{-1, +1\}}\int_{\{\interior \times \bnd\}\cap R \cap R_\nu \cap R_{-\nu}^\mathsf{c} \cap \Delta^\mathsf{c}}\frac{\phi(x, y)}{2 Q(y, \neigh_{-\nu}(y))} \, r(x, y) \, \mu(\d x, \d y).
\end{align*}
But, under \autoref{ass:sets}, for any $(x, y) \in \{\interior \times \bnd\}\cap R_\nu$, we have that $Q(x, \neigh_\nu(x)) > 0$, $Q(y, \neigh_{-\nu}(y)) > 0$ and $Q(y, \neigh_{-1}(y)) Q(y, \neigh_{+1}(y)) = 0$, which, together with $1 = Q(y, \Xset) = Q(y, \neigh_{-1}(y)) + Q(y, \neigh_{+1}(y))$, implies that $Q(y,\neigh_{-\nu}(y)) = 1$. Therefore,
\begin{align*}
&\int_{\{\interior \times \bnd\}\cap R} \phi \, r_{\rev} \, \d \mu_{\rev} \cr
&\quad=\sum_{\nu \in \{-1, +1\}}\int_{\{\interior \times \bnd\}\cap R \cap R_\nu \cap R_{-\nu}^\mathsf{c} \cap \Delta^\mathsf{c}}\frac{\phi(x, y)}{2} \, \mu^T(\d x, \d y) \cr
&\quad= \sum_{\nu \in \{-1, +1\}}\int_{\{\interior \times \bnd\}\cap R \cap R_\nu \cap R_{-\nu}^\mathsf{c} \cap \Delta^\mathsf{c}} \phi(x, y) \,  \pi(\d y) \, (1/2) \, Q(y, \d x) \cr
&\quad= \int_{\{\interior \times \bnd\}\cap R} \phi(x, y) \left(\ind((x, y) \in R_{-1} \cap R_{+1}^\mathsf{c}) + \ind((x, y) \in R_{+1} \cap R_{-1}^\mathsf{c})\right) \pi(\d y) \, (1/2) \, Q(y, \d x) \cr
&\quad= \int_{\{\interior \times \bnd\}\cap R} \phi(x, y) \, \pi(\d y) \, (1/2) \, Q(y, \d x) \cr
&\quad= \int_{\{\interior \times \bnd\}\cap R} \phi(x, y) \,  \pi(\d y) \left((1/2) \, Q(y, \d x) + (1/2) \, \delta_{x}(\d y)\right) \cr
&\quad= \int_{\{\interior \times \bnd\}\cap R} \phi \, \d \mu^T_{\rev}.
\end{align*}
In the first equality, we used that $\int_R \phi \, r \, \d\mu = \int_R \phi \, \d\mu^T$ for all $\phi$ bounded measurable (\autoref{Prop:Tierney}). In the third equality, we used that $\pi(\d y) \, Q(y, \d x)$ assigns null measure to $\Delta$ and an argument similar as in \eqref{eqn:null_measure} (recall that $Q(y, \{y\}) = 0$ for all $y \in \Xset$ by \autoref{lemma:P_rev}). In the fourth equality, we applied \autoref{lemma3}.

Symmetrically,
\begin{align*}
\int_{\{\bnd \times \interior\}\cap R} \phi \, r_{\rev} \, \d \mu_{\rev}
&=\int_{\{\bnd \times \interior\}\cap R }\frac{\phi(x, y)}{2} r_{\rev}(x, y) \, \mu(\d x, \d y) \cr
&=\int_{\{\bnd \times \interior\}\cap R \cap \Delta^\mathsf{c}}\frac{\phi(x, y)}{2} r_{\rev}(x, y) \, \mu(\d x, \d y) \cr
&=\sum_{\nu \in \{-1,+1\}} \int_{\{\bnd \times \interior\}\cap R \cap R_\nu \cap R_{-\nu}^\mathsf{c} \cap \Delta^\mathsf{c}}\frac{\phi(x, y)}{2}\frac{Q(x, \neigh_{\nu}(x))}{Q(y, \neigh_{-\nu}(y))} \, r(x, y) \, \mu(\d x, \d y) \cr
&=\sum_{\nu \in \{-1,+1\}} \int_{\{\bnd \times \interior\}\cap R \cap R_\nu \cap R_{-\nu}^\mathsf{c} \cap \Delta^\mathsf{c} }\frac{\phi(x, y)}{2} \frac{1}{Q(y, \neigh_{-\nu}(y))} \, \mu^T(\d x, \d y) \cr
&=\sum_{\nu \in \{-1,+1\}}\int_{\{\bnd \times \interior\}\cap R \cap R_\nu \cap R_{-\nu}^\mathsf{c} \cap  \Delta^\mathsf{c}} \phi(x, y) \,\pi(\d y) \, (1/2) Q_{-\nu}(y, \d x) \cr
&=\sum_{\nu \in \{-1,+1\}}\int_{\{\bnd \times \interior\}\cap R }\phi(x, y) \,\pi(\d y) \, (1/2) Q_{-\nu}(y, \d x) \cr
&=\int_{\{\bnd \times \interior\}\cap R }\phi(x, y) \, \pi(\d y)  \left((1/2) Q_{-1}(y, \d x) + (1/2) Q_{+1}(y, \d x)\right) \cr
&= \int_{\{\bnd \times \interior\}\cap R} \phi \, \d \mu^T_{\rev}.
\end{align*}
In the second equality, we used that $\mu(\d x, \d y) = \pi(\d x) \, Q(x, \d y)$ assigns null measure to $\Delta$ and an argument similar as in \eqref{eqn:null_measure}. In the third equality, we applied \autoref{lemma3} and used the definition of the function $r_{\rev}$. In the fourth equality, we used that, similarly as in the previous case, $Q(x,\neigh_{\nu}(x)) = 1$ for any $(x, y) \in \{\bnd \times \interior\}\cap R_\nu$. We also used that $\int_R \phi \, r \, \d\mu = \int_R \phi \, \d\mu^T$ for all $\phi$ bounded measurable (that is \autoref{Prop:Tierney}; again we can use a limiting argument with the monotone convergence theorem). In the fifth equality, we used that $(x, y) \in R_\nu$ if and only if $(y, x) \in R_{-\nu}$. In the sixth equality, we used that $\pi(\d y) \, Q_{-\nu}(y, \d x)$ assigns null measure to $\{\bnd \times \interior\} \cap (R_\nu \cap R_{-\nu}^\mathsf{c})^\mathsf{c}$ and $\{\bnd \times \interior\} \cap \Delta$ and a similar argument as in \eqref{eqn:null_measure} (recall that $(x, y) \in R_\nu$ if and only if $(y, x) \in R_{-\nu}$).

Finally, using the definition of $r_{\rev}$ and $\mu_{\rev}$,
\begin{align*}
\int_{\{\bnd \times \bnd\}\cap R} \phi \, r_{\rev} \, \d \mu_{\rev}
&=\int_{\{\bnd \times \bnd\}\cap R \cap \Delta }\phi(x, y) \, r(x, y) \, \pi(\d x) \, (1/2)\, \delta_{x}(\d y) \cr
&\quad + \int_{\{\bnd \times \bnd\}\cap R \cap \Delta^\mathsf{c} }\phi(x, y) \, r_{\rev}(x, y) \, \pi(\d x) \, (1/2)\, Q(x, \d y),
\end{align*}
because
\[
 \int_{\{\bnd \times \bnd\}\cap R \cap \Delta^\mathsf{c} }\phi(x, y) \, r_{\rev}(x, y) \, \pi(\d x) \, (1/2)\, \delta_{x}(\d y) = 0,
\]
and
\[
  \int_{\{\bnd \times \bnd\}\cap R \cap \Delta }\phi(x, y) \, r_{\rev}(x, y) \, \pi(\d x) \, (1/2)\, Q(x, \d y) = 0,
\]
given that $\mu(\d x, \d y) = \pi(\d x) \, Q(x, \d y)$ assigns null measure to $\Delta$ by \autoref{lemma:P_rev} and using an argument similar as in \eqref{eqn:null_measure}. Using that $r(x, y) = 1$ when $y = x$,
\begin{align*}
 \int_{\{\bnd \times \bnd\}\cap R \cap \Delta }\phi(x, y) \, r(x, y) \, \pi(\d x) \, (1/2)\, \delta_{x}(\d y) &= \int_{\{\bnd \times \bnd\}\cap R \cap \Delta }\phi(x, y) \, \pi(\d x) \, (1/2)\, \delta_{x}(\d y) \cr
 &= \int_{\{\bnd \times \bnd\}\cap R \cap \Delta }\phi(x, y) \, \pi(\d y) \, (1/2)\, \delta_{y}(\d x).
\end{align*}

Also,
\begin{align*}
& \int_{\{\bnd \times \bnd\}\cap R \cap \Delta^\mathsf{c} }\phi(x, y) \, r_{\rev}(x, y) \, \pi(\d x) \, (1/2) \, Q(x, \d y) \cr
&\quad =  \int_{\{\bnd \times \bnd\}\cap R \cap \Delta^\mathsf{c} }\phi(x, y) \, r_{\rev}(x, y)  \, (1/2) \, \mu(\d x, \d y) \cr
&\quad = \sum_{\nu \in \{-1, +1\}}  \int_{\{\bnd \times \bnd\}\cap R \cap R_\nu \cap R_{-\nu}^\mathsf{c} \cap \Delta^\mathsf{c} }\phi(x, y) \, r_{\rev}(x, y) \,  (1/2) \, \mu(\d x, \d y) \cr
&\quad = \sum_{\nu \in \{-1, +1\}}  \int_{\{\bnd \times \bnd\}\cap R \cap R_\nu \cap R_{-\nu}^\mathsf{c} \cap \Delta^\mathsf{c} }\phi(x, y) \, r(x, y) \,  (1/2) \, \mu(\d x, \d y) \cr
&\quad = \sum_{\nu \in \{-1, +1\}}  \int_{\{\bnd \times \bnd\}\cap R \cap R_\nu \cap R_{-\nu}^\mathsf{c} \cap \Delta^\mathsf{c} }\phi(x, y) \, (1/2)\, \mu^T(\d x, \d y) \cr
&\quad =  \int_{\{\bnd \times \bnd\}\cap R \cap \Delta^\mathsf{c} }\phi(x, y) \, (1/2)\, \mu^T(\d x, \d y) \cr
 &\quad = \int_{\{\bnd \times \bnd\}\cap R \cap \Delta^\mathsf{c} }\phi(x, y) \, \pi(\d y) \, (1/2)\, Q(y, \d x).
\end{align*}
 In the second and fifth equalities, we applied \autoref{lemma3}. In the third equality, we used that for all $(x, y) \in \{\bnd \times \bnd\}\cap R \cap R_\nu \cap R_{-\nu}^\mathsf{c} \cap \Delta^\mathsf{c}$,  $Q(x, \neigh_{\nu}(x)) = 1$ and $Q(y, \neigh_{-\nu}(y)) = 1$ (similarly as in the previous cases). Therefore, for all $(x, y) \in \{\bnd \times \bnd\}\cap R \cap R_\nu \cap R_{-\nu}^\mathsf{c} \cap \Delta^\mathsf{c}$, $\beta(x, y) = 1$. In the fourth equality, we used that $\int_R \phi \, r \, \d\mu = \int_R \phi \, \d\mu^T$ for all $\phi$ bounded measurable (\autoref{Prop:Tierney}).

 Putting those results together yields
\begin{align*}
 \int_{\{\bnd \times \bnd\}\cap R} \phi \, r_{\rev} \, \d \mu_{\rev} &= \int_{\{\bnd \times \bnd\}\cap R \cap \Delta }\phi(x, y) \, \pi(\d y) \, (1/2)\, \delta_{y}(\d x) \cr
 & \quad + \int_{\{\bnd \times \bnd\}\cap R \cap \Delta^\mathsf{c} }\phi(x, y) \, \pi(\d y) \, (1/2)\, Q(y, \d x)
 = \int_{\{\bnd \times \bnd\}\cap R} \phi \, \d \mu_{\rev}^T,
\end{align*}
using similar arguments as above. This concludes the proof that $r_{\rev}(x, y) = (\d\mu_{\rev}^T / \d\mu_{\rev})(x, y) = r(x, y) \, \beta(x, y)$ for all $(x, y) \in R$.
\end{proof}

Before presenting the proof of \autoref{prop:generalized_lifted}, we provide the proof that a skewed detailed balance implies that the product measure $\pi \otimes \mathcal{U}\{-1, +1\}$ is an invariant distribution (recall the definition of skewed detailed balance presented in \autoref{sec:general_lifted}). Considering an event $A \in \Xalg$ and $\nu' \in \{-1, +1\}$,
\begin{align*}
 &\sum_{\nu \in \{-1, +1\}} \int \pi(\d x) \, \frac{1}{2} \, P_{\lifted}((x, \nu), A \times \nu') \cr
  &\quad = \frac{1}{2} \sum_{\nu \in \{-1, +1\}}   \left(\ind(\nu' = \nu) \int \pi(\d x) \, T_{\nu}(x, A) + \ind(\nu' = -\nu) \int \pi(\d x) \int_A \delta_{x}(\d y) \, (1 - T_{\nu}(x, \Xset)) \right) \cr
  &\quad = \frac{1}{2} \sum_{\nu \in \{-1, +1\}}   \left(\ind(\nu' = \nu) \int_A \pi(\d x) \, T_{-\nu}(x, \Xset) + \ind(\nu' = -\nu) \int_A \pi(\d x) \, (1 - T_{\nu}(x, \Xset)) \right) \cr
  &\quad = \frac{1}{2} \int_A \pi(\d x) \left( T_{-\nu'}(x, \Xset) + 1 -  T_{-\nu'}(x, \Xset)\right) = \frac{1}{2} \int_A \pi(\d x),
\end{align*}
where the skewed detailed balance is used in the second equality to obtain
\[
 \int \pi(\d x) \, T_{\nu}(x, A) = \int_A \pi(\d x) \, T_{-\nu}(x, \Xset).
\]
Therefore, the probability to reach $A \times \nu'$ in one step, if the chain is in stationarity (under $\pi \otimes \mathcal{U}\{-1, +1\}$), is equal to the probability of $A \times \nu'$ under the stationary distribution $\pi \otimes \mathcal{U}\{-1, +1\}$.

\begin{proof}[Proof of \autoref{prop:generalized_lifted}]
 As mentioned in \autoref{sec:general_lifted}, to prove that $P_{\lifted}$ satisfies a skewed detailed balance, we need to show that the two following measures on the product space $(\Xset^2,\Xalg\otimes \Xalg)$ are equal:
$$
\mu_{+1}(A\times B) = \int_A \pi(\d x) \, T_{+1}(x, B) \quad \text{and} \quad \mu_{-1}(A\times B) = \int_B \pi(\d x) \, T_{-1}(x, A),
$$
which ensures that $P_{\lifted}$ leaves the distribution $\pi \otimes \mathcal{U}\{-1, +1\}$ invariant. To achieve this, we prove that for any bounded measurable function $\phi: \Xset^2 \rightarrow \re$,
\[
 \int \phi \, \d\mu_{+1} = \int \phi \, \d\mu_{-1}.
\]

Let us define
\[
 G_{+1} := \{(x, y) \in \Xset^2: x \in \interior \quad \text{or} \quad x \in \bnd \text{ with }  Q(x, \neigh_{+1}(x)) = 1\},
\]
and
\[
 G_{-1} := \{(x, y) \in \Xset^2: y \in \interior \quad \text{or} \quad y \in \bnd \text{ with }  Q(y, \neigh_{-1}(y)) = 1\}.
\]

  We have that
\begin{align*}
 \int \phi \, \d\mu_{+1} &= \int_{G_{+1}} \phi(x, y) \, \alpha_{\rev}(x, y) \, \pi(\d x) \, Q_{+1}(x, \d y) \cr
 &= \int_{G_{+1} \cap R \cap R_{+1} \cap R_{-1}^\mathsf{c} \cap \Delta^\mathsf{c}} \phi(x, y) \, \alpha_{\rev}(x, y) \, \pi(\d x) \, Q_{+1}(x, \d y).
\end{align*}
In the first equality, we used the definitions of $\mu_{+1}$ and $T_{+1}$. In the second equality, we used that $\alpha_{\rev}(x, y) = 0$ when $(x, y) \notin R$, that $\pi(\d x) \, Q_{+1}(x, \d y)$ assigns null measure to $G_{+1} \cap (R_{+1} \cap R_{-1}^\mathsf{c})^\mathsf{c}$ and $G_{+1} \cap \Delta$, as we now prove, and a similar argument as in \eqref{eqn:null_measure}. Let us recall that it has been shown in \autoref{sec:general_lifted} that, if $(x, \nu)$ is such that $x \in \interior$ or $x \in \bnd$ with $Q(x, \neigh_\nu(x)) = 1$, then $Q_{\nu}(x, \neigh_\nu(x)^\mathsf{c}) = 0$, $Q_{\nu}(x, \neigh_{-\nu}(x)) = 0$ and $Q_{\nu}(x, \{x\}) = 0$. Let
\[
 G_{+1}^{x} := \{x \in \Xset: x \in \interior \quad \text{or} \quad x \in \bnd \text{ with }  Q(x, \neigh_{+1}(x)) = 1\}.
\]
Recall the definitions of $R_\nu$ and $\Delta$ in \autoref{sec:general_auxiliary}. We have that
\begin{align}\label{eqn:arg_prop_generalized_lifted}
 &\int_{G_{+1} \cap (R_{+1} \cap R_{-1}^\mathsf{c})^\mathsf{c}} \pi(\d x) \, Q_{+1}(x, \d y) \cr &\quad =\int_{G_{+1}} \ind((x, y) \in R_{+1}^\mathsf{c} \cup R_{-1}) \, \pi(\d x) \, Q_{+1}(x, \d y) \cr
 &\quad \leq \int_{G_{+1}} \ind((x, y) \in R_{+1}^\mathsf{c}) \, \pi(\d x) \, Q_{+1}(x, \d y) + \int_{G_{+1}} \ind((x, y) \in R_{-1}) \, \pi(\d x) \, Q_{+1}(x, \d y) \cr
 &\quad =\int_{G_{+1}^{x}}  \pi(\d x) \, Q_{+1}(x, \neigh_{+1}(x)^\mathsf{c}) + \int_{G_{+1}^{x}} \pi(\d x) \, Q_{+1}(x, \neigh_{-1}(x)) = 0,
\end{align}
given that $R_{\nu}^\mathsf{c} = \{(x, y) \in \Xset^2: y \in \neigh_\nu(x)\}^\mathsf{c} = \{(x, y) \in \Xset^2: y \notin \neigh_\nu(x)\}$. Also,
\[
 \int_{G_{+1} \cap \Delta} \pi(\d x) \, Q_{+1}(x, \d y) =  \int_{G_{+1}} \ind(x = y) \, \pi(\d x) \, Q_{+1}(x, \d y) = \int_{G_{+1}^{x}} \pi(\d x) \, Q_{+1}(x, \{x\}) = 0.
\]

Let $G_{+1}^1 :=  \{(x, y) \in \Xset^2: x \in \interior\}$ and $G_{+1}^2 :=  \{(x, y) \in \Xset^2: x \in \bnd \text{ with }  Q(x, \neigh_{+1}(x)) = 1\}$ with $G_{+1}^1 \cup  G_{+1}^2 = G_{+1}$ and $G_{+1}^1 \cap  G_{+1}^2 = \varnothing$.
We have that
\begin{align*}
 &\int_{G_{+1} \cap R \cap R_{+1} \cap R_{-1}^\mathsf{c} \cap \Delta^\mathsf{c}} \phi(x, y) \, \alpha_{\rev}(x, y) \, \pi(\d x) \, Q_{+1}(x, \d y) \cr
 & \quad = \int_{G_{+1}^1 \cap R \cap R_{+1} \cap R_{-1}^\mathsf{c} \cap \Delta^\mathsf{c}} \phi(x, y) \,  \alpha_{\rev}(x, y) \, \pi(\d x) \, Q_{+1}(x, \d y) \cr
 &\qquad + \int_{G_{+1}^2 \cap R \cap R_{+1} \cap R_{-1}^\mathsf{c} \cap \Delta^\mathsf{c}} \phi(x, y) \,  \alpha_{\rev}(x, y) \, \pi(\d x) \, Q_{+1}(x, \d y) \cr
  & \quad = \int_{G_{+1}^1 \cap R \cap R_{+1} \cap R_{-1}^\mathsf{c} \cap \Delta^\mathsf{c}} \phi(x, y) \,  \alpha_{\rev}(x, y) \, \pi(\d x) \, (Q_{+1}(x, \d y) + Q_{-1}(x, \d y))  \cr
 &\qquad + \int_{G_{+1}^2 \cap R \cap R_{+1} \cap R_{-1}^\mathsf{c} \cap \Delta^\mathsf{c}} \phi(x, y) \,  \alpha_{\rev}(x, y) \, \pi(\d x) \, (Q_{+1}(x, \d y) + \delta_x(\d y)) \cr
   & \quad = \int_{G_{+1}^1 \cap R \cap R_{+1} \cap R_{-1}^\mathsf{c} \cap \Delta^\mathsf{c}} \phi(x, y) \,  \alpha_{\rev}(x, y) \, \pi(\d x) \, 2 Q_{\rev}(x, \d y)  \cr
 &\qquad + \int_{G_{+1}^2 \cap R \cap R_{+1} \cap R_{-1}^\mathsf{c} \cap \Delta^\mathsf{c}} \phi(x, y) \,  \alpha_{\rev}(x, y) \, \pi(\d x) \, 2 Q_{\rev}(x, \d y) \cr
 & \quad = \int_{G_{+1} \cap R \cap R_{+1} \cap R_{-1}^\mathsf{c} \cap \Delta^\mathsf{c}} \phi(x, y) \,  \alpha_{\rev}(x, y)  \, 2 \mu_{\rev}(\d x, \d y),
\end{align*}
using the definitions of $Q_{\rev}$ and $\mu_{\rev}$. In the second equality, we used that
\[
  \int_{G_{+1}^1 \cap R \cap R_{+1} \cap R_{-1}^\mathsf{c} \cap \Delta^\mathsf{c}} \phi(x, y) \,  \alpha_{\rev}(x, y) \, \pi(\d x) \, Q_{-1}(x, \d y) = 0,
\]
and
\[
 \int_{G_{+1}^2 \cap R \cap R_{+1} \cap R_{-1}^\mathsf{c} \cap \Delta^\mathsf{c}} \phi(x, y) \,  \alpha_{\rev}(x, y) \, \pi(\d x) \, \delta_x(\d y) = 0,
\]
where the former follows from the fact that the measure $\pi(\d x) \, Q_{-1}(x, \d y)$ assigns null measure to the set $R_{+1} \cap R_{-1}^\mathsf{c}$ (recall \autoref{lemma:P_rev} and \eqref{eqn:arg_prop_generalized_lifted}) and latter follows directly from the definition of $\Delta^\mathsf{c}$.

Therefore,
\begin{align*}
 \int \phi \, \d\mu_{+1} &= \int_{G_{+1} \cap R \cap R_{+1} \cap R_{-1}^\mathsf{c} \cap \Delta^\mathsf{c}} \phi(x, y) \,  \alpha_{\rev}(x, y)  \, 2 \mu_{\rev}(\d x, \d y) \cr
  &= \int_{G_{+1} \cap R \cap R_{+1} \cap R_{-1}^\mathsf{c} \cap \Delta^\mathsf{c}} \phi(x, y) \, \varphi(1 / r_{\rev}(x, y)) \, 2 \, r_{\rev}(x, y) \, \mu_{\rev}(\d x, \d y) \cr
  &= \int_{G_{+1} \cap R \cap R_{+1} \cap R_{-1}^\mathsf{c} \cap \Delta^\mathsf{c}} \phi(x, y) \, \varphi(1 / r_{\rev}(x, y)) \, 2 \, \mu_{\rev}^T(\d x, \d y) \cr
  &= \int_{G_{-1} \cap R \cap R_{+1} \cap R_{-1}^\mathsf{c} \cap \Delta^\mathsf{c}} \phi(x, y) \, \varphi(r_{\rev}(y, x)) \, 2 \, \pi(\d y) \, Q_{\rev}(y, \d x) \cr
   &= \int_{G_{-1} \cap R \cap R_{+1} \cap R_{-1}^\mathsf{c} \cap \Delta^\mathsf{c}} \phi(x, y) \, \alpha_{\rev}(y, x) \, \pi(\d y) \, Q_{-1}(y, \d x) \cr
   &= \int_{G_{-1}} \phi(x, y) \, \alpha_{\rev}(y, x) \, \pi(\d y) \, Q_{-1}(y, \d x) = \int \phi \, \d\mu_{-1}.
\end{align*}
In the second equality, we used the definitions of $\alpha_{\rev}$ and $\varphi$. In the third and fourth equalities, we used \autoref{prop:rev_derivative}. In the fourth equality, we also used that, for all $(x, y) \in G_{+1} \cap R \cap R_{+1} \cap R_{-1}^\mathsf{c} \cap \Delta^\mathsf{c}$, $Q(x, \neigh_{+1}(x)) > 0$ and $Q(y, \neigh_{-1}(y)) > 0$ under \autoref{ass:sets}, implying that $G_{+1} \cap R \cap R_{+1} \cap R_{-1}^\mathsf{c} \cap \Delta^\mathsf{c} = G_{-1} \cap R \cap R_{+1} \cap R_{-1}^\mathsf{c} \cap \Delta^\mathsf{c}$. Finally, in the fourth equality, we used the definition of $\mu_{\rev}^T$. In the fifth and sixth equalities, we used symmetrical arguments as above by recalling that $(x, y) \in R_{\nu}$ if and only if $(y, x) \in R_{-\nu}$. In the final equality, we used the definitions of $\mu_{-1}$ and $T_{-1}$. This concludes the proof.
\end{proof}

We finish this section with a lemma which is used in the proof of \autoref{thm2}. Before presenting it, we introduce notation. For two functions $f, g: \Xset \rightarrow \re$, let $\pscal{f}{g} := \int f \, g \, \d\pi$, $\|f\| := \left[\int f^2 \, \d\pi\right]^{1/2}$ and $L^2(\pi) := \{f: \|f\| < \infty\}$. For $f \in L^2(\pi)$, we can consider without loss of generality that it has a mean of 0. Let $L_0^2(\pi) := \{f: \|f\| < \infty \text{ and } \pi f = 0\}$. \autoref{Lemma:ordering} below can be seen as an extension of Lemma 33 in \cite{andrieu2018uniform}: in the latter, $(I - P)$, with $I$ being the identity operator and $P$ a $\pi$-reversible kernel, is assumed to be invertible, whereas we do not assume this, yet the same conclusion is obtained. The proof is strongly inspired from that of Lemma 33 in \cite{andrieu2018uniform}, but it uses the $\lambda$-asymptotic variance, instead of using directly the asymptotic variance.

\begin{Lemma}\label{Lemma:ordering}
  Let $P_1$ and $P_2$ be two $\pi$-reversible Markov kernels such that, for any $x \in \Xset$ and $A \in \Xalg$, $P_2(x, A \setminus \{x\}) \geq \alpha P_1(x, A \setminus \{x\})$ for some $\alpha \in (0, 1]$. Then, for any $f\in L_0^2(\pi)$,
  $$
  \vara_\lambda(f, P_2)\leq \frac{1}{\alpha}\vara_\lambda(f, P_1)+\left(\frac{1}{\alpha}-1\right)\|f\|^2, \quad \lambda \in [0, 1).
  $$
Also, by taking the limit $\lambda\uparrow 1$, the same inequality holds for the asymptotic variances given that $  \lim_{\lambda\to 1}\vara_\lambda(f, P) = \vara(f, P)$, whether the limit is finite or not, for any $\pi$-reversible kernel $P$.
\end{Lemma}
% rendu ici
\begin{proof}
As mentioned in \cite{andrieu2021peskun}, the $\lambda$-asymptotic variance can be written as
\begin{equation}
\label{eq4}
\vara_\lambda(f, P)=2\pscal{f}{(I-\lambda P)^{-1} f} - \| f\|^2.
\end{equation}
The fact that $P$ is Markov guarantees that $(I-\lambda P)$ has a bounded inverse on $L^2(\pi)$. Indeed, $\pscal{f}{(I-\lambda P)^{-1} f}=(1/\lambda)\pscal{f}{\{(1/\lambda)I- P\}^{-1} f}$ and $(1/\lambda)I- P$ is invertible given that $1/\lambda>1$ and $1$ is an upper bound of the spectrum of $P$. Thus, using Lemma 16 in \cite{andrieu2016establishing}, we have that
$$
\vara_\lambda(f, P)=2\sup\{2\pscal{f}{g}-\pscal{g}{(I-\lambda P)g}\,:\,g\in L_0^2(\pi)\}-\| f\|^2.
$$
For convenience, let us define $\D_{\lambda P}(f):=\pscal{f}{(I-\lambda P)f}$, the generalization of the Dirichlet form $\D_{P}(f)=\pscal{f}{(I-P)f}$, so that
$$
\vara_\lambda(f,P)=2\sup\{2\pscal{f}{g}-\D_{\lambda P}(g)\,:\,g\in L_0^2(\pi)\}-\|f\|^2.
$$

We now establish that $P_2(x, A \setminus \{x\}) \geq \alpha P_1(x, A \setminus \{x\})$ for all $x \in \Xset$ and $A \in \Xalg$ implies that $\D_{\lambda P_2}(g)\geq \alpha \D_{\lambda P_1}(g)$ for any $g\in L_0^2(\pi)$:
\begin{align*}
  \D_{\lambda P_2}(g) &=\lambda \pscal{g}{(I-P_2) g}+(1-\lambda)\|g\|^2 \cr
  &\geq \alpha \lambda \pscal{g}{(I-P_1) g}+(1-\lambda)\|g\|^2 \cr
  &=\alpha\lambda \|g\|^2-\alpha\lambda \pscal{g}{P_1g}+(1-\lambda)\|g\|^2 \cr
  &=\alpha \|g\|^2-\alpha\pscal{g}{\lambda P_1g}+(1-\lambda)\|g\|^2+\alpha(\lambda-1)\|g\|^2 \cr
  &=\alpha\pscal{g}{(I-\lambda P_1)g}+(1-\lambda)(1-\alpha)\|g\|^2 \cr
  &=\alpha\D_{\lambda P_1}(g)+(1-\lambda)(1-\alpha)\|g\|^2\geq \alpha\D_{\lambda P_1}(g).
\end{align*}
We can now unfold the same argument as in the proof of Lemma 33 in \cite{andrieu2018uniform} to establish that
\begin{align*}
\sup\{2\pscal{f}{g}-\D_{\lambda P_2}(g)\,:\,g\in L_0^2(\pi)\}&\leq \sup\{2\pscal{f}{g}-\alpha \D_{\lambda P_1}(g)\,:\,g\in L_0^2(\pi)\} \cr
&=\frac{1}{\alpha} \sup\{2\pscal{f}{\alpha g}- \D_{\lambda P_1}(\alpha g)\,:\,g\in L_0^2(\pi)\} \cr
&=\frac{1}{\alpha} \sup\{2\pscal{f}{g}- \D_{\lambda P_1}(g)\,:\,g\in L_0^2(\pi)\}\,.
\end{align*}
The inequality is obtained using \eqref{eq4}.
\end{proof}

\section{No Peskun ordering in all generality}\label{sec:noPeskun}

In this section, we show using an example that it is not possible to obtain a Peskun ordering in all generality when considering the general definition of lifted samplers in \cite{andrieu2021peskun}. The Peskun ordering that we consider is the generalized version given in \autoref{Lemma:ordering}. Let us consider two $\pi$-reversible transition kernels $P_1$ and $P_2$ operating on the same space $(\Xset, \Xalg)$. We say that there exists a (generalized) Peskun ordering between $P_1$ and $P_2$ if there exists some $\alpha \in (0, 1]$ such that, for any $x \in \Xset$ and $A \in \Xalg$,
\[
 P_2(x, A \setminus \{x\}) \geq \alpha P_1(x, A \setminus \{x\}).
\]
Conversely, we say that $P_1$ and $P_2$ do not admit a Peskun ordering if for all $\alpha \in (0, 1]$ (as small as we want), there exists $A \in \Xalg$ (that may depend on $\alpha$) such that $P_2(x, A \setminus \{x\}) < \alpha P_1(x, A \setminus \{x\})$ for some $x \in \Xset$.

We now present the example. Let $\Xset = \re$ and $\pi = \mathcal{N}(0, 1)$. Let $Q_{+1}(x, \cdot \,) = \mathcal{N}(x, \sigma^2)$ and $Q_{-1}(x, \cdot \,) = \mathcal{N}(x, 1 / \sigma^2)$ with $\sigma \in (0, 1)$. Let $Q = (1 / 2) Q_{-1} + (1 / 2) Q_{+1}$. Note that the example fits within the framework of \autoref{sec:samplers_dominating} with $P_{\MH}$ defined as in that section. For any $A \in \Xalg$ such that $0 \notin A$, we have that
\begin{align*}
 P_{\MH}(0, A) &= \int_A \left(\frac{1}{2} \, Q_{-1}(0, y) + \frac{1}{2} \, Q_{+1}(0, y)\right) \left(1 \wedge \exp\left(-\frac{y^2}{2}\right)\right) \d y \cr
 &= \int_A \left(\frac{1}{2} \, Q_{-1}(0, y) + \frac{1}{2} \, Q_{+1}(0, y)\right) \exp\left(-\frac{y^2}{2}\right) \d y,
\end{align*}
using the symmetry of $Q$.

We can define a lifted version similarly as in \autoref{sec:general_lifted}. For all $x \in \Xset$ and $\nu \in \{-1, +1\}$, let
\[
T_{\nu}(x, \d y) := Q_{\nu}(x, y) \left(1 \wedge \frac{\pi(y) \, Q_{-\nu}(y,x)}{\pi(x) \, Q_\nu(x,y)}\right) \d y.
\]
Note that the definition does not involve a split of $Q$ using directional neighbourhoods, and thus \autoref{ass:sets} cannot be verified. As in the proof of \autoref{prop:generalized_lifted}, we can show that $\mu_{+1}(A\times B) = \int_A \pi(\d x) \, T_{+1}(x, B)$ and $\mu_{-1}(A\times B) = \int_B \pi(\d x) \, T_{-1}(x, A)$ are equal, which implies that
\[
P_{\lifted}((x, \nu), \d(y, \nu')) := T_{\nu}(x, \d y) \, \delta_\nu(\d\nu') + \delta_{x}(\d y) \, \delta_{-\nu}(\d\nu') \, (1 - T_{\nu}(x, \Xset))
\]
leaves the distribution $\pi \otimes \mathcal{U}\{-1, +1\}$ invariant.

The reversible counterpart to the lifted sampler is defined as
\[
 P_{\rev}(x, \d y) := \frac{1}{2} \, T_{+1}(x, \d y) + \frac{1}{2} \, T_{-1}(x, \d y) +  \delta_{x}(\d y) \left(1 - \frac{1}{2} \, T_{+1}(x, \Xset) - \frac{1}{2} \, T_{-1}(x, \Xset)\right).
\]
This definition can be seen as being similar to that in \autoref{sec:general_auxiliary} and, like in that section, it can be proved that $P_{\rev}$ is $\pi$-reversible.

For the same $A \in \Xalg$ as above with $0 \notin A$, we have that
\begin{align*}
 P_{\rev}(0, A) &=  \int_A \frac{1}{2} \, Q_{+1}(0, y) \left(1 \wedge \exp\left(-\frac{y^2}{2}\right) \frac{Q_{-1}(y, 0)}{Q_{+1}(0, y)}\right) \d y \cr
 &\qquad + \int_A \frac{1}{2} \, Q_{-1}(0, y) \left(1 \wedge \exp\left(-\frac{y^2}{2}\right) \frac{Q_{+1}(y, 0)}{Q_{-1}(0, y)}\right) \d y.
\end{align*}
We have that
\[
 \exp\left(-\frac{y^2}{2}\right) \frac{Q_{-1}(y, 0)}{Q_{+1}(0, y)} = \sigma^2 \exp\left(\frac{y^2}{2} \left(\frac{1}{\sigma^2} - \sigma^2 - 1\right)\right),
\]
and
\[
 \exp\left(-\frac{y^2}{2}\right) \frac{Q_{+1}(y, 0)}{Q_{-1}(0, y)} = \frac{1}{\sigma^2} \exp\left(-\frac{y^2}{2} \left(\frac{1}{\sigma^2} + 1 - \sigma^2\right)\right).
\]
Note that
\[
 \frac{1}{\sigma^2} + 1 - \sigma^2 > 0,
\]
because $\sigma \in (0, 1)$. We can choose $\sigma$ small enough so that
\[
 \frac{1}{\sigma^2} - \sigma^2 - 1 > 0.
\]
We choose $\sigma$ accordingly. Therefore, we can choose $A = \{y \in \re: |y| > k\}$ with $k$ a large enough constant, so that, for all $y \in A$, we have that
\[
 \exp\left(-\frac{y^2}{2}\right) \frac{Q_{-1}(y, 0)}{Q_{+1}(0, y)} \geq 1 \quad \text{and} \quad \exp\left(-\frac{y^2}{2}\right) \frac{Q_{+1}(y, 0)}{Q_{-1}(0, y)} \leq 1.
\]
Note that $0 \notin A$. With such a set $A$, we have that
\[
 P_{\rev}(0, A) =  \int_A \frac{1}{2} \, Q_{+1}(0, y) \, \d y + \int_A \frac{1}{2} \, Q_{+1}(0, y) \exp\left(-\frac{y^2}{2}\right) \d y.
\]

Let $\alpha \in (0, 1]$. We have that
\begin{align*}
 &P_{\rev}(0, A) - \alpha P_{\MH}(0, A) \cr
 &\quad= \int_A \frac{1}{2} \, Q_{+1}(0, y) \, \d y + \int_A \frac{1}{2} \, Q_{+1}(0, y) \exp\left(-\frac{y^2}{2}\right) \d y \cr
 &\qquad - \alpha \int_A \left(\frac{1}{2} \, Q_{-1}(0, y) + \frac{1}{2} \, Q_{+1}(0, y)\right) \exp\left(-\frac{y^2}{2}\right) \d y \cr
 &\quad= \frac{1}{2} \int_A Q_{+1}(0, y) \left(1 + \exp\left(-\frac{y^2}{2}\right) - \alpha \exp\left(-\frac{y^2}{2}\right) \frac{Q_{-1}(0, y)}{Q_{+1}(0, y)} - \alpha \exp\left(-\frac{y^2}{2}\right)\right) \d y \cr
 &\quad\leq \frac{1}{2} \int_A Q_{+1}(0, y) \left(1 + \exp\left(-\frac{y^2}{2}\right) - \alpha \exp\left(-\frac{y^2}{2}\right) \frac{Q_{-1}(0, y)}{Q_{+1}(0, y)}\right) \d y.
\end{align*}

Note that
\[
 \exp\left(-\frac{y^2}{2}\right) \frac{Q_{-1}(0, y)}{Q_{+1}(0, y)} = \exp\left(-\frac{y^2}{2}\right) \frac{Q_{-1}(y, 0)}{Q_{+1}(0, y)} = \sigma^2 \exp\left(\frac{y^2}{2} \left(\frac{1}{\sigma^2} - \sigma^2 - 1\right)\right).
\]
Therefore, regardless the value of $\alpha$, we could have chosen $k$ such that
\[
 1 + \exp\left(-\frac{y^2}{2}\right) < \alpha \exp\left(-\frac{y^2}{2}\right) \frac{Q_{-1}(0, y)}{Q_{+1}(0, y)}.
\]
Consequently, $P_{\rev}(0, A) - \alpha P_{\MH}(0, A) < 0$, which concludes the example.

\section{Optimality of our bound on the asymptotic variances}\label{sec:optimality}

In this section, we show using an example that the bound $\vara(f, P_{\lifted}) \leq \vara(f, P_{\rev}) \leq 2\vara(f, P_{\MH}) + \var[f(X)]$ in \autoref{thm2} is essentially optimal, in the sense that it is essentially not possible to obtain a better bound without additional assumptions.

Let $\mathcal{Z}=\{(0,0),(0,1),(1,1)\}\subset\mathbb{Z}^2$ which we identify with $\Xset=\{1,2,3\}$. Let us define the neighbourhood structure:
$$
\neigh(1)=\{2\}\,,\quad \neigh(2)=\{1,3\}\,,\quad \neigh(3)=\{2\}\,.
$$
It can be seen as following from the North-South-East-West neighborhood structure on $\mathbb{Z}^2$ by considering the intersection with $\mathcal{Z}$. Let $\pi$ be a probability on $\mathcal{X}$ with $\pi\{1\}=\pi\{3\}=(1-\eps)/2$ and $\pi\{2\}=\eps$, for some $\eps\in(0, 1/2)$. Let $P_{\MH}$ be the MH kernel defined as in \autoref{sec:samplers_dominating} with $Q(x,\cdot\,)$ given by the uniform distribution on $\neigh(x)$. Let us define $\xi := \eps/(1-\eps)$. We have that
$$
P_{\MH}=\begin{pmatrix}
          1-\xi & \xi & 0 \\
          1/2 & 0 & 1/2 \\
          0 & \xi & 1-\xi
        \end{pmatrix}\,.
$$

Let us define the directional neighbourhoods:
\begin{align*}
 &\neigh_{-1}(1) = \varnothing\,,\quad \neigh_{+1}(1) =  \{2\}, \cr
 &\neigh_{-1}(2) = \{1,3\}\,,\quad \neigh_{+1}(2) = \varnothing, \cr
 &\neigh_{-1}(3) = \varnothing\,,\quad \neigh_{+1}(3) = \{2\}.
\end{align*}
They can be seen as following from a split of the North-South-East-West neighbourhoods into North-West and South-East neighbourhoods. It can be readily verified that the directional neighbourhoods satisfy \autoref{ass:sets}. Let $P_{\lifted}$ be the lifted kernel defined as in \autoref{sec:samplers_dominating}. We have that
$$
P_{\lifted}=\begin{pmatrix}
              0 & \xi & 0 & 1-\xi & 0& 0\\
              0 & 0 & 0 & 0 & 1 & 0 \\
              0 & \xi & 0 & 0 & 0 & 1-\xi \\
              1 & 0 & 0 & 0 & 0 & 0\\
              0 & 0 & 0 & 1/2 & 0 & 1/2\\
              0 & 0 & 1 & 0 & 0 & 0
            \end{pmatrix} \,,
$$
where the lines and columns correspond to $(1,+1), (2,+1),(3,+1),(1,-1),(2,-1),(3,-1)$ in this order.

An illustration of the Markov chains is presented in \autoref{fig_example}.
\begin{figure}[ht]
\centering
\includegraphics[width=0.50\textwidth, trim={5cm 3cm 5cm 3cm}]{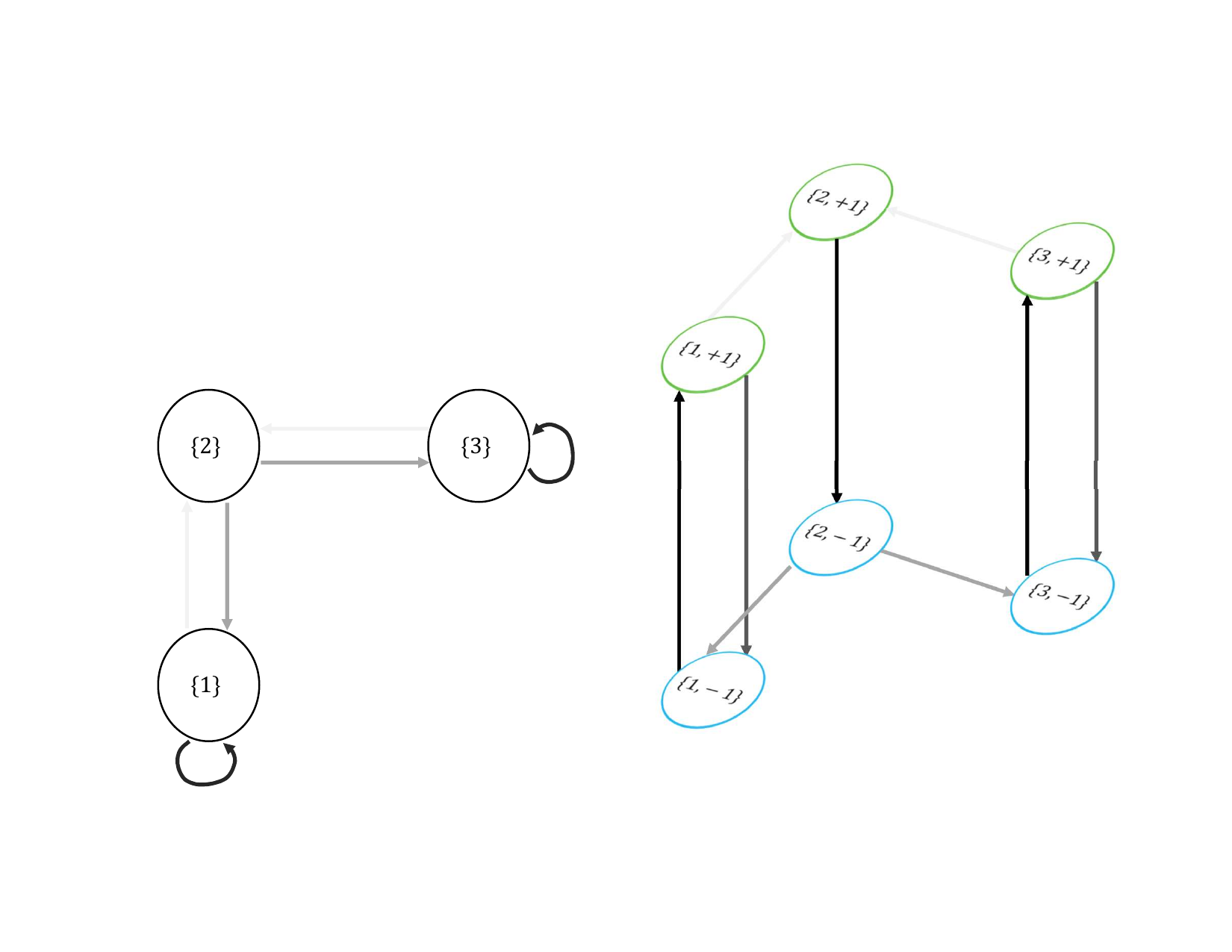}
  \vspace{-2mm}
\caption{Illustration of the MH chain (left) and its lifted counterpart (right); the colour of the arrows indicates the level of the transition probabilities (darker is higher).}\label{fig_example}
\end{figure}

\begin{Proposition}
In the context of this example, let $f$ be such that $f(1)=1=-f(3)$ and $f(2)=0$. Then, $\vara(P_{\lifted},f)=2\vara(P_{\MH},f)$ and $\vara(P_{\MH},f)$ is of order $1/\eps$.
\end{Proposition}

Given that the bound in \autoref{thm2} yields
\[
 2 = \frac{\vara(f, P_{\lifted})}{\vara(f, P_{\MH})} \leq 2 + \frac{\var[f(X)]}{\vara(f, P_{\MH})} = 2 + \frac{1 - \eps}{\vara(f, P_{\MH})},
\]
we know that it is essentially not possible to improve the bound as the term $(1 - \eps) / \vara(f, P_{\MH})$ on the right-hand side vanishes as $\epsilon \downarrow 0$. Before presenting the proof, we make two remarks about this example. Firstly, the asymptotic variance associated with the MH algorithm is of order $1/\eps$ because the states $\{1\}$ and $\{3\}$, that is the states that contribute to the ergodic average, are essentially isolated in the Markov chain (because $\pi\{2\} \ll \pi\{1\}=\pi\{3\}$ when $\eps$ is small). Secondly, we can view this example as an extreme one where the lifted sampler essentially never keeps its direction, which intuitively explains why the bound $\vara(f, P_{\lifted}) \leq 2\vara(f, P_{\MH}) + \var[f(X)]$ is essentially attained. Note that, if instead the North-South-East-West neighbourhoods are split into North-East and South-West neighbourhoods, we have that $\vara(f, P_{\lifted}) = \frac{1 - 3\eps}{2 - 3\eps} \, \vara(f, P_{\MH})$. In other words, by splitting the neighbourhoods otherwise, we obtain a lifted sampler which is better than the MH algorithm (in contrast to being worst when the neighbourhoods were split into North-West and South-East neighbourhoods). This highlights that different choices of directions may lead to lifted samplers that are significantly different.

\begin{proof}
We want to calculate $\vara(f, P) = \lim_{T \rightarrow \infty} \var\left[\frac{1}{\sqrt{T}} \sum_{k=1}^T f(X_k)\right]$ for $P \in \{P_{\MH}, P_{\lifted}\}$. The Markov chain simulated by the MH algorithm proceeds with independent and identically distributed \textit{tours} as follows:
\begin{itemize}
\item starting from $\{2\}$, go to $\{i\}$ with probability $1/2$, $i\in\{1,3\}$,
\item stay at $\{i\}$ for a number of iterations having a geometric distribution with parameter $\xi$,
\item return to $\{2\}$ (the new tour starts at this iteration).
\end{itemize}
Therefore, each tour has a length of $1 + W$, where $W$ has a geometric distribution with parameter $\xi$, and over a tour, the sum of $f(X_k)$ is $0 + UW$, $U \sim \mathcal{U}\{-1, +1\}$.

Let us define $N_T$, the number of completed tours after $T$ iterations, and two independent sequences $\{U_n\}$ and $\{W_n\}$ of independent and identically distributed random variables with the same distributions as above. Considering, without loss of generality, that $N_T \geq 1$, that is $1 + W_1 + \ldots + 1 + W_{N_T} = N_T + \sum_{n = 1}^{N_T} W_i < T$ and that the Markov chain starts at $\{2\}$, we have that
\[
 \frac{\sum_{k=1}^T f(X_k)}{\sqrt{T}}\stackrel{\text{dist.}}{=}\frac{\sum_{n=1}^{N_T} U_n W_n + U_{N_T+1} (T - N_T+\sum_{n=1}^{N_T} W_n)}{\sqrt{T}}.
\]
Note that $N_T$ is a (random) function of $T$ which goes to infinity as $T \rightarrow\infty$ (with probability 1). We have that $\Prob(N_T \leq T / 2) =1$ given that each tour has a length of at least $2$. Let us define $\F_T := \{W_1, \ldots, W_{N_T +1}, N_T\}$.

 We have that
\begin{align*}
 \E\left[\frac{\sum_{n=1}^{N_T} U_n W_n + U_{N_T+1} (T - N_T+\sum_{n=1}^{N_T} W_n)}{\sqrt{T}}\right] &= \E\left[\frac{\sum_{n=1}^{N_T} W_n \E\left[U_n \mid \F_T \right]}{\sqrt{T}}\right] \cr
  &\quad+ \E\left[\frac{(T - N_T+\sum_{n=1}^{N_T} W_n) \E\left[U_{N_T+1} \mid \F_T \right] }{\sqrt{T}}\right] \cr
  &= 0,
\end{align*}
using that $\E\left[U_n \mid \F_T \right] = 0, n = 1, \ldots, N_T+1$. Therefore, $\var\left[\frac{1}{\sqrt{T}} \sum_{k=1}^T f(X_k)\right] = \E\left[\left(\frac{1}{\sqrt{T}} \sum_{k=1}^T f(X_k)\right)^2\right]$.

The way we calculate $\vara(f, P_{\MH})$ is by establishing that $(1 / \sqrt{T}) \sum_{k=1}^T f(X_k)$ has a limiting distribution which is normal with a mean of 0 and a certain variance. We will also prove that $\left(\frac{1}{\sqrt{T}} \sum_{k=1}^T f(X_k)\right)^2$ is uniformly integrable, which will allow to establish that the variance of the limiting normal distribution is equal to $\vara(f, P_{\MH})$ \citep[Theorem 6.2]{dasgupta2008asymptotic}.

We have that
\begin{align*}
 \frac{\sum_{k=1}^T f(X_k)}{\sqrt{T}}&\stackrel{\text{dist.}}{=}\frac{\sum_{n=1}^{N_T} U_n W_n}{\sqrt{T}} + \frac{U_{N_T+1} (T - N_T+\sum_{n=1}^{N_T} W_n)}{\sqrt{T}} \cr
&= \frac{\sum_{n=1}^{N_T} \frac{U_n W_n}{\sqrt{1 + \E[W]}}}{\sqrt{\frac{T}{1 + \E[W]}}} + \frac{U_{N_T+1} (T - N_T+\sum_{n=1}^{N_T} W_n)}{\sqrt{T}} \cr
&\stackrel{\text{dist.}}{\rightarrow} \norm(0, \var[U W] / (1 + \E[W])),
\end{align*}
by combining Slutsky's theorem and Exercice 3.4.6\footnote{This exercice consists of proving the following result. Let $X_1, X_2, \ldots$ be independent and identically distributed random variables with $\E[X_i] = 0$ and $\E[X_i^2] =: \sigma^2 \in (0, \infty)$, and let $S_n = X_1 + \ldots + X_n$. Let $\{N_n\}$ be a sequence of non-negative integer-valued random variables and $\{a_n\}$ a sequence of non-negative integers with $a_n \rightarrow \infty$ and $N_n / a_n \rightarrow 1$ in probability. Then, $S_{N_n} / (\sigma \sqrt{a_n})$ converges in distribution towards a standard normal.} in \cite{durrett2019probability}. Indeed, $0 \leq T - N_T+\sum_{n=1}^{N_T} W_n < 1 + W_{N_T+1}$, which implies that
\[
 \frac{U_{N_T+1} (T - N_T+\sum_{n=1}^{N_T} W_n)}{\sqrt{T}} \rightarrow 0, \quad \text{with probability 1.}
\]
To apply the result in Exercice 3.4.6 in \cite{durrett2019probability}, which implies that
\[
 \frac{\sum_{n=1}^{N_T} \frac{U_n W_n}{\sqrt{1 + \E[W]}}}{\sqrt{\frac{T}{1 + \E[W]}}} \stackrel{\text{dist.}}{\rightarrow} \norm(0, \var[U W] / (1 + \E[W])),
\]
 we need to prove that $\frac{T / (1 + \E[W])}{N_T} \rightarrow 1$ in probability. With probability 1,
\[
 (1+W_1) + \ldots + (1 + W_{N_T}) \leq T \leq (1+W_1) + \ldots + (1 + W_{N_T + 1}),
\]
which is equivalent to
\[
  \frac{1}{N_T} \sum_{n = 1}^{N_T} (1 + W_n) \leq \frac{T}{N_T} \leq \frac{1}{N_T} \sum_{n = 1}^{N_T + 1} (1 + W_n),
\]
showing that, as $T \rightarrow \infty$,
\[
 \frac{T / (1 + \E[W])}{N_T} \rightarrow 1, \quad \text{with probability 1,}
\]
provided that $(1 / N_T) \sum_{n = 1}^{N_T} (1 + W_n) \rightarrow 1 + \E[W]$ with probability 1. We now prove this. By the strong law of large numbers and given that $N_T \rightarrow \infty$ as $T \rightarrow \infty$ with probability 1, on a set of realizations that has probability 1, there exists a positive constant $T_0$ such that for all $T \geq T_0$, $N_T$ is large enough and such that
\[
 \left|\frac{1}{N_T} \sum_{n = 1}^{N_T} (1 + W_n) - (1 + \E[W])\right| < \epsilon, \quad \epsilon > 0,
\]
yielding the result.

Given that $\var[U W]=\esp[U_1^2]\esp[W_1^2] = \esp[W_1^2] =(2-\xi)/\xi^2$, we obtain that
$$
\vara(f, P_{\MH})=\frac{1}{1+1/\xi}\frac{2-\xi}{\xi^2}=\frac{2-\xi}{\xi(\xi+1)}=\frac{(2 - 3\eps)(1 - \eps)}{\eps}\,,
$$
provided that $\left(\frac{1}{\sqrt{T}} \sum_{k=1}^T f(X_k)\right)^2$ is uniformly integrable, which we now prove. We proceed by proving that
\[
 \sup_T \E\left[\left(\frac{1}{\sqrt{T}} \sum_{k=1}^T f(X_k)\right)^4\right] < \infty.
\]

We have that
\begin{align*}
 \E\left[\left(\frac{1}{\sqrt{T}} \sum_{k=1}^T f(X_k)\right)^4\right] &= \frac{\E\left[\left(\sum_{n=1}^{N_T} U_n W_n + U_{N_T+1} (T - N_T+\sum_{n=1}^{N_T} W_n)\right)^4\right]}{T^2} \cr
 &= \frac{\E\left[\sum_{n=1}^{N_T} (U_n W_n)^4 + (U_{N_T+1} (T - N_T+\sum_{n=1}^{N_T} W_n))^4\right]}{T^2} \cr
 &\quad + \frac{3 \E\left[\sum_{n \neq m}^{N_T} (U_n W_n)^2 (U_m W_m)^2 + \sum_{n = 1}^{N_T} (U_n W_n)^2 (U_{N_T+1} (T - N_T+\sum_{n=1}^{N_T} W_n))^2\right]}{N^2} \cr
 &\leq \frac{\E\left[\sum_{n=1}^{N_T+1} W_n^4 \right]}{T^2} + \frac{3 \E\left[\sum_{n \neq m}^{N_T+1} W_n^2 W_m^2\right]}{T^2} \cr
 &\leq  \frac{\sum_{n=1}^{T/2+1}\E\left[W_n^4 \right]}{T^2} + \frac{3\sum_{n \neq m}^{T/2+1} \E\left[ W_n^2 W_m^2\right]}{T^2} \cr
 &= \frac{(T/2+1)\E[W^4] }{T^2} + \frac{3 (T / 2) (T / 2 + 1) (2-\xi)^2/\xi^4}{T^2},
\end{align*}
using in the second line an argument similar as above when we proved that $\E\left[\frac{1}{\sqrt{T}} \sum_{k=1}^T f(X_k)\right] = 0$, in the third line the definition of the random variables $U_n$ and that $T - N_T+\sum_{n=1}^{N_T} W_n < 1 + W_{N_T+1}$,  and in the fourth line that $N_T \leq T / 2$ with probability 1.  We thus have that $\sup_T \E\left[\left(\frac{1}{\sqrt{T}} \sum_{k=1}^T f(X_k)\right)^4\right] < \infty$ and $\left(\frac{1}{\sqrt{T}} \sum_{k=1}^T f(X_k)\right)^2$ is uniformly integrable.

We now turn to the lifted sampler. The Markov chain simulated by the lifted sampler also proceeds with independent and identically distributed tours as follows:
$$
(2,+1),\,(2,-1),\, \underline{(i,-1),\, (i,+1)},\,\underline{(i,-1),\, (i,+1)},\,\cdots,\,\underline{(i,-1),\, (i,+1)},(2,+1), \ldots
$$
where each block $\{(i,-1),\, (i,+1)\}$ is repeated $W$ times, $W$ having a geometric distribution with parameter $\xi$. From $(2, +1)$, the chain moves to $(2, -1)$ (with probability 1). From there, the chain moves to $(i, -1)$ with probability $1/2$, $i\in\{1,3\}$,  and next to $(i, +1)$ (with probability 1). An exit from $\{i\}$ can only occur while at $(i, +1)$ and with the same probability as in the reversible Markov chain. In summary, in this case, the tours are exactly twice as long as with the MH algorithm, that is their length is $2(1 + W)$, and, over a tour, the sum of $f(\tilde{X}_k)$ is $2UW$, denoting by $\tilde{X}_k$ the iterates of the $\Xset$ component in the Markov chain simulated by the lifted sampler (to highlight a difference with the reversible case). Let $T' := 2T$ be the number of iterations in the lifted sampler (that we write as a function of the number of iterations in the MH algorithm). We thus have that
\[
 \lim_{T' \rightarrow \infty}\frac{\sum_{k=1}^{T'} f(\tilde{X}_k)}{\sqrt{T'}}\stackrel{\text{dist.}}{=}\lim_{T \rightarrow \infty}\frac{2\sum_{n=1}^{N_T} U_n W_n}{\sqrt{2T}}\stackrel{\text{dist.}}{=}\sqrt{2}\lim_{T \rightarrow \infty}\frac{\sum_{k=1}^T f(X_k)}{\sqrt{T}},
\]
which allows to conclude that $\vara(P_{\lifted},f)=2\vara(P_{\MH},f)$. Indeed, given that the random variable $\sum_{k=1}^{T'} f(\tilde{X}_k) / \sqrt{T'}$ produced by the lifted sampler is asymptotically equivalent to $\sqrt{2}$ times that produced by the MH algorithm, that is $\sum_{k=1}^T f(X_k) /\sqrt{T}$, all the arguments presented above carry over to the lifted case.
\end{proof}

\end{document}